\title{Actor-Critic Algorithms for Learning Nash Equilibria in $\bm{N}$-player General-Sum Games}
\author[1]{Prasad H.L. \thanks{prasad@astrome.co}}
\author[2]{Prashanth L A \thanks{prashla@umd.edu}}
\author[3]{Shalabh Bhatnagar \thanks{shalabh@csa.iisc.ernet.in}}
\affil[1]{\small Astrome Technologies Pvt Ltd,
  Bangalore, INDIA}
\affil[2]{\small Institute for Systems Research, 
  University of Maryland, College Park, US}
\affil[3]{\small Department of Computer Science and Automation,
Indian Institute of Science, Bangalore, INDIA}
\date{}
\begin{document}

\maketitle

\begin{abstract}
We consider the problem of finding stationary Nash equilibria (NE) in a finite discounted general-sum stochastic game. 
We first generalize a non-linear optimization problem from \cite{filar-vrieze} to a $N$-player setting and break down this problem into simpler sub-problems that ensure there is no Bellman error for a given state and an agent. 
We then provide a characterization of solution points of these sub-problems that correspond to Nash equilibria of the underlying game and for this purpose, we derive a set of necessary and sufficient SG-SP (Stochastic Game - Sub-Problem) conditions. Using these conditions,
we develop two \textit{actor-critic} algorithms: OFF-SGSP (model-based) and ON-SGSP (model-free). 
Both algorithms use a \textit{critic} that estimates the value function for a fixed policy and an \textit{actor} that performs descent in the policy space using a descent direction that avoids local minima.  We establish that both algorithms converge, in self-play, to the equilibria of a certain ordinary differential equation (ODE), whose stable limit points coincide with stationary NE  of the underlying general-sum stochastic game. On a single state non-generic game (see \cite{hart2005stochastic}) as well as on a synthetic two-player game setup with $810,000$ states, we establish that ON-SGSP consistently outperforms NashQ \citep{huwellman2003} and FFQ \citep{littman} algorithms.   
\end{abstract}

\keywords
{General-sum discounted stochastic games, Nash equilibrium, multi-agent reinforcement learning, multi-timescale stochastic approximation.
}

\section{Introduction}
Traditional game theoretic developments have been for single-shot games where all agents participate and perform their preferred actions, receive rewards and the game is over. However, several emerging applications have the concept of multiple stages of action or most often the concept of {\em time} in them. One intermediate class of games to handle multiple stages of decision is called {\em repeated games}. However, repeated games do not provide for characterizing the influence of decisions made in one stage to future stages. Markov chains or Markov processes are a popular and widely applicable class of random processes which are used for modeling practical systems. Markov chains allow system designers to model {\em states} of a given system and then model the time behavior of the system by {\em connecting} those states via suitable probabilities for transition from {\em current state} to a {\em next state}. A popular extension to Markov chains which is used for modeling optimal control scenarios is the 
class of Markov Decision Processes ({\em MDPs}). Here, in a given state, an {\em action} is allowed to be selected from a set of available actions. Based on the choice of the action, a suitable {\em reward/cost} is obtained/incurred. Also, the action selected influences the probabilities of transition from one state to another. \cite{shapley} merged these concepts of MDPs (or basically Markov behavior) and games to come up with a new class of games called as {\em stochastic games}. In a stochastic game, all participating agents select their actions, each of which influence the rewards received by all agents as well as the transition probabilities. Since the inception of stochastic games by \cite{shapley}, they have been an important class of models for multi-agent systems. A comprehensive treatment of stochastic games under various pay-off criteria is given by \cite{filar-vrieze}. Many problems like fishery games, advertisement games and several oligopolistic situations can be modelled as stochastic games~\
citep{
breton1986computation,filar-vrieze,olley1992dynamics,pakes1998empirical,pakes2001stochastic,bergemann1996learning}. 

We consider a finite {\em stochastic game} (also referred to as Markov game (cf. \cite{littman}) setting that evolves over discrete time instants. As illustrated in Fig. \ref{fig:rl-model}, at each stage and in any given state $x \in \S$, all agents act simultaneously with an action vector $a \in \A(x)$ resulting in a transition to the next state $y \in \S$ according to the transition probability $p(y|x,a)$ as well as a reward vector $\r(x,a)$. No agent gets to know what the other agents' actions are before selecting its own action and the reward $r^i(x,a)$ obtained by any agent $i$ in each stage depends on both system state $x$ (common to all agents) and the aggregate action $a$ (which includes other agents' actions). Each individual agent's {\em sole} objective is maximization of his/her {\em value function}, i.e., the expected discounted sum of rewards. However, the transition dynamics as well as the rewards depend on the actions of 
all agents and hence, the dynamics of the game is coupled and not independent. We assume that $\r(x, a)$ and the action vector $a$ are made known to all agents after every agent $i$ has picked his/her action $a^i$ - this is the canonical {\em model-free} setting\footnote{While the ON-SGSP algorithm that we propose is for this setting, we also propose another algorithm - OFF-SGSP - that is model based.}. However, we do not assume that each agent knows the other agents' policies, i.e., the distribution from which the actions are picked.

\begin{figure}
\centering
\tikzstyle{block} = [draw, rectangle,  line width=0.2mm,join=round,minimum height=1.5cm, minimum width=5cm]
\tikzstyle{smallblock} = [draw, rectangle, minimum height=2em, minimum width=2em]
\scalebox{0.9}{
\begin{tikzpicture}
\node [smallblock,fill=green!20, rectangle] at (0,0) (node1) {$\bm{1}$};
\node [smallblock,fill=green!20, rectangle, right=1cm of node1] (node2) {$\bm{2}$};
\node [right=1cm of node2] (dots) {$\dots$};
\node [smallblock,fill=green!20, rectangle, right=1cm of dots] (nodeN) {$\bm{N}$};
\node at (2.5,-1) (on-sgsp) {{\large\bf Agents}};
\node [block, fill=red!20] at (2.5,4) (env)  {{\large\bf Environment}};

\draw [color=gray,thick](-1.1,-1.3) rectangle (6.1,1.3);
\begin{pgfonlayer}{background}
    \filldraw [line width=4mm,join=round,brown!20]
(-1,-1.2) rectangle  (6,1.2);
  \end{pgfonlayer}

\draw[>=latex',->,thick] (3.2,1.3) to [out=45,in=-45] node [right] {\textbf{Action} $\mathbf{a=\left < a^1, a^2, \dots, a^N \right >}$} (env);
\draw[>=latex',->,thick] (env) to [out=-135,in=135] node [left] {\makecell{\textbf{Reward }$\mathbf{r=\left < r^1, r^2, \dots, r^N \right >}$,\\ \textbf{next state }$\bm{y}$}} (1.7,1.3);

\end{tikzpicture}}
\caption{Multi-agent RL setting}
\label{fig:rl-model}
\end{figure}
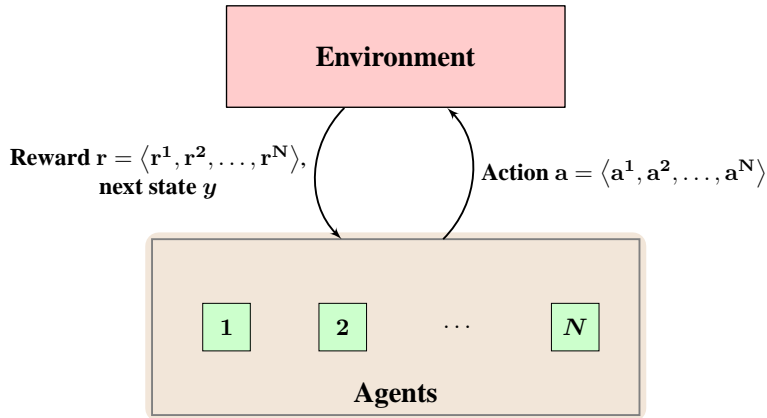

The central concept of stability in a stochastic game is that of a {\em Nash equilibrium}. 
At a Nash equilibrium point (with a corresponding Nash strategy), each agent plays a {\em best-response} strategy assuming all the other agents play their equilibrium strategies (see Definition \ref{def:nash} for a precise statement). 
This notion of equilibrium makes perfect sense in a game setting where agents do not have any incentive to unilaterally deviate from their Nash strategies. 


\cite{breton1991algorithms,filar-vrieze} establish that finding the stationary NE of a two-player discounted stochastic game is equivalent to solving an optimization problem with a non-linear objective function and linear constraints. 
We extend this formulation to general $N$-player stochastic games and observe that this generalization causes the constraints to be non-linear as well.   
Previous approaches to solving the optimization problem have not been able to guarantee convergence to global minima, even for the case of $N=2$. 
 In this light, our contribution is significant as we develop an algorithm to find a global minimum for any $N\ge 2$ via the following steps:\footnote{A preliminary version of this paper, without the proofs, was published in \textit{AAMAS 2015} - see \cite{prasad2015two}. In comparison to the conference version, this paper includes a more detailed problem formulation, formal proofs of convergence of the two proposed algorithms, some additional experiments and a revised presentation.} 
\begin{description}
\item[\textbf{Step 1 (Sub-problems):}] We break down the main optimization problem into several sub-problems. Each sub-problem can be seen as ensuring no Bellman error, for a particular state $x \in \S$ and agent $i \in \{1,\ldots,N\}$, where $\S$ is the state space and $N$ is the number of agents of the stochastic game considered.
\item[\textbf{Step 2 (Solution points):}] We provide a characterization of solution points that correspond to Nash equilibria of the underlying game. As a result, we also derive a set of necessary and sufficient conditions, henceforth referred to as SG-SP (Stochastic Game - Sub-Problem) conditions.
\item[\textbf{Step 3 (Descent direction):}] Using SG-SP conditions, we derive a descent direction that avoids local minima. This is not a steepest descent direction, but a carefully chosen descent direction specific to this optimization problem, which ensures convergence only to points of global minima that correspond to SG-SP points (and hence Nash strategies). 
\item[\textbf{Step 4 (Actor-critic algorithms):}] We propose algorithms that incorporate the aforementioned descent direction to ensure convergence to stationary NE of the underlying game. 
\end{description}

The algorithms that we propose are as follows:
\begin{description}
\item[OFF-SGSP.] This is an offline, centralized and model-based scheme, i.e., it assumes that the transition structure of the underlying game is known.
\item[ON-SGSP.] This is an online, model-free scheme that is decentralized, i.e., learning is localized to each agent with one instance of ON-SGSP running on each agent. ON-SGSP only requires that other agents' actions and rewards are observed and not their policies, i.e., maps from states to actions.
\end{description}
We make the assumption that for all strategies, the resulting Markov chain is irreducible and positive recurrent. 
This assumption is common to the analysis of previous multi-agent RL algorithms as well (cf. \cite{huwellman,littman})\footnote{ For the case of stochastic games where
there are multiple communicating classes of states or even transient states, a possible work-around is to re-start the game periodically in a random state.}.
To the best of our knowledge, ON-SGSP is the first model-free online algorithm that converges in self-play to stationary NE for any finite discounted general-sum stochastic game where the aforementioned assumption holds.

As suggested by \cite{bowling2001rational}, two desirable properties of any multi-agent learning algorithm are as follows:
\begin{enumerate}[(a)] 
\item {\em Rationality\footnote{The term {\em rationality} is not to be confused with its common interpretation in economics parlance.}:} Learn to play optimally when other agents follow stationary strategies; and
 \item {\em Self-play convergence:} Converge to a Nash equilibrium assuming all agents are using the same learning algorithm.
 \end{enumerate}
  
Our ON-SGSP algorithm can be seen to meet both the properties mentioned above.
However, unlike the repeated game setting of \citep{bowling2001rational,conitzer2007awesome}, ON-SGSP solves  discounted general-sum stochastic games and possesses theoretical convergence guarantees as well.

 \tikzset{
    >=stealth',
    punkt/.style={
           rectangle,
           rounded corners,
           draw=black, very thick,
           text width=10em,
           minimum height=3.5em,
           text centered},
    pil/.style={
           ->,
           thick,
           shorten <=2pt,
           shorten >=2pt,}
}
\begin{figure}
 \centering
\scalebox{0.85}{\begin{tikzpicture}[node distance=1cm, auto,]
 \node (dummy) {};
 \node[punkt,fill=blue!20,above=1cm of dummy] (peval) {\bf Critic \\ (Policy Evaluation)};
 \node[punkt, fill=red!20,inner sep=5pt,below=1.5cm of dummy]
 (pimp) {\bf Actor\\ (Policy Improvement)};
 \node[right=3cm of dummy] (t) {\bf Value $\bm{v^{\pi^i}}$}
   edge[pil,<-,bend right=45] (peval.east) 
   edge[pil, bend left=45] (pimp.east); 
 \node[left=3cm of dummy] (g) {\bf Policy $\bm{\pi^i}$}
   edge[pil, bend left=45] (peval.west)
   edge[pil,<-, bend right=45] (pimp.west);
\end{tikzpicture}}
\caption{Operational flow of our algorithms}
\label{fig:flow}
\end{figure}
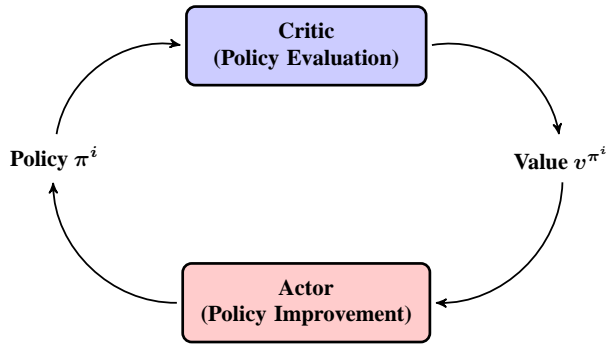

As illustrated in Fig. \ref{fig:flow}, the basic idea in both OFF-SGSP and ON-SGSP is to employ two recursions, referred to as the actor and the critic, respectively. Conceptually, these can be seen as two nested loops that operate as follows:
\begin{description}
\item[Critic recursion:] This performs policy evaluation, i.e., estimates the value function for a fixed policy. In the model-based setting (i.e., of OFF-SGSP), this corresponds to the well-known dynamic programming procedure - value iteration. On the other hand, in the model-free setting (i.e., of ON-SGSP), the critic is based  on temporal difference (TD) learning \citep{sutton}. 
\item[Actor recursion:] This incrementally updates the policy using gradient descent. For this purpose, it uses a descent direction that ensures convergence to a global minimum (and hence NE) of the optimization problem we mentioned earlier.
\end{description}
Using multi-timescale stochastic approximation (see Chapter 6 of \cite{borkar2008stochastic}) both the recursions above are run simultaneously, albeit with varying step-size parameters and this mimics the nested two-loop procedure outlined above.

The formal proof of convergence requires considerable sophistication, as we base our approach on the ordinary differential equations (ODE) method for stochastic approximation \citep{borkar2008stochastic}. 
While a few previous papers in the literature have adopted this approach (cf. \cite{akchurina},\cite{weibull}), their results do not usually start with an algorithm that is shown to track an ODE. Instead, an ODE is reached first via analysis and an approximate method is used to solve this ODE. On the other hand, we adopt the former approach and show that both OFF-SGSP and ON-SGSP converge using the following steps:
 \begin{enumerate}[\bfseries 1.]
\item Using two-timescale stochastic approximation, we show that the value and policy updates on the fast and slow timescales, converge respectively to the limit points of a system of ODEs.
\item Next, we provide a simplified representation for the limiting set of the policy ODE and use this to establish that the asymptotically stable limit points of the policy ODE correspond to SG-SP (and hence Nash) points. 
\end{enumerate}
While the first step above uses a well-known result (Kushner-Clark lemma) for analysing stochastic approximation recursions, the techniques used in the second step above are quite different from those used previously. The latter step is crucial in establishing overall convergence, as the strategy $\pi$ corresponding to each stable limit gives a stationary NE of the underlying general-sum discounted stochastic game.   
 
We demonstrate the practicality of our algorithms on two synthetic two-player setups. The first is a single state non-generic game adopted from \cite{hart2005stochastic} that contains two NEs (one pure, the other mixed), while the second is a \textit{stick-together game} with $810,000$ states (to the best of our knowledge, previous works on general-sum games have never considered state spaces of this size). On the first setup, we show that ON-SGSP always converges to NE, while NashQ \citep{huwellman2003} and FFQ \citep{littman} do not in a significant number of experimental runs. On the second setup, we show that ON-SGSP outperforms NashQ and FFQ, while exhibiting a relatively quick convergence rate - requiring approximately $21$ iterations per state. 

\paragraph{Map of the paper.}
 Section \ref{sec:lit-review} reviews relevant previous works in the literature. Section \ref{sect:prelims} formalizes a general-sum stochastic game and sets the notation used throughout the paper. Section \ref{sect:problem} formulates a general non-linear optimization problem for solving stochastic games and also form sub-problems whose solutions correspond to Nash strategies. Section \ref{sec:sgsp} presents necessary and sufficient SG-SP conditions for Nash equilibria of general-sum stochastic games. Section \ref{sect:off-sgsp} presents the offline algorithm OFF-SGSP, while Section \ref{sect:on-sgsp} provides its online counterpart ON-SGSP. Section \ref{sect:convergence} sketches the proof of convergence for both the algorithms. Simulation results for the single state non-generic game and the stick-together game settings are presented in Section \ref{sect:simulation}. Finally, concluding remarks are provided in Section \ref{sect:conclusions}.

\section{Related Work}
\label{sec:lit-review}
Various approaches have been proposed in the literature for computing Nash equilibrium of general-sum discounted stochastic games and we discuss some of them below.

\textbf{Multi-agent RL.}
Littman \citep{littman1994markov} proposed a
minimax Q-learning algorithm for two-player zero-sum stochastic games. Hu and Wellman \citep{huwellman,huwellman2003} extended the Q-learning approach to general-sum games, but their algorithms do not provide meaningful convergence guarantees.  Friend-or-foe Q-learning (FFQ) \citep{littman} is a further improvement based on Q-learning and with guaranteed convergence. However, FFQ converges to Nash equilibria only in restricted settings (See conditions A and B in \citep{littman}).
Moreover, the approaches in \citep{huwellman,huwellman2003} require computation of Nash equilibria of a bimatrix game, while the approach of  \cite{littman} requires solving a linear program, in each round of their algorithms and this is a computationally expensive operation. In contrast, ON-SGSP does not require any such equilibria computations. \cite{zinkevich2006cyclic} show that the traditional Q-learning based approaches are not sufficient to compute Nash equilibria in general-sum games\footnote{We avoid this impossibility result by searching for both values and policies instead of just values, in our proposed algorithms.}. 

\textbf{Policy hill climbing.}
This is a category of previous works that is closely related to ON-SGSP algorithm that we propose. Important references here include \cite{bowling2001rational}, \cite{bowling2005convergence}, \cite{conitzer2007awesome} as well as \cite{zhang2010multi}. All these algorithms are gradient-based, model-free and are proven to converge to NE for stationary opponents in self-play. However, these convergence guarantees are for repeated games only, i.e., the setting is a single state stochastic game, where the objective is to learn the Nash strategy for a stage-game (see Definition $1$ in \citep{conitzer2007awesome}) that is repeatedly played. 
On the other hand, we consider general-sum stochastic games where the objective is to learn the best-response strategy against stationary opponents in order to maximize the value function (which is an infinite horizon discounted sum). Further, we work with a more general state space that is not restricted to be a singleton. 

\textbf{Homotopy.}
\cite{herings2004stationary} propose an algorithm where a homotopic path between equilibrium points of $N$ independent MDPs and the $N$ player stochastic game in question, is traced numerically. This, in turn, gives a Nash equilibrium point of the stochastic game of interest. This approach is extended by \cite{herings2006homotopy} and \cite{borkovsky2010user}.
OFF-SGSP shares similarities with the aforementioned homotopic algorithms in the sense that both are
\begin{enumerate}[\bfseries(i)]
\item offline and model-based as they assume complete information (esp. the transition dynamics) about the game; and 
\item the computational complexity for each iteration of both algorithms grows exponentially with the number of agents $N$. 
\item Further, both algorithms are proven to converge to stationary NE, though their approach adopted is vastly different. OFF-SGSP is a gradient descent algorithm designed to converge to the global minimum of a nonlinear program, while the algorithm by \cite{herings2004stationary} involves a tracing procedure to find an equilibrium point.
\end{enumerate}

\textbf{Linear programming.}
\cite{mac2009solving} solve stochastic games by formulating intermediate optimization problems, called Multi-Objective Linear Programs (MOLPs). However, the solution concept there is of {\em correlated equilibria} and Nash points are a strict subset of this class (and hence are harder to find). 
Also, the complexity of their algorithm scales exponentially with the problem size. 

Both homotopy and linear programming methods proposed by \cite{mac2009solving} and \cite{herings2004stationary} are tractable only for small-sized problems. 
The computational complexity of these algorithms may render them infeasible on large state games. In contrast, ON-SGSP is a model-free algorithm with a per-iteration complexity that is linear in $N$, allowing for practical implementations on large state game settings (see Section \ref{sect:simulation} for one such example with a state space cardinality of $810,000$). 
We however mention that per-iteration complexity alone is not sufficient to quantify the performance of an algorithm - see Remark \ref{remark:complexity}. 

\textbf{Rational learning.} 
A popular algorithm with guaranteed convergence to Nash equilibria in general-sum stochastic games is rational learning, proposed by \cite{kalai1993rational}. 
In their algorithm, each agent $i$ maintains a prior on what he believes to be other agents' strategy and updates it in a Bayesian manner. Combining this with certain assumptions of absolute continuity and grain of truth, the algorithm there is shown to converge to NE. 
ON-SGSP operates in a similar setting as that in \citep{kalai1993rational}, except that we do not assume the knowledge of reward functions. ON-SGSP is a model-free online algorithm and unlike \citep{kalai1993rational}, any agent's strategy in ON-SGSP does not depend upon Bayesian estimates of other agents' strategies and hence, their absolute continuity/grain of truth assumptions do not apply. 

\textbf{Evolutionary algorithm.}
\cite{akchurina} employs numerical methods in order to solve a system of ODEs and only establishes empirical convergence to NE for a group of randomly generated games.
In contrast, ON-SGSP is a model-free algorithm that is provably convergent to NE in self-play. 
We also note that the system of ODEs given by \cite{akchurina} (also found in \citep[pp. 189]{weibull}) turns out to be similar to a portion of the ODEs that are tracked by ON-SGSP. 



\begin{remark}
\cite{shoham2003multi} and \cite{shoham2007if} question if Nash equilibrium is a useful solution concept for general-sum games. However, if we are willing to concede that {\em prescriptive, equilibrium agenda} is indeed useful for stochastic games, then we believe our work is theoretically significant. Our ON-SGSP algorithm is a prescriptive, co-operative learning algorithm that observes a sample path  from the underlying game and converges to stationary NE. To the best of our knowledge, this is the first algorithm to do so, with proven convergence.
\end{remark}


\section{Formal Definitions}
\label{sect:prelims}
A stochastic game can be seen to be an extension of the single-agent Markov decision process. A discounted reward stochastic game is described by a tuple $<N, \S , \mathcal{A}, p, \r, \beta>$, where $N$ represents the number of agents, $\S$ denotes the state space and $\A = \cup_{x \in \S} \A(x)$ is the aggregate action space, where $\A(x)= \prod\limits_{i =1}^N \mathcal{A}^i(x)$ is the Cartesian product of action spaces $(\A^i(x))$ of individual agents when the state of the game is $x \in \S$. We assume both state and action spaces to be finite.
Let $p(y|x,a)$ denote the probability of going from the current state $x \in \S$ to $y \in \S$ when the vector of actions $a \in \mathcal{A}(x)$ (of the $N$ players) is chosen and let $\r(x,a) = \left < r^i(x, a) : i = 1, 2,\dots,N \right >$ denote the vector of reward functions of all agents when the state is $x \in \S$ and the vector of actions $a \in \mathcal{A}(x)$ is chosen.
Also, $0 < \beta < 1$ denotes the discount factor that controls the influence of the rewards obtained in the future on the agents' strategy (see Definition \ref{def:vf} below).

\paragraph{Notation.}
$\left < \cdots \right >$ represents a column vector and $\underline{1}_{m}$ is a vector of ones with $m$ elements. 
The various constituents of the stochastic game considered are denoted as follows:\footnote{We use the terms {\em policy} and {\em strategy} interchangeably in the paper.}\\
\begin{description}
\item[\textbf{Action:}] $a = \left < a^1, a^2, \dots, a^N \right > \in \A(x)$ is the aggregate action, $a^{-i} $ is the tuple of actions of all agents except $i$ and
$\A^{-i}(x) := \prod\limits_{j \ne i} \A^j(x)$ is the set of feasible actions in state $x \in \S$ of all agents except $i$.\\
\item[\textbf{Policy:}] $\pi^i(x, a^i)$ is the probability of picking action $a^i \in \A^i(x)$ by agent $i$ in state $x \in \S$,\\
$\pi^i(x) = \left < \pi^i(x,a^i) : a^i \in \mathcal{A}^i(x) \right >$ is the randomized policy vector in state $x \in \S$ for the agent $i$,
$\pi^i = \left < \pi^i(x) : x \in \S \right >$,
$\pi = \left < \pi^i : i = 1, 2, \dots, N \right >$ is the strategy-tuple of all agents and\\
$\pi^{-i} = \left < \pi^j : j = 1, 2, \dots, N, j \ne i \right >$ is the strategy-tuple of all agents except agent $i$. We focus only on {\em stationary strategies} in this paper, as suggested by Theorem \ref{thm:nash-exist}.  \\
\item[\textbf{Transition Probability:}]
Let $\pi(x,a) = \prod\limits_{i = 1}^{N} \pi^i(x, a^i)$ and $\pi^{-i}(x,a^{-i})=\prod\limits_{j = 1, j \ne i}^{N} \pi^j(x, a^j)$.
Then, the (Markovian) transition probability from state $x \in \S$ to state $y \in \S$ when each agent $i$ plays according to its randomized strategy $\pi^i$ can be written as:
$$p(y | x, \pi) = \sum\limits_{a \in \A(x)}  p(y | x, a) \pi(x, a).$$
\item[\textbf{Reward:}]
$r^i(x,a)$ is the single-stage reward obtained by agent $i$ in state $x \in \S$, where $a \in \A(x)$ is the aggregate action taken.
\end{description}


\begin{definition}\label{def:vf}\textbf{(Value function)}
The value function is the expected return for any agent $i \in \{1, 2, \dots, N\}$ and is defined as
\begin{align}
\label{eq:games:sg-value}
v^i_\pi(s_0) = \E \left [\sum\limits_{t} \beta^t \sum_{a \in \A(x)} \left ( r^i(s_t, a) \pi(s_t,a) \right ) \right ].
\end{align}
\end{definition}
Given the above notion of the value function, the goal of each agent is to find a strategy that achieves a {\em Nash equilibrium}. The latter is defined as follows:
\begin{definition}\textbf{(Nash Equilibrium)}
\label{def:nash}
A stationary Markov strategy $\pi^* = \left < \pi^{1*}, \pi^{2*}, \dots, \pi^{N*} \right >$ is said to be Nash if
\[v^i_{\pi^*}(s) \ge v^i_{\left <\pi^i, \pi^{-i*} \right>}(s), \forall \pi^i, \forall i, \forall s \in \S.\] 
The corresponding equilibrium of the game is said to be Nash equilibrium.
\end{definition}
Since we consider a discounted stochastic game with a finite state space, we have the following well-known result that ensures the existence of stationary equilibrium:
  \begin{theorem}
  \label{thm:nash-exist}
    Any finite discounted stochastic game has an equilibrium in stationary strategies.
  \end{theorem}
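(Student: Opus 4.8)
The plan is to establish existence via a fixed-point argument on the best-response correspondence, invoking Kakutani's theorem (this is the classical Fink/Takahashi route, and the result is indeed well known). First I would set up the strategy space: for each agent $i$, the set of stationary randomized policies $\pi^i = \langle \pi^i(x) : x \in \S\rangle$ is a finite product of probability simplices, one simplex $\Delta(\A^i(x))$ per state, and is therefore a nonempty, compact, convex subset of a Euclidean space. The joint strategy space $\Pi = \prod_i \Pi^i$ inherits these properties.

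Second, I would show that for a fixed opponent profile $\pi^{-i}$, agent $i$ faces an ordinary discounted MDP. Fixing $\pi^{-i}$ collapses the kernel and reward into quantities $p(y\mid x, \langle a^i, \pi^{-i}\rangle)$ and $r^i(x, \langle a^i, \pi^{-i}\rangle)$ that depend only on agent $i$'s own action, and standard MDP theory (the Bellman optimality equation together with the $\beta$-contraction of the optimality operator) guarantees a stationary policy that is simultaneously optimal from every starting state $s \in \S$. This yields a nonempty best-response set $BR^i(\pi^{-i})$, consisting of the $\pi^i$ that maximize $v^i_{\langle \pi^i, \pi^{-i}\rangle}(s)$ simultaneously for all $s \in \S$. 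Defining $BR(\pi) = \prod_i BR^i(\pi^{-i})$ then gives a correspondence whose fixed points are, by Definition \ref{def:nash}, exactly the stationary Nash equilibria.

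The third step, and the one requiring the most care, is verifying the hypotheses of Kakutani's theorem, in particular the convexity of $BR^i(\pi^{-i})$ and the upper hemicontinuity of $BR$. Convexity does not follow from linearity, since $v^i$ depends on $\pi^i$ through both the reward and the transition terms; instead I would use the MDP characterization that a stationary policy is optimal if and only if, in each state, it is supported on the set of actions attaining the maximum in the Bellman operator applied to the unique optimal value. In each state this greedy set is a face of the simplex, hence convex, and the product over states is convex. For upper hemicontinuity I would observe that $v^i_\pi$ solves the linear system $(I - \beta P_\pi) v = r_\pi$, where $P_\pi$ and $r_\pi$ are polynomial in $\pi$ and $(I - \beta P_\pi)$ is invertible because $\beta < 1$; hence $v^i_\pi$ is continuous in $\pi$, and Berge's maximum theorem then delivers upper hemicontinuity with compact values.

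Finally, with $\Pi$ nonempty, compact, and convex and $BR$ upper hemicontinuous with nonempty, convex, compact values, Kakutani's fixed-point theorem yields $\pi^* \in BR(\pi^*)$. By construction each $\pi^{i*}$ is a best response to $\pi^{-i*}$ for every starting state, which is precisely the condition of Definition \ref{def:nash}. I expect the main obstacle to be the convexity argument, since it genuinely relies on the greedy-policy characterization of MDP optimality rather than on any naive linearity of the objective in $\pi^i$.
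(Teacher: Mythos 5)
The paper does not actually prove this theorem: it defers entirely to the cited classical works (Fink 1964, Takahashi 1964, Sobel 1971), so there is no in-paper argument to compare against line by line. Your proposal is a correct, self-contained rendering of the classical Kakutani route, and the places where you anticipate trouble are the right ones: the best-response set $BR^i(\pi^{-i})$ is indeed a product over states of faces of simplices (the greedy-action characterization of MDP optimality), which is what rescues convexity, since $v^i_{\langle \pi^i,\pi^{-i}\rangle}(s)$ is neither linear nor obviously quasi-concave in $\pi^i$. Two points of comparison and one small repair. The repair: Berge's theorem applies to a scalar program, whereas your $BR^i$ is defined by simultaneous maximization over all initial states; either replace the objective by $\sum_{s} v^i_{\langle \pi^i,\pi^{-i}\rangle}(s)$ (whose argmax coincides with the simultaneous argmax precisely because a single stationary policy attains the maximum in every state at once) or prove upper hemicontinuity directly from the closed-graph property, which follows from continuity of $\pi \mapsto v^i_\pi$ and compactness of the strategy space. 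The comparison: Fink's original argument runs the fixed point over the joint value--strategy space, forming at each state the auxiliary one-shot game with payoffs $r^i(x,a)+\beta\sum_{y} p(y|x,a)v^i(y)$; there the payoff is linear in agent $i$'s own mixed action, so convexity of the best-response set is immediate, at the cost of carrying the value vector as an extra fixed-point variable and then verifying, via the Bellman equation, that a fixed point of the auxiliary construction is an equilibrium of the dynamic game. Your direct best-response formulation trades that bookkeeping for the greedy-face characterization. Both are valid; yours is arguably the cleaner statement of why the theorem is true, and it is consistent with the results the paper cites.
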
  
  We shall refer to such stationary randomized strategies as {\em Nash strategies}. The reader is referred to \cite{fink1964equilibrium},  \cite{takahashi1964equilibrium}, \cite{sobel1971noncooperative} for a proof of Theorem \ref{thm:nash-exist}. 

\section{A Generalized Optimization Problem}
\label{sect:problem}

\paragraph{Basic idea.}
Using dynamic programming the Nash equilibrium condition in Definition \ref{def:nash} can be re-written as: $\forall x \in \S, \forall i = 1, 2, \dots, N,$
\begin{equation}
\label{eq:opt:basic-dp}
v^i_{\pi^*}(x) = \max\limits_{\pi^i(x) \in \Delta(\A^i(x))} \left \{ E_{\pi^i(x)} Q^i_{\pi^{-i*}}(x, a^i) \right \},
\end{equation}
where 
\[Q^i_{\pi^{-i}}(x, a^i) = E_{\pi^{-i}(x)} \left [r^i (x, a) + \beta \sum \limits_{y \in U(x)} p(y|x, a) v^i(y) \right ],\]
represents the marginal value associated with picking action $a^i \in \A^i(x)$, in state $x \in \S$ for agent $i$, while other agents act according to $\pi^{-i}$. Also, $\Delta(\A^i(x))$ denotes the set of all possible probability distributions over $\A^i(x)$.

The basic idea behind the optimization problem that we formulate below is to model the objective such that the value function is correct w.r.t. the agents' strategies, while adding a constraint to ensure that a feasible solution to the problem corresponds to Nash equilibrium.
 
\paragraph{Objective.}
A possible optimization objective for agent $i$ would be
\[f^i(\v^i, \pi) = \sum\limits_{x \in \S} \left (v^i(x) - E_{\pi^i} Q^i_{\pi^{-i}}(x, a^i) \right ).\]
The objective above has to be minimized over all policies $\pi^i \in \Delta(\A^i(x))$. But $Q^i_{\pi^{-i}}(x, a^i)$, by definition, is dependent on strategies of all other agents. So, an isolated minimization of $f^i(\v^i, \pi^i)$ would not be meaningful and hence, we consider the aggregate objective $f(\v, \pi) = \sum\limits_{i = 1}^N f^i(\v^i, \pi)$. This objective
 is minimized over all policies $\pi^i \in \Delta(\A^i(x))$ of all agents. Thus, we have an optimization problem with objective as $f(\v, \pi)$ along with the natural constraints ensuring that the policy vectors $\pi^i(x)$ remain as probabilities over all feasible actions $\A^i(x)$ in all states  $x \in \S$ and for agents $i=1,\ldots,N$. 

\paragraph{Constraints.} 
Notice that an optimization problem with the objective discussed above has only a set of simple constraints ensuring that $\pi$ remains a valid strategy. However, this is not sufficient to accurately represent Nash equilibria of the underlying game. 
Here, we look at a possible set of additional constraints which might make the optimization problem more useful. Note that the term being maximized in equation \eqref{eq:opt:basic-dp}, i.e., $E_{\pi^i} Q^i_{\pi^{-i}}(x, a^i)$, represents a convex combination of the values of $Q^i_{\pi^{-i}}(x, a^i)$ over all possible actions $a^i \in \A^i(x)$ in a given state $x \in \S$ for a given agent $i$. Thus, it is implicitly implied that \[Q^i_{\pi^{-i}}(x, a^i) \le v^i_{\pi^*}(x), \forall a^i \in \A^i(x), x \in \S, i = 1, 2, \dots, N.\]
Formally, the optimization problem for any $N\ge2$ is given below:
\begin{equation}
\label{eqn:OP}
\left .
\begin{array}{l}
\min \limits_{\v,\pi} f(\v, \pi) = \sum \limits_{i = 1}^{N} \sum\limits_{x \in \S} \left (v^i(x) - E_{\pi^i} Q^i_{\pi^{-i}}(x, a^i) \right ) \text{s.t.} \\[2ex]
\subequationitem\label{subeq:opt:basic-idea:first-cut-opt:pi-ge-0} \pi^i(x, a^i) \ge 0, \forall a^i \in \A^i(x), x \in \S, i = 1, 2, \dots, N,\\
\subequationitem\label{subeq:opt:basic-idea:first-cut-opt:pi-sum-1} \sum \limits_{i = 1}^N \pi^i(x, a^i) = 1, \forall x \in \S, i = 1, 2, \dots, N,\\
\subequationitem\label{subeq:opt:basic-idea:first-cut-opt:pi-ne-2} Q^i_{\pi^{-i}}(x, a^i) \le v^i(x), \forall a^i \in \A^i(x), x \in \S, i = 1, 2, \dots, N.
\end{array} \right \}
\end{equation}
In the above, \ref{subeq:opt:basic-idea:first-cut-opt:pi-ge-0}--\ref{subeq:opt:basic-idea:first-cut-opt:pi-sum-1} ensure that $\pi$ is a valid policy, while \ref{subeq:opt:basic-idea:first-cut-opt:pi-ne-2} is necessary for any valid policy to be a NE of the underlying game.

\begin{theorem}
\label{theorem:f-equal-zero-nash}
A feasible point $(v^*, \pi^*)$ of the optimization problem (\ref{eqn:OP}) provides a Nash equilibrium in stationary strategies to the corresponding general-sum discounted stochastic game if and only if $f(v^*, \pi^*) = 0$.
\end{theorem}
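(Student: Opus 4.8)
The plan is to exploit the fact that, on the feasible set, the objective decomposes into a sum of nonnegative per-state, per-agent Bellman residuals, so that $f=0$ is equivalent to every residual vanishing; I then read the resulting equalities through the two standard contraction operators attached to agent $i$'s induced Markov decision process (MDP) obtained by freezing the opponents at $\pi^{*,-i}$.

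First I would show that $f(v^*,\pi^*)\ge 0$ everywhere on the feasible set. Constraints \ref{subeq:opt:basic-idea:first-cut-opt:pi-ge-0}--\ref{subeq:opt:basic-idea:first-cut-opt:pi-sum-1} make each $\pi^{*i}(x)$ a probability distribution, so $E_{\pi^{*i}}Q^i_{\pi^{*,-i}}(x,a^i)$ is a convex combination of the numbers $Q^i_{\pi^{*,-i}}(x,a^i)$, each of which is bounded above by $v^{*i}(x)$ by constraint \ref{subeq:opt:basic-idea:first-cut-opt:pi-ne-2}. Hence every summand $v^{*i}(x)-E_{\pi^{*i}}Q^i_{\pi^{*,-i}}(x,a^i)$ of $f$ is nonnegative, and $f=0$ forces each of them to be zero.

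For the direction $f=0\Rightarrow$ Nash, I would read the vanishing of the residuals in two complementary ways. Introducing the per-agent evaluation operator $(T^i_{\pi^*}w)(x)=\sum_{a}\pi^*(x,a)\bigl[r^i(x,a)+\beta\sum_y p(y|x,a)w(y)\bigr]$, the equality $v^{*i}(x)=E_{\pi^{*i}}Q^i_{\pi^{*,-i}}(x,a^i)$ is exactly the fixed-point relation $v^{*i}=T^i_{\pi^*}v^{*i}$; since $\beta\in(0,1)$ makes $T^i_{\pi^*}$ a contraction with unique fixed point $v^i_{\pi^*}$, I conclude $v^{*i}=v^i_{\pi^*}$, i.e. the decision variable coincides with the true value of $\pi^*$. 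Separately, a convex combination can equal its upper bound $v^{*i}(x)$ only if every action in the support of $\pi^{*i}(x)$ attains it, so combining with \ref{subeq:opt:basic-idea:first-cut-opt:pi-ne-2} gives $\max_{a^i}Q^i_{\pi^{*,-i}}(x,a^i)=v^{*i}(x)$. Substituting $v^{*i}=v^i_{\pi^*}$, this is precisely the Bellman optimality equation of agent $i$'s MDP (opponents frozen at $\pi^{*,-i}$) evaluated at $v^i_{\pi^*}$; hence $v^i_{\pi^*}$ is the optimal value of that MDP and $\pi^{*i}$ attains it, i.e. $\pi^{*i}$ is a best response. This yields $v^i_{\pi^*}(s)\ge v^i_{\langle\pi^i,\pi^{*,-i}\rangle}(s)$ for all $\pi^i$ and $s\in\S$, which is Definition \ref{def:nash}.

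For the converse I would read ``$(v^*,\pi^*)$ provides a Nash equilibrium'' as meaning that $\pi^*$ is a Nash strategy together with its value $v^{*i}=v^i_{\pi^*}$; the Bellman evaluation identity $v^i_{\pi^*}=T^i_{\pi^*}v^i_{\pi^*}$ then makes every residual vanish, so $f=0$, while feasibility of \ref{subeq:opt:basic-idea:first-cut-opt:pi-ne-2} follows from the best-response inequality $Q^i_{\pi^{*,-i}}(x,a^i)\le v^i_{\pi^*}(x)$. I expect the main obstacle to be the bookkeeping around the two distinct contraction/uniqueness arguments used in the forward direction --- pinning $v^*$ to $v^i_{\pi^*}$ via the evaluation operator and, separately, extracting best response via the optimality equation --- together with being explicit that the value must genuinely be the value of $\pi^*$: a Nash strategy paired with an inflated value such as $v^i_{\pi^*}+c\,\underline{1}_{|\S|}$ for $c>0$ remains feasible yet has $f>0$, so the converse is false unless $v^*$ is tied to $\pi^*$.
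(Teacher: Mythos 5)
Your proposal is correct and is essentially the standard Filar--Vrieze argument (Theorem 3.8.2) that the paper simply cites rather than reproducing: nonnegativity of each Bellman residual on the feasible set, identification of $v^{*i}$ with $v^i_{\pi^*}$ via the policy-evaluation contraction, and extraction of the best-response property from the Bellman optimality equation of agent $i$'s frozen-opponent MDP. Your closing caveat --- that the converse requires reading ``provides a Nash equilibrium'' as tying $v^*$ to the true value $v^i_{\pi^*}$, since an inflated value remains feasible with $f>0$ --- is a correct and worthwhile clarification of the statement.
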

\begin{proof}
See \citep[Theorem 3.8.2]{filar-vrieze} for a proof in the case of $N = 2$. The proof works in a similar manner for general $N$.
\end{proof}

A few remarks about the difficulties involved in solving \eqref{eqn:OP} are in order.
\begin{remark}\textbf{\textit{(Non-linearity)}}
For the case of $N=2$, the objective $f(\v,\pi)$ in \eqref{eqn:OP} is of order $3$, while the toughest constraint \ref{subeq:opt:basic-idea:first-cut-opt:pi-ne-2} is quadratic. This is apparent from the fact that the second term inside the summation in $f(\v,\pi)$ has the following variables multiplied: $\pi_1$ in the first expectation, $\pi_2$ inside the expectation in the definition of the $Q$-function and $v$ inside the second term of the expectation in $Q$-function. Along similar lines, the constraint \ref{subeq:opt:basic-idea:first-cut-opt:pi-ne-2} can be inferred to be quadratic. Thus, we have an optimization problem with a third-order objective and quadratic constraints, even for the case of $N=2$ and the constituent functions (both objective and constraints) can be easily seen to be non-linear. For a general $N>2$, the problem \eqref{eqn:OP} gets more complicated, as more policy variables $\pi_1,\ldots, \pi_N$ are thrown in. 
\end{remark}

\begin{remark}\textbf{\textit{(Beyond local minima)}}
 \cite{filar-vrieze} have formulated a non-linear optimization problem for the case of two-player {\em zero-sum} stochastic games. An associated result (Theorem 3.9.4, page 141, in \citep{filar-vrieze}) states that every local minimum of that optimization problem is also a global minimum. However, this result does not hold for a \textit{general-sum} game even in the two-player (and also $N\ge 3$) setting and hence, the requirement is for a global optimization scheme that solves \eqref{eqn:OP}. 
\end{remark}

\begin{remark}\textbf{\textit{(No steepest descent)}}
From the previous remark, it is apparent that a simple {\em steepest descent} scheme is not enough to solve \eqref{eqn:OP} even for the two-player setting. This is because there can be local minima of \eqref{eqn:OP} that do not correspond to the global minimum and steepest descent schemes guarantee convergence to local minima only. 
Note that steepest descent schemes were sufficient to solve for Nash equilibrium strategies in two-player zero-sum stochastic games, while this is not the case with general-sum $N$-player games, with $N\ge 2$. Sections 3 and 4 of \citep{hlpthesis} contain a detailed discussion on inadequacies of steepest descent for general-sum games.
\end{remark}

\begin{remark}\textbf{\textit{(No Newton method)}}
A natural question that arises is can one employ a Newton method in order to solve \eqref{eqn:OP} and the answer is in the negative. Observe that the Hessian of the objective function $f(\v,\pi)$ in \eqref{eqn:OP} has its diagonal elements to be zero and this makes it \textit{indefinite}. This make Newton methods infeasible as they require invertibility of the Hessian to work.
\end{remark}
 
We overcome the above difficulties by deriving a descent direction (that is not necessarily steepest) in order to solve \eqref{eqn:OP}. Before we present the descent direction, we break-down \eqref{eqn:OP} into simpler sub-problems. Subsequently we descibe Stochastic Game - Sub-Problem (SG-SP) conditions that are both necessary and sufficient for the problem \eqref{eqn:OP}.

\subsection*{Sub-problems for each state and agent}
We form sub-problems from the main optimization problem (\ref{eqn:OP}) along the lines of \cite{kktsp}, for each state $x \in \S$ and each agent $i \in \{1, 2, \dots, N\}$. The sub-problems are formed with the objective of ensuring that there is no Bellman error (see $g^i_{x, z}(\theta)$ below). 
For any $ x \in \S$, $z = 1, 2, \dots, |\mathcal{A}^i(x)|$ and $i \in \{ 1, 2, \dots, N\}$, let 
$\theta :=  \left <\v^i, \pi^{-i}(x)\right>$ denote the value-policy tuple and let
\begin{align}
g^i_{x, z}(\theta)  :=  Q^i_{\pi^{-i}}(x, a^i_z) - v^i(x) \label{eq:g}
\end{align}
denote the Bellman error. Further, let $p_z  :=  \pi^i(x, a^i_z)$ and $p  = \left <p_z: z = 1, 2, \dots, |\mathcal{A}^i(x)|\right>$. 
Then, the sub-problems are formulated as follows: 
\begin{align}
\label{eqn:OP-term-part}
&\min \limits_{\theta, p} h_x(\theta, p) = \sum \limits_{z = 1}^{|\mathcal{A}^i(x)|} p_z\left [ - g^i_{x, z}(\theta) \right ] \vspace{1ex}\\
&\text{s.t. }  
g^i_{x, z}(\theta) \leq 0, -p_z \leq 0, \text{ for } z = 1, 2, \dots, |\mathcal{A}^i(x)|,\nonumber\\
&\text{ and } \sum\limits_z p_z=1.\nonumber
\end{align}

\section{Stochastic Game - Sub-Problem (SG-SP) Conditions}
\label{sec:sgsp}
In this section, we derive a set of necessary and sufficient conditions for solutions of \eqref{eqn:OP} and establish their equivalence with Nash strategies. 
\begin{definition}[SG-SP Point]
\label{definition:sg-sp}
A point $(\v^*, \pi^*)$ of the optimization problem (\ref{eqn:OP}) is said to be an SG-SP point if it is a feasible point of (\ref{eqn:OP}) and for every sub-problem, i.e., for all $ x \in \S$ and $i \in \{1, 2, \dots, N\}$,
\begin{equation}
\label{eqn:sg-sp}
p_z^* g^i_{x, z}(\theta^*) = 0, \qquad \forall z = 1, 2, \dots, |\mathcal{A}^i(x)|.
\end{equation}
\end{definition}
The above conditions, which define a point to be an SG-SP point, are called SG-SP conditions.

\subsection{Equivalence of SG-SP with Nash strategies}
The connection between SG-SP points and Nash equilibria  can be seen intuitively as follows:\\ 
\begin{inparaenum}[\bfseries(i)]
\item The objective function $f(v^*, \pi^*)$ in (\ref{eqn:OP}) can be expressed as a summation of terms of the form $p_z^* [ - g^i_{x, z}(\theta^*)]$ over $z = 1, 2, \dots, |\mathcal{A}^i(x)|$ and over all sub-problems. Condition (\ref{eqn:sg-sp}) suggests that each of these terms is zero which implies $f(v^*, \pi^*) = 0$. \\
\item The objective of the sub-problem is to ensure that there is no Bellman error, which in turn implies that the value estimates $v^*$ are correct with respect to the policy $\pi^*$ of all agents. \\
\end{inparaenum}
Combining (i) and (ii) with Theorem 3.8.2 of \citep{filar-vrieze}, we have the following result: 
\begin{theorem}[Nash $\Leftrightarrow$ SG-SP]
\label{theorem:nash-sg-sp}
A strategy $\pi^*$ is Nash if and only if $(v^*, \pi^*)$ for the corresponding optimization problem (\ref{eqn:OP}) is an SG-SP point.
\end{theorem}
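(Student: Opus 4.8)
The plan is to reduce the claim entirely to Theorem~\ref{theorem:f-equal-zero-nash}, which already characterizes Nash equilibria as feasible points of \eqref{eqn:OP} at which $f$ vanishes. The bridge between that theorem and the SG-SP conditions is a single algebraic identity together with a sign argument. First I would make precise the decomposition hinted at in remark (i): since $\sum_z p_z = 1$ one may write $v^i(x) = \sum_z p_z\, v^i(x)$, so that for each state $x$ and agent $i$,
\begin{align*}
v^i(x) - E_{\pi^i} Q^i_{\pi^{-i}}(x, a^i) = \sum_{z=1}^{|\A^i(x)|} p_z\big(v^i(x) - Q^i_{\pi^{-i}}(x, a^i_z)\big) = \sum_{z=1}^{|\A^i(x)|} p_z\big(-g^i_{x,z}(\theta)\big) = h_x(\theta, p).
\end{align*}
Summing over all $x \in \S$ and $i \in \{1,\dots,N\}$ then yields $f(v,\pi) = \sum_{i,x}\sum_z p_z\big(-g^i_{x,z}(\theta)\big)$, i.e.\ the global objective is exactly the aggregate of the sub-problem objectives.

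For the forward direction (Nash $\Rightarrow$ SG-SP), I would first note that a Nash strategy $\pi^*$, paired with its value function $v^{i*}(x) = v^i_{\pi^*}(x)$, gives a \emph{feasible} point of \eqref{eqn:OP}: constraints \ref{subeq:opt:basic-idea:first-cut-opt:pi-ge-0}--\ref{subeq:opt:basic-idea:first-cut-opt:pi-sum-1} hold because $\pi^*$ is a valid strategy, while constraint \ref{subeq:opt:basic-idea:first-cut-opt:pi-ne-2} follows from the dynamic-programming form \eqref{eq:opt:basic-dp}, since $v^{i*}(x) = \max_{a^i} Q^i_{\pi^{-i*}}(x,a^i) \ge Q^i_{\pi^{-i*}}(x,a^i)$, i.e.\ $g^i_{x,z}(\theta^*) \le 0$ for all $z$. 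Theorem~\ref{theorem:f-equal-zero-nash} then gives $f(v^*,\pi^*) = 0$. Now invoke the decomposition: every summand $p_z^*\big(-g^i_{x,z}(\theta^*)\big)$ is nonnegative, because $p_z^* \ge 0$ and $-g^i_{x,z}(\theta^*) \ge 0$ at the feasible point. A finite sum of nonnegative terms equals zero only if each term vanishes, so $p_z^*\, g^i_{x,z}(\theta^*) = 0$ for every $z, x, i$, which is precisely Definition~\ref{definition:sg-sp}.

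The reverse direction (SG-SP $\Rightarrow$ Nash) is the same computation read backwards. An SG-SP point is feasible by definition and satisfies $p_z^*\, g^i_{x,z}(\theta^*) = 0$ termwise; substituting into the decomposition gives $f(v^*,\pi^*) = -\sum_{i,x}\sum_z p_z^*\, g^i_{x,z}(\theta^*) = 0$, and Theorem~\ref{theorem:f-equal-zero-nash} converts $f = 0$ at a feasible point back into the Nash property. The core of the whole argument is the recognition that $f$ decomposes into products of a nonnegative policy weight and a nonpositive Bellman error, so that $f = 0$ forces a complementarity (complementary-slackness-type) condition between $p_z^*$ and $g^i_{x,z}(\theta^*)$ state-by-state and agent-by-agent. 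I do not expect a genuine obstacle here; the only points requiring care are (a) verifying feasibility of a Nash strategy through the DP reformulation \eqref{eq:opt:basic-dp} in the forward direction, and (b) being precise that the sign conditions $p_z^* \ge 0$ and $g^i_{x,z}(\theta^*) \le 0$ both come from feasibility, which is exactly what licenses the ``sum of nonnegatives is zero iff each is zero'' step.
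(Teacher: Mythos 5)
Your proof is correct and follows essentially the same route as the paper's: both decompose $f$ into the per-state, per-agent sums $\sum_z p_z^*\bigl(-g^i_{x,z}(\theta^*)\bigr)$, invoke Theorem~\ref{theorem:f-equal-zero-nash} (the Filar--Vrieze characterization) to equate Nash with $f=0$ at a feasible point, and use the sign constraints from feasibility to force termwise vanishing. The only difference is that you spell out feasibility of the Nash point via the dynamic-programming form \eqref{eq:opt:basic-dp}, a detail the paper delegates to the cited theorem.
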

The proof of the above theorem follows from a combination of Lemmas \ref{lemma:sg-sp-implies-nash} and \ref{lemma:nash-implies-sg-sp}, presented below.
\begin{lemma}[SG-SP $\Rightarrow$ Nash]
\label{lemma:sg-sp-implies-nash}
An SG-SP point $(\v^*, \pi^*)$ gives Nash strategy-tuple for the underlying stochastic game.
\end{lemma}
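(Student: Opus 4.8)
The plan is to lean on Theorem \ref{theorem:f-equal-zero-nash}, which already asserts that any feasible point of \eqref{eqn:OP} with zero objective value is a Nash equilibrium in stationary strategies. Thus the whole task reduces to showing that an SG-SP point forces $f(\v^*, \pi^*) = 0$; feasibility is free, since it is part of the definition of an SG-SP point (Definition \ref{definition:sg-sp}).

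First I would rewrite the per-agent, per-state contribution to the objective in terms of the Bellman-error quantities $g^i_{x,z}$. Writing $E_{\pi^i} Q^i_{\pi^{-i}}(x, a^i) = \sum_z p_z^{*} Q^i_{\pi^{-i}}(x, a^i_z)$ and using the normalization constraint $\sum_z p_z^{*} = 1$ (which holds by feasibility) to express $v^i(x) = \sum_z p_z^{*}\, v^i(x)$, the summand $v^i(x) - E_{\pi^i} Q^i_{\pi^{-i}}(x, a^i)$ collapses into $-\sum_z p_z^{*}\, g^i_{x,z}(\theta^*)$. Summing over $x$ and $i$ then yields $f(\v^*, \pi^*) = -\sum_{i}\sum_{x}\sum_{z} p_z^{*}\, g^i_{x,z}(\theta^*)$, i.e. the objective is precisely the (negated) sum over all sub-problems of the very quantities appearing on the left-hand side of the SG-SP condition \eqref{eqn:sg-sp}.

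Next I would invoke the SG-SP condition itself: for every sub-problem and every $z$, $p_z^{*}\, g^i_{x,z}(\theta^*) = 0$. Substituting this into the decomposition above kills every term, so $f(\v^*, \pi^*) = 0$. Since the point is feasible by definition, Theorem \ref{theorem:f-equal-zero-nash} applies and gives at once that $\pi^*$ is a Nash strategy-tuple, which is exactly the claim.

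The argument is essentially bookkeeping, so I expect no serious obstacle; the only step requiring care is the decomposition of $f$ into sub-problem terms. Specifically, one must correctly align the inner expectation $E_{\pi^i}$ with the action index $z$ ranging over $a^i_z \in \A^i(x)$ and apply $\sum_z p_z^{*} = 1$ at the right moment to rewrite $v^i(x)$, since without this the cancellation into $g^i_{x,z}(\theta^*)$ does not close up. I would also make explicit that $\theta^* = \langle \v^i, \pi^{-i}(x)\rangle$, so that $g^i_{x,z}(\theta^*)$ genuinely encodes the value and opponent-policy components of the same $(\v^*, \pi^*)$ being handed to Theorem \ref{theorem:f-equal-zero-nash}, ensuring the zero-objective conclusion is about the right point.
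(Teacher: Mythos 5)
Your proposal is correct and matches the paper's own argument: both decompose $f(\v^*,\pi^*)$ into the sub-problem terms $p_z^*[-g^i_{x,z}(\theta^*)]$, use the SG-SP condition \eqref{eqn:sg-sp} to conclude each term vanishes so $f(\v^*,\pi^*)=0$, and then invoke feasibility together with the zero-objective characterization (Theorem \ref{theorem:f-equal-zero-nash}, i.e.\ Theorem 3.8.2 of Filar--Vrieze) to conclude $\pi^*$ is Nash. Your version merely spells out the bookkeeping of the decomposition more explicitly than the paper does.
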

\begin{proof}
The objective function value $f(v^*, \pi^*)$ of the optimization problem (\ref{eqn:OP}) can be expressed as a summation of terms of the form $p_z^* [ - g^i_{x, z}(\theta^*)]$ over $z = 1, 2, \dots, m$ and over all sub-problems. Condition (\ref{eqn:sg-sp}) suggests that each of these terms is zero which implies $f(v^*, \pi^*) = 0$. From \citet[Theorem 3.8.2, page 132]{filar-vrieze}, since $(v^*, \pi^*)$ is a feasible point of (\ref{eqn:OP}) and $f(v^*, \pi^*) = 0$, $(v^*, \pi^*)$ corresponds to Nash equilibrium of the underlying stochastic game.
\end{proof}
\begin{lemma}[Nash $\Rightarrow$ SG-SP]
\label{lemma:nash-implies-sg-sp}
A strategy $\pi^*$ is Nash if $(v^*, \pi^*)$ for the corresponding optimization problem (\ref{eqn:OP})  is an SG-SP point.
\end{lemma}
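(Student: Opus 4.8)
The plan is to prove the direction that matches the label, namely that a Nash strategy $\pi^*$ forces the associated point to be an SG-SP point (the stated ``if'' appears to be a typo for ``only if'', since the reverse implication is exactly Lemma \ref{lemma:sg-sp-implies-nash}, and Theorem \ref{theorem:nash-sg-sp} needs this lemma to supply the converse direction). The starting point is to exhibit the companion value vector: given the Nash strategy $\pi^*$, I would set $v^{i*}(x) := v^i_{\pi^*}(x)$, the true value function induced by $\pi^*$ in the sense of Definition \ref{def:vf}. Since $\pi^*$ is a genuine randomized strategy, constraints \ref{subeq:opt:basic-idea:first-cut-opt:pi-ge-0}--\ref{subeq:opt:basic-idea:first-cut-opt:pi-sum-1} of \eqref{eqn:OP} hold automatically, so the only real work is to verify the inequality constraint \ref{subeq:opt:basic-idea:first-cut-opt:pi-ne-2} and the complementarity relation \eqref{eqn:sg-sp}.

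Next I would invoke the dynamic-programming characterization of the Nash condition. By Definition \ref{def:nash}, each agent $i$ plays a best response against $\pi^{-i*}$, which via \eqref{eq:opt:basic-dp} yields $v^{i*}(x) = \max_{\pi^i(x) \in \Delta(\A^i(x))} E_{\pi^i(x)} Q^i_{\pi^{-i*}}(x, a^i)$ for every state $x \in \S$. The key elementary observation is that maximizing a linear functional (an expectation) over the probability simplex $\Delta(\A^i(x))$ is attained at a vertex, so the maximum equals $\max_z Q^i_{\pi^{-i*}}(x, a^i_z)$. This immediately gives $Q^i_{\pi^{-i*}}(x, a^i_z) \le v^{i*}(x)$ for every action $a^i_z$, i.e. $g^i_{x,z}(\theta^*) \le 0$ by \eqref{eq:g}; this is precisely the remaining feasibility constraint \ref{subeq:opt:basic-idea:first-cut-opt:pi-ne-2}, so $(v^*, \pi^*)$ is feasible for \eqref{eqn:OP}.

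The crux is the complementarity \eqref{eqn:sg-sp}. Here I would argue by support analysis of the maximizing distribution: since $E_{\pi^{i*}(x)} Q^i_{\pi^{-i*}}(x, a^i)$ is a convex combination of the $Q$-values that attains the maximum $v^{i*}(x)$, every action carrying positive weight must itself attain that maximum. Formally, if some $p_z^* = \pi^{i*}(x, a^i_z) > 0$ were paired with $Q^i_{\pi^{-i*}}(x, a^i_z) < v^{i*}(x)$, then the convex combination would be strictly below $v^{i*}(x)$, contradicting optimality. Hence $p_z^* > 0$ forces $g^i_{x,z}(\theta^*) = 0$, while the case $p_z^* = 0$ trivially gives $p_z^* g^i_{x,z}(\theta^*) = 0$; in either case the product vanishes for every $z$, state, and agent, establishing \eqref{eqn:sg-sp}. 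Combining feasibility with this complementarity shows that $(v^*, \pi^*)$ is an SG-SP point per Definition \ref{definition:sg-sp}.

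The main obstacle I anticipate is this support-analysis step: one must cleanly justify that the optimizing simplex point is supported only on the maximizing actions, which is where the ``complementary slackness'' structure of the SG-SP conditions really originates. The remainder is bookkeeping --- transcribing the best-response condition into the Bellman-error notation of \eqref{eq:g}, and checking that the chosen value vector $v^{i*}(x) = v^i_{\pi^*}(x)$ is consistent with the $v^i$ appearing inside the definition of $Q^i_{\pi^{-i}}$. An alternative, perhaps cleaner, route would be to write down the KKT conditions of the sub-problem \eqref{eqn:OP-term-part} and observe that \eqref{eqn:sg-sp} is exactly their complementary-slackness component, but the direct argument above has the advantage of avoiding the introduction of multipliers.
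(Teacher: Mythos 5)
Your proof is correct, and it takes a somewhat different route from the paper's. The paper's proof leans entirely on the cited result (Theorem 3.8.2 of Filar--Vrieze): Nash implies the existence of a feasible point $(v^*,\pi^*)$ with $f(v^*,\pi^*)=0$; then, since feasibility makes every summand $p_z^*[-g^i_{x,z}(\theta^*)]$ nonnegative and the grand total is zero, each summand must vanish, which is exactly \eqref{eqn:sg-sp}. You instead work locally: you construct $v^{i*}=v^i_{\pi^*}$ explicitly, derive feasibility of constraint \ref{subeq:opt:basic-idea:first-cut-opt:pi-ne-2} from the best-response/dynamic-programming characterization \eqref{eq:opt:basic-dp}, and obtain complementarity by a per-state, per-agent support analysis (a convex combination attaining the maximum must be supported on maximizers). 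The two arguments are close cousins --- the paper's ``sum of nonnegatives equals zero'' step is just your support argument aggregated over all states and agents --- but yours is more self-contained in that it does not need the external theorem to supply $f(v^*,\pi^*)=0$, only the Bellman consistency $v^i_{\pi^*}(x)=E_{\pi^{i*}(x)}Q^i_{\pi^{-i*}}(x,a^i)$ together with the best-response property; the paper's is shorter because it delegates exactly that work to the citation. Your reading of the lemma statement is also right: the ``if'' is a typo for ``then'' (or ``only if''), since the literal statement would duplicate Lemma \ref{lemma:sg-sp-implies-nash}, and the paper's own proof, like yours, establishes the Nash $\Rightarrow$ SG-SP direction.
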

\begin{proof}
From \citet[Theorem 3.8.2, page 132]{filar-vrieze}, if a strategy $\pi^*$ is Nash, then a feasible point $(v^*, \pi^*)$ exists for the corresponding optimization problem (\ref{eqn:OP}), where $f(v^*, \pi^*) = 0$. From the constraints of (\ref{eqn:OP}), it is clear that for a feasible point, $p_z^* [ - g^i_{x, z}(\theta^*)] \ge 0$, for $z = 1, 2, \dots, m$, for every sub-problem. Since the sum of all these terms, i.e., $f(v^*, \pi^*)$, is zero, each of these terms is zero, i.e., $(v^*, \pi^*)$ satisfies (\ref{eqn:sg-sp}). Thus, $(v^*, \pi^*)$ is an SG-SP point.
\end{proof}

\subsection{Kinship to Karush-Kuhn-Tucker - Sub-Problem (KKT-SP) conditions}
\label{sec:appendix-kktsp}
  \cite{kktsp} consider a similar optimization problem as \eqref{eqn:OP} for the case of two agents, i.e., $N=2$ and derive a set of verifiable necessary and sufficient conditions that they call KKT-SP conditions. In the following, we first extend the KKT-SP conditions to $N$-player stochastic games, for any $N\ge 2$ and later establish the equivalence between KKT-SP conditions with the SG-SP conditions formulated above.

The Lagrangian corresponding to (\ref{eqn:OP-term-part}) can be written as
\begin{align}
k(\theta,p,\lambda,\delta, s, t) = & h_x(\theta,p) + \sum_{z = 1}^{|\mathcal{A}^i(x)|} \lambda_z \left ( g^i_{x, z}(\theta) + s_z^2 \right )  + \sum_{z = 1}^{|\mathcal{A}^i(x)|} \delta_z \left ( - p_z + t_z^2 \right ),\label{eqn:lagrangian}
\end{align}
where $\lambda_z$ and $\delta_z$ are the Lagrange multipliers and $s_z$ and $t_z, z= 1, 2, \dots, |\mathcal{A}^i(x)|$ are the slack variables, corresponding to the first and second constraints of the sub-problem \eqref{eqn:OP-term-part}, respectively. 

Using the Lagrangian \eqref{eqn:lagrangian}, the associated KKT conditions for the sub-problem (\ref{eqn:OP-term-part}) corresponding to a state $x \in S$ and agent $i \in \{1,\ldots,N\}$ at a point $\left <\theta^*, p^*\right >$ are the following:
\begin{equation}
\label{eqn:kkt-sp}
\left .
\begin{array}{l}
\subequationitem\label{subeq:kkt1} \nabla_\theta h_x(\theta^*, p^*) + \sum \limits_{z = 1}^{m} \lambda_{z}\nabla_\theta g^i_{x, z}(\theta^*) = 0,\vspace{0.5em}\\
\subequationitem\label{subeq:kkt2} \dfrac{\partial h_x(\theta^*, p^*)}{\partial p_z} - \delta_z + \delta_m = 0, \qquad z = 1, 2, \dots, m,\vspace{0.5em}\\
\subequationitem\label{subeq:kkt3} \delta_z p_z^* = 0, \qquad z = 1, 2, \dots, m,\vspace{0.5em}\\
\subequationitem\label{subeq:kkt4} \lambda_{z} g^i_{x, z}(\theta^*) = 0, \qquad z = 1, 2, \dots, m,\vspace{0.5em}\\
\subequationitem\label{subeq:kkt5} \lambda_{z} \geq 0, \qquad z = 1, 2, \dots, m,\vspace{0.5em}\\
\subequationitem\label{subeq:kkt6} \delta_z \geq 0, \qquad z = 1, 2, \dots, m.
\end{array}
\right \}
\end{equation}
KKT-SP conditions are shown to be necessary and sufficient for $(\v^*, \pi^*)$ to represent a Nash equilibrium point of the underlying stochastic game and $\pi^*$ to be a Nash strategy-tuple in the case of $N = 2$ (see Theorem 3.8 in \citep{kktsp}). However, this requires an additional assumption that for each sub-problem, $\left \{\nabla_\theta g^i_{x, z}(\theta^*): z = 1, 2, \dots, m\right \}$ is a set of linearly independent vectors. On the other hand, the SG-SP conditions (see Definition \ref{definition:sg-sp}) that we formulate do not impose any additional linear independence requirement, in order to ensure that the solution points of the sub-problems correspond to Nash equilibria. 

The following lemma establishes the kinship between SG-SP and KKT-SP conditions. 
\begin{lemma}[KKT-SP $\Rightarrow$ SG-SP]
A KKT-SP point is also an SG-SP point.
\end{lemma}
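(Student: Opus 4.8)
The plan is to verify, for each sub-problem (a fixed state $x$ and agent $i$, writing $m := |\mathcal{A}^i(x)|$), that any point $\langle\theta^*,p^*\rangle$ satisfying the KKT-SP conditions \eqref{eqn:kkt-sp} meets the SG-SP condition \eqref{eqn:sg-sp}, i.e. $p_z^*\,g^i_{x,z}(\theta^*)=0$ for every $z$. Since a KKT-SP point is in particular primal feasible (the constraint equations $g^i_{x,z}(\theta)+s_z^2=0$, $-p_z+t_z^2=0$ built into the Lagrangian \eqref{eqn:lagrangian} force $g^i_{x,z}(\theta^*)\le 0$, $p_z^*\ge 0$ and $\sum_z p_z^*=1$), the feasibility requirement in Definition~\ref{definition:sg-sp} comes for free, so the entire content is the complementarity identity \eqref{eqn:sg-sp}.

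First I would read $\partial h_x/\partial p_z$ off the objective $h_x(\theta,p)=\sum_z p_z[-g^i_{x,z}(\theta)]$, giving $\partial h_x(\theta^*,p^*)/\partial p_z = -g^i_{x,z}(\theta^*)$. Substituting this into the stationarity condition \eqref{subeq:kkt2} yields the clean relation
\[
g^i_{x,z}(\theta^*) \;=\; \delta_m - \delta_z,\qquad z=1,\dots,m,
\]
which expresses every Bellman error through the multipliers of the nonnegativity constraints, the $+\delta_m$ term carrying the effect of the equality constraint $\sum_z p_z=1$. Multiplying by $p_z^*$ and invoking complementary slackness \eqref{subeq:kkt3}, $\delta_z p_z^*=0$, collapses this to $p_z^*\,g^i_{x,z}(\theta^*) = p_z^*\,\delta_m$, so the whole lemma reduces to showing $\delta_m=0$.

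The crux — and the step I expect to be the main obstacle — is pinning down $\delta_m=0$ by a sign argument that simultaneously uses primal feasibility, complementarity and dual feasibility. Because $\sum_z p_z^*=1>0$, there is at least one index $z_0$ with $p_{z_0}^*>0$; then \eqref{subeq:kkt3} forces $\delta_{z_0}=0$. Primal feasibility $g^i_{x,z_0}(\theta^*)\le 0$ combined with the displayed relation gives $\delta_m-\delta_{z_0}\le 0$, hence $\delta_m\le \delta_{z_0}=0$, while dual feasibility \eqref{subeq:kkt6} gives $\delta_m\ge 0$. Therefore $\delta_m=0$.

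Finally I would conclude $p_z^*\,g^i_{x,z}(\theta^*)=p_z^*(\delta_m-\delta_z) = -p_z^*\delta_z = 0$ for every $z$, the last equality again by \eqref{subeq:kkt3}. As this holds for each state $x$ and agent $i$, the feasible point $(v^*,\pi^*)$ satisfies all SG-SP conditions and is thus an SG-SP point in the sense of Definition~\ref{definition:sg-sp}. The only delicate point to get right is the correct reading of the $\delta_m$ term in \eqref{subeq:kkt2} — namely that it is the multiplier of the nonnegativity constraint on the coordinate $p_m$ eliminated by the equality constraint — since the reduction hinges entirely on the identity $g^i_{x,z}(\theta^*)=\delta_m-\delta_z$ and on the sign $\delta_m\ge 0$.
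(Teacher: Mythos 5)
Your proof is correct, but it follows a genuinely different route from the paper's. The paper's proof is a one-line substitution: it plugs the specific multiplier choice $\lambda_z^* = p_z^*$, $\delta_z^* = -g^i_{x,z}(\theta^*)$ into \eqref{eqn:kkt-sp} and observes that conditions \eqref{subeq:kkt3}--\eqref{subeq:kkt4} then literally become \eqref{eqn:sg-sp}, with the remaining conditions holding by feasibility. You instead work with \emph{arbitrary} multipliers satisfying \eqref{eqn:kkt-sp}: from the stationarity condition \eqref{subeq:kkt2} you extract $g^i_{x,z}(\theta^*)=\delta_m-\delta_z$, then eliminate $\delta_m$ by a sign argument (some $p_{z_0}^*>0$ forces $\delta_{z_0}=0$ via \eqref{subeq:kkt3}, primal feasibility gives $\delta_m\le 0$, and \eqref{subeq:kkt6} gives $\delta_m\ge 0$), and conclude $p_z^* g^i_{x,z}(\theta^*)=-p_z^*\delta_z=0$ by complementary slackness. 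Your version is the more careful one: the hypothesis that a point satisfies the KKT system with \emph{some} multipliers does not by itself license the paper's substitution of a \emph{particular} multiplier vector (multipliers need not be unique without a constraint qualification), so your derivation closes a logical step the paper glosses over, at the cost of relying on the precise form of \eqref{subeq:kkt2} — in particular the reading of the $+\delta_m$ term — which you correctly flag as the delicate point. Note also that only \eqref{subeq:kkt2}, \eqref{subeq:kkt3}, \eqref{subeq:kkt6} and primal feasibility are used; the $\lambda$-conditions play no role, which your argument makes explicit and the paper's does not.
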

\begin{proof}
A KKT-SP point $(\v^*, \pi^*)$ is a feasible point of (\ref{eqn:OP}). For every sub-problem, substitute and eliminate $\lambda_z^* = p_z^*$ and $\delta_z^* = -g^i_{x, z}(\theta^*)$, $z = 1, 2, \dots, m$. Then
\begin{enumerate}
\item Conditions (\ref{subeq:kkt1}) and (\ref{subeq:kkt2}) are satisfied;
\item Conditions (\ref{subeq:kkt3}) and (\ref{subeq:kkt4}) reduce to (\ref{eqn:sg-sp}); and
\item Conditions (\ref{subeq:kkt5}) and (\ref{subeq:kkt6}) are satisfied as the point $(\v^*, \pi^*)$ is assumed to be feasible.
\end{enumerate}
\end{proof}

From the above, it is evident that the simpler and more general (for any $N$) SG-SP conditions can be used for Nash equilibria  as compared to KKT-SP conditions because:\\
\begin{inparaenum}[\bfseries(i)]
 \item every KKT-SP point is also an SG-SP point and \\
 \item SG-SP conditions do not impose any additional linear independence requirement in order to be Nash points.
\end{inparaenum}

\section{OFF-SGSP: Offline, Model-Based}
\label{sect:off-sgsp}
\paragraph{Basic idea.}
As outlined in the introduction, OFF-SGSP is an actor-critic algorithm that operates using two timescale recursions as follows \\
\begin{description}
\item[Critic recursion:] This estimates the value function $\v$ using value iteration, along the faster timescale; and\\
\item[Actor recursion:] This operates along the slower timescale and updates the policy in the descent direction so as to ensure convergence to an SG-SP point.\\ 
\end{description}
As mentioned before, OFF-SGSP is a model-based algorithm and the transition dynamics and reward structure of the underlying game are used for both steps above. 

\paragraph{Update rule.}
Using two timescale recursions, OFF-SGSP updates the value-policy tuple $(v,\pi)$ as follows: For all $x \in \S$ and  $a^i \in \A^i(x)$,
\begin{align}
\label{eqn:off-sgsp-pi}
\textbf{Actor: }&\pi^i_{n + 1}(x, a^i) = \Gamma \left(\pi^i_{n}(x, a^i) - b(n) \sqrt{\pi^i_{n}(x, a^i)} \left|g^i_{x, a^i}(\v^i_n, \pi^{-i}_n)\right| \bsgn\left(\dfrac{\partial f(\v_n, \pi_n)}{\partial \pi^i}\right)\right),\\
\label{eqn:off-sgsp-v}
\textbf{Critic: }&v^i_{n + 1}(x) \!=\! v^i_n(x) + c(n) \sum\limits_{a^i \in \A^i(x)} \pi^i_{n}(x, a^i) g^i_{x, a^i}(\v^i_n, \pi^{-i}_n),
\end{align}
where 
$g^i_{x, a^i}(\v^i, \pi^{-i}):=Q^i_{\pi^{-i}}(x, a^i) -v^i(x)$ denotes the Bellman error,
$f(\v, \pi)$ is the objective function in \eqref{eqn:OP} and $\Gamma$ is a projection operator that ensures that the updates to $\pi$ stay within the simplex $\mathcal{D} = \{ (d_2,\ldots,d_{|\A^i(x)|}) \mid d_i \ge 0, \forall i=1,\ldots,|\A^i(x)|, \sum\limits_{j=2}^{|\A^i(x)|} d_j \le 1\}$. 
Further, using $\Gamma$, one ensures that $d_1 = 1- \sum_{j=2}^{|\A^i(x)|}, d_j \in [0,1]$.
Here $\bsgn(\cdot)$ is a continuous version of the sign function and projects any $x$, outside of a very small interval around $0$, to $\pm 1$ according to the sign of $x$ (see Remark \ref{remark:sign} for a precise definition). Continuity of $\bsgn(\cdot)$ is a technical requirement that helps in providing strong convergence guarantees\footnote{
Using the normal $\sgn()$ function is problematic for an ODE approach based analysis, as $\sgn()$ is discontinuous. In other words, with $\sgn()$ the underlying system for the policy $\pi_n^i$ will be a stochastic recursive inclusion and providing meaningful guarantees for such inclusions would require more assumptions. In comparison, the results are stronger for the ODE approach.  This is the motivation behind employing $\bsgn()$, which is a continuous extension of $\sgn()$. The function $\bsgn(x)$ projects any $x$ outside of a small interval around $0$ (say $[-\nu,\nu]$ for some $\nu>0$ small) to either $+1$ or $-1$ as $\sgn()$ would do and within the interval $[-\nu,\nu]$, one may choose $\bsgn(x)=x$ or any other continuous function with compatible end-point values. One could choose $\nu$ arbitrarily close to $0$, making $\bsgn$ practically very close to $sgn$.}.

The following assumption on the step-sizes ensures that the $\pi$-recursion \eqref{eqn:off-sgsp-pi} proceeds on a slower timescale in comparison to the $v$-recursion \eqref{eqn:off-sgsp-v}:
\begin{assumption}
\label{assumption:step-sizes}
The step-sizes $\{b(n)\}, \{c(n)\}$ satisfy
\begin{align*}
&\sum\limits_{n = 1}^\infty b(n) \!=\!\sum\limits_{n = 1}^\infty c(n) \!=\! \infty,
\sum\limits_{n = 1}^\infty \left ( b^2(n) + c^2(n) \right ) \!<\! \infty, \dfrac{b(n)}{c(n)} \rightarrow 0.
\end{align*}
\end{assumption}

\paragraph{Justification for descent direction.}
The following proposition proves that the decrement for the policy in \eqref{eqn:off-sgsp-pi} is a valid descent direction for the objective function $f(\cdot,\cdot)$ in \eqref{eqn:OP}.
\begin{proposition}
\label{proposition:descent-direction}
For each $i = 1, 2, \dots, N, x \in \S, a^i \in \A^i(x)$, we have that \\$-\sqrt{\pi^i(x, a^i)} \left|g^i_{x, a^i}(\v^i, \pi^{-i})\right|\bsgn\left(\dfrac{\partial f(\v, \pi)}{\partial \pi^i}\right)$ is a non-ascent, and in particular a descent direction \\if $\sqrt{\pi^i(x, a^i)} \left|g^i_{x, a^i}(\v^i, \pi^{-i})\right| \ne 0$, in the objective $f(\v,\pi)$ of (\ref{eqn:OP}). 
\end{proposition}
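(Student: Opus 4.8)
The plan is to verify the defining inequality for a (non-)ascent direction directly, exploiting the fact that the proposed direction is built coordinate-wise with its sign slaved to that of the true gradient component. I would write the proposed update direction for the coordinate $\pi^i(x,a^i)$ as
\[
d^i_{x,a^i} := -\sqrt{\pi^i(x,a^i)}\,\bigl|g^i_{x,a^i}(\v^i,\pi^{-i})\bigr|\,\bsgn\!\left(\frac{\partial f(\v,\pi)}{\partial \pi^i(x,a^i)}\right),
\]
and recall that $d$ is a non-ascent (resp. descent) direction for $f$ at $(\v,\pi)$ precisely when the directional derivative $\sum_{i,x,a^i}\frac{\partial f}{\partial \pi^i(x,a^i)}\,d^i_{x,a^i}$ is $\le 0$ (resp. $<0$). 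The essential observation is that I never need the (rather involved) closed form of $\partial f/\partial \pi^i(x,a^i)$ --- which, since $f=-\sum_{j}\sum_{y}\sum_{a^j}\pi^j(y,a^j)g^j_{y,a^j}$ and each $Q^j_{\pi^{-j}}$ depends on $\pi^i$ for $j\ne i$, also picks up cross terms from the other agents' $Q$-functions --- because the direction only uses its \emph{sign}.

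The core computation is then a sign-alignment argument. For each coordinate the corresponding term of the directional derivative factorises as
\[
\frac{\partial f}{\partial \pi^i(x,a^i)}\,d^i_{x,a^i} = -\sqrt{\pi^i(x,a^i)}\,\bigl|g^i_{x,a^i}\bigr|\left(\frac{\partial f}{\partial \pi^i(x,a^i)}\,\bsgn\!\left(\frac{\partial f}{\partial \pi^i(x,a^i)}\right)\right).
\]
I would first record the elementary lemma that $y\,\bsgn(y)\ge 0$ for every real $y$, with equality iff $y=0$; this is immediate from the definition of $\bsgn$ in Remark \ref{remark:sign}, since outside $[-\nu,\nu]$ one has $y\,\bsgn(y)=|y|$, while for the canonical choice $\bsgn(y)=y$ inside $[-\nu,\nu]$ one has $y\,\bsgn(y)=y^2$. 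Combined with $\sqrt{\pi^i(x,a^i)}\ge 0$ and $|g^i_{x,a^i}|\ge 0$, every such term is $\le 0$, so the full directional derivative is $\le 0$ and $d$ is a non-ascent direction.

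For the strict (descent) claim I would argue at the level of the single coordinate being updated. The term above is strictly negative exactly when all three factors are nonzero, i.e.\ when $\sqrt{\pi^i(x,a^i)}\,|g^i_{x,a^i}|\ne 0$ and $\partial f/\partial\pi^i(x,a^i)\ne 0$. The second condition is not an extra hypothesis but is forced by the situation: if $\partial f/\partial\pi^i(x,a^i)=0$ then $\bsgn(\cdot)=0$ and the proposed direction $d^i_{x,a^i}$ is itself zero, so there is no movement in that coordinate at all; whenever the coordinate genuinely moves and $\sqrt{\pi^i(x,a^i)}\,|g^i_{x,a^i}|\ne 0$, the corresponding term is strictly negative and $f$ strictly decreases along $d$. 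This reconciles the stated hypothesis ``$\sqrt{\pi^i(x,a^i)}\,|g^i_{x,a^i}|\ne 0$'' with strictness.

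The only real subtlety --- and the step I would be most careful about --- is the use of the smoothed $\bsgn$ in place of $\sgn$: I must confirm that the inequality $y\,\bsgn(y)\ge 0$ and its equality characterization survive for the continuous extension (they do, for any extension preserving the sign of its argument on $[-\nu,\nu]$), and that the flat region $[-\nu,\nu]$ around $0$ does not manufacture spurious failures of strict descent. Everything else is bookkeeping; in particular no explicit differentiation of the non-linear objective $f$ is required, which is precisely why slaving the direction's sign to $\bsgn\!\left(\partial f/\partial \pi^i\right)$ is the right design choice and lets the magnitude $\sqrt{\pi^i}\,|g^i|$ be tailored for the later convergence-to-SG-SP analysis.
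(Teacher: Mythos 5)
Your proposal is correct and takes essentially the same route as the paper: the paper's proof is a first-order Taylor expansion of $f(\v,\hat\pi)$ in a single perturbed coordinate, whose leading term is exactly your factorised product $-\sqrt{\pi^i(x,a^i)}\,\bigl|g^i_{x,a^i}\bigr|\,\bsgn(y)\,y$ with $y=\partial f/\partial\pi^i(x,a^i)$, and the sign-alignment inequality $y\,\bsgn(y)\ge 0$ is the same key step (the paper writes $\bsgn(y)\,y=|y|$, which is exact only for the true sign function, but the conclusion is unchanged). Your remark that strict descent additionally requires $\partial f/\partial\pi^i(x,a^i)\ne 0$ is a legitimate refinement of a point the paper glosses over, since by Lemma \ref{lemma:grad-f} this gradient is $-\sum_j g^j_{x,a^i}$ and can vanish by cancellation across agents even when $g^i_{x,a^i}\ne 0$.
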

\begin{proof}
The objective $f(\v,\pi)$ can be rewritten as 
\begin{align*}
 f(\v, \pi) = \sum\limits_{i = 1}^N \sum\limits_{x \in \S} \sum\limits_{a^i \in \A^i(x)} \left \{ \pi^i(x, a^i) \left [- g^i_{x, a^i}(\v^i, \pi^{-i}) \right ] \right \}.
\end{align*}
Assume $f(\v, \pi) > 0$, otherwise the solution to \eqref{eqn:OP} is already achieved.
For an $a^i \in \A^i(x)$ for some $x \in \S$ and $i \in \{1, 2, \dots, N\}$, let
{\small
\[\hat \pi^i(x, a^i) \!=\! \pi^i(x, a^i) - \delta \sqrt{\pi^i(x, a^i)} \left|g^i_{x, a^i}(\v^i, \pi^{-i})\right|\bsgn\left(\dfrac{\partial f(\v, \pi)}{\partial \pi^i}\right),\]} for a small $\delta > 0$.  Let $\hat \pi$ be the same as $\pi$ except that action $a^i$ is picked as defined above. Then by a Taylor series expansion of $f(\v, \hat \pi)$ till the first order term, we obtain 
\begin{align*}
f(\v, \hat \pi) = f(v, \pi) + \delta \left[-\sqrt{\pi^i(x, a^i)} \left|g^i_{x, a^i}(\v^i, \pi^{-i})\right|\right] \bsgn\left(\dfrac{\partial f(\v, \pi)}{\partial \pi^i}\right)\dfrac{\partial f(\v, \pi)}{\partial \pi^i(x, a^i)} + o(\delta).
\end{align*}
The rest of the proof amounts to showing that the second term in the expansion above is $\le 0$.
This can be inferred as follows:
\begin{align*}
&-\sqrt{\pi^i(x, a^i)} \left|g^i_{x, a^i}(\v^i, \pi^{-i})\right| \bsgn\left(\dfrac{\partial f(\v, \pi)}{\partial \pi^i}\right)\dfrac{\partial f(\v, \pi)}{\partial \pi^i} \\
=& -\sqrt{\pi^i(x, a^i)} \left|g^i_{x, a^i}(\v^i, \pi^{-i})\right|\left| \dfrac{\partial f(\v, \pi)}{\partial \pi^i}\right| \le 0,
\end{align*}
and is in particular $<0$ if $\sqrt{\pi^i(x, a^i)} \left|g^i_{x, a^i}(\v^i, \pi^{-i})\right|\ne0$. 

Thus, for $a^i \in \A^i(x)$, $x \in \S$ and $i \in \{1, 2, \dots, N\}$ where $\pi^i(x, a^i) > 0$ and $g^i_{x, a^i}(\v^i, \pi^{-i}) \ne 0$, $f(\v, \hat \pi) < f(\v, \pi)$ for small enough $\delta$.
The claim follows.
\end{proof}

\paragraph{Main result.}
Let $R^i_\pi = \left < r^i(x, \pi), x \in \S \right >$ be a column vector of rewards to agent $i$ and $P_\pi = [ p(y|x, \pi), x \in \S, y \in \S ]$ be the transition probability matrix, both for a given $\pi$. Then, the value function for a given policy $\pi$ is defined as
\begin{equation}
\label{eqn:v_pi-main}
\v^i_\pi = \left [ I - \beta P_\pi \right ]^{-1} R^i_\pi, i = 1, 2, \dots, N.
\end{equation}
The above will be used to characterize the limit of the critic-recursion \eqref{eqn:off-sgsp-v}. 
Before presenting the main result, we specify the ODE that underlies the actor-recursion \eqref{eqn:off-sgsp-pi}:
For all $ a^i \in \A^i(x), x \in \S, i = 1, 2, \dots, N,$
\begin{align}
\label{eqn:pi-ode-main}
\dfrac{d \pi^i(x, a^i)}{dt} = \bar\Gamma \left ( -\sqrt{\pi^i(x, a^i)} \left|g^i_{x, a^i}(\v_\pi^i, \pi^{-i})\right| \bsgn\left(\dfrac{\partial f(\v_\pi, \pi)}{\partial \pi^i}\right)\right ), 
\end{align}
where $\bar\Gamma$ is a projection operator that restricts the evolution of the above ODE to the simplex $\mathcal{D}$ (see Section \ref{sect:convergence} for a precise definition).

The main result regarding the convergence of OFF-SGSP is as follows:
\begin{theorem}
\label{thm:off-sgsp-main}
Let $G$ denote the set of all feasible points of the optimization problem \eqref{eqn:OP} and $K$ the set of limit points of the ODE \eqref{eqn:pi-ode-main}. Further, 
let $K_1 = K \cap G$ and $\K^* = \{ (\v^i_{\pi^*}, \pi^*) \mid \pi^* \in K_1\}$. 
Then, for any agent $i=1,\ldots,N$, the sequence of iterates $(v^i_{n}, \pi^i_{n}), n\ge 0$ satisfy 
$$ (v^i_{n}, \pi^i_{n})  \rightarrow \K^* \text{ a.s.}$$
\end{theorem}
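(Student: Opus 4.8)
The plan is to analyze the coupled recursions \eqref{eqn:off-sgsp-pi}--\eqref{eqn:off-sgsp-v} via the two-timescale ODE method for stochastic approximation \citep{borkar2008stochastic}. Under Assumption \ref{assumption:step-sizes} the critic \eqref{eqn:off-sgsp-v} evolves on the faster timescale and perceives the policy $\pi_n$ as quasi-static, while the actor \eqref{eqn:off-sgsp-pi} evolves on the slower timescale and perceives the value iterate $v_n$ as equilibrated. I would therefore prove two facts in sequence: (i) for a frozen $\pi$, the critic converges to $\v^i_\pi$ of \eqref{eqn:v_pi-main}; and (ii) with $v$ set to $\v_\pi$, the actor tracks the projected ODE \eqref{eqn:pi-ode-main}, whose limit set intersected with the feasible region $G$ yields $\K^*$.

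For the critic, I would first note that the increment $\sum_{a^i\in\A^i(x)}\pi^i_n(x,a^i)\,g^i_{x,a^i}(v^i_n,\pi^{-i}_n)$ equals the Bellman residual $(T^{\pi_n}v^i_n)(x)-v^i_n(x)$, where $(T^{\pi}w)(x):=r^i(x,\pi)+\beta\sum_{y}p(y|x,\pi)\,w(y)$. In vector form the associated fast ODE is $\dot{v}^i = R^i_\pi-(I-\beta P_\pi)\,v^i$. Since $P_\pi$ is stochastic and $0<\beta<1$, every eigenvalue of $\beta P_\pi$ has modulus at most $\beta$, so $-(I-\beta P_\pi)$ is Hurwitz with spectrum in $\{\mathrm{Re}(\cdot)\le -(1-\beta)\}$; hence the ODE has the globally asymptotically stable equilibrium $\v^i_\pi=(I-\beta P_\pi)^{-1}R^i_\pi$. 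The standard two-timescale argument (Chapter 6 of \citep{borkar2008stochastic}) then yields $\|v^i_n-\v^i_{\pi_n}\|\to 0$ a.s., i.e., the critic tracks the true value of the current policy.

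With the critic equilibrated, the actor drift has $v$ frozen at $\v_\pi$, so \eqref{eqn:off-sgsp-pi} is a projected stochastic approximation whose mean-field is exactly the right-hand side of \eqref{eqn:pi-ode-main}. Invoking the Kushner--Clark lemma for projected recursions, $\pi_n$ converges a.s. to the limit set $K$ of \eqref{eqn:pi-ode-main}. The crucial structural input is Proposition \ref{proposition:descent-direction}: the unprojected drift is a descent direction for $f$, so $f(\v_\pi,\pi)$ is a Lyapunov function, non-increasing along the flow and strictly decreasing wherever $\sqrt{\pi^i(x,a^i)}\,\left|g^i_{x,a^i}(\v^i_\pi,\pi^{-i})\right|\neq 0$ for some $(i,x,a^i)$. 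By LaSalle's principle $K$ therefore consists of equilibria, i.e., points at which $\sqrt{\pi^i(x,a^i)}\,\left|g^i_{x,a^i}(\v^i_\pi,\pi^{-i})\right|=0$ for every $(i,x,a^i)$; equivalently $\pi^i(x,a^i)\,g^i_{x,a^i}(\v^i_\pi,\pi^{-i})=0$, which is precisely the SG-SP condition \eqref{eqn:sg-sp}.

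The main obstacle is to convert convergence to $K$ into convergence to $K_1=K\cap G$, i.e., to discard the spurious (infeasible) equilibria. A zero-drift point need not be feasible, because the factor $\sqrt{\pi^i(x,a^i)}$ also kills the drift at an action with $\pi^i(x,a^i)=0$ even when $g^i_{x,a^i}(\v^i_\pi,\pi^{-i})>0$; such a point violates constraint \ref{subeq:opt:basic-idea:first-cut-opt:pi-ne-2} and lies outside $G$. I would show these equilibria are non-attracting: although $f=0$ there, shifting an $\varepsilon$ of probability mass onto the action $(x,a^i)$ with $g^i_{x,a^i}>0$ changes $f$ by $-\varepsilon\,g^i_{x,a^i}+o(\varepsilon)<0$, so the descent dynamics push $\pi^i(x,a^i)$ strictly upward, away from the equilibrium. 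Appealing to the avoidance-of-unstable-equilibria results for stochastic approximation (Chapter 4 of \citep{borkar2008stochastic}), the iterates converge a.s. only to attracting equilibria, all of which satisfy $g^i_{x,a^i}\le 0$ and hence lie in $K_1$. Combining with the critic's tracking of $\v^i_\pi$ gives $(v^i_n,\pi^i_n)\to\{(\v^i_{\pi^*},\pi^*):\pi^*\in K_1\}=\K^*$ a.s.; by Theorem \ref{theorem:nash-sg-sp} every such $\pi^*$ is moreover a Nash strategy. The delicate parts --- establishing the precise form of the limit set through LaSalle under the projection $\bar\Gamma$, and rigorously ruling out the infeasible equilibria via an instability/avoidance argument --- are where the bulk of the work lies.
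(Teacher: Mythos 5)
Your overall skeleton matches the paper's: a two-timescale decomposition in which the critic \eqref{eqn:off-sgsp-v} is analyzed through the linear ODE with the Hurwitz matrix $\beta P_\pi - I$ (Lemma \ref{lemma:v_pi} and Proposition \ref{lemma:off-sgsp-v-converge}), and the actor \eqref{eqn:off-sgsp-pi} is shown to track the projected ODE \eqref{eqn:pi-ode-main} via the Kushner--Clark lemma, with the discrepancy between $v_n$ and $\v_{\pi_n}$ absorbed into the vanishing bias term $\beta_n$. Your characterization of the limit set by a LaSalle argument with $f(\v_\pi,\cdot)$ as Lyapunov function is a mild variant of what the paper does directly in Lemma \ref{lemma:complementary-pi-g} (a case analysis on the projection $\bar\Gamma$ showing $\sqrt{\pi^i(x,a^i)}\,g^i_{x,a^i}=0$ on $K$), and your instability argument for the infeasible equilibria is exactly Lemma \ref{lemma:pi-convergence}.

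The one step that does not go through as you state it is the appeal to avoidance-of-traps results to upgrade ``$\pi_n$ converges to $K$'' to ``$\pi_n$ converges to $K_1$ a.s.'' Those results (Pemantle, Brandi\`ere--Duflo, Chapter 4 of \cite{borkar2008stochastic}) require the martingale noise to be persistently exciting near the unstable set, e.g., a conditional variance bounded below in every direction. OFF-SGSP is a deterministic, model-based iteration: the paper's own verification of the Kushner--Clark conditions notes that the noise term $\zeta_n$ is absent, so there is nothing to push the iterates off the stable manifold of a point of $K_2$, and the avoidance theorems simply do not apply. The paper does not close this gap analytically either: it invokes the statement of Theorem \ref{thm:kc}, which delivers convergence to the set of asymptotically stable equilibria, and then concedes in Remark \ref{remark:off-sgsp-conv} that convergence to an unstable point of $K_2$ is possible in principle and is ruled out only numerically by the periodic $\delta$-perturbation of the policy. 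To make your version rigorous you would need either to condition on non-convergence to $K_2$ (as the paper effectively does), or to add an explicit perturbation or noise mechanism to the recursion under which the avoidance results become applicable; as written, the a.s.\ exclusion of $K_2$ is asserted with a tool that does not cover this noiseless setting.
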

\begin{proof}
See Section \ref{sect:offsgsp-proof}.
\end{proof}
From the above theorem, we can infer the following:
\begin{enumerate}[\bfseries (i)]
\item The set of infeasible limit points of the ODE \eqref{eqn:pi-ode-main}, i.e., $K_2 = K \setminus K_1$, are asymptotically unstable (see Lemma \ref{lemma:pi-convergence} in Section \ref{sect:convergence} for a formal proof); and
\item OFF-SGSP converges to the set $\K^*$, which is the set of all asymptotically stable limit points of the system of ODEs (\ref{eqn:pi-ode}). Further, $\K^*$ corresponds to SG-SP (and hence Nash) points and hence, OFF-SGSP is shown to converge almost surely to a NE of the underlying discounted stochastic game.
\end{enumerate}
%
\section{ON-SGSP: Online and Model-Free}
\label{sect:on-sgsp}

\begin{figure}
\centering
\tikzstyle{block} = [draw, rectangle,  line width=0.2mm,join=round,minimum height=1.4cm, minimum width=4cm]
\tikzstyle{smallblock} = [draw, rectangle, minimum height=2em, minimum width=2em]
\begin{tikzpicture}
\node [block, rectangle, fill=green!30] (env) {{\bf\large Environment}};
\node [smallblock, rectangle, below=50pt of env, xshift=-10pt, fill=red!30] (2) {$\bm{2}$};
\node [smallblock, rectangle, left=80pt of 2, fill=red!30] (1) {$\bm{1}$};
\node [right=30pt of 2] (dots) {$\bm{\ldots}$};
\node [smallblock, rectangle, right=40pt of dots, fill=red!30] (N) {$\bm{N}$};
\node [below=40pt of 2] (don-sgsp) {{\bf\large ON-SGSP}};


\draw[>=latex',->,darkgreen] (env.180) to[out=-180,in=90] node [left] {$\bm{r^1}$, $\bm{y}$} (1);
\draw[>=latex',->,red] (1) -- node[left] {$\bm{a^1}$} (env.190);

\draw[>=latex',->,darkgreen] (env.south) to node [left] {$\bm{r^2}$, $\bm{y}$} (2);
\draw[>=latex',->,red] (2) to [out=45,in=-45] node[right] {$\bm{a^2}$} (env);

\draw[>=latex',->,darkgreen] (env) to[out=0,in=90] node [right] {$\bm{r^N}$, $\bm{y}$} (N);
\draw[>=latex',->,red] (N) -- node[left, near start] {$\bm{a^N}$} (env.350);

\draw[>=latex',->,bleu2] (don-sgsp) -- (1.south);
\draw[>=latex',->,bleu2] (don-sgsp) -- (2.south);
\draw[>=latex',->,bleu2] (don-sgsp) -- (N.south);
\end{tikzpicture}
\caption{ON-SGSP's decentralized on-line learning model with $N$ agents}
\label{fig:algos:don-sgsp}
\end{figure}
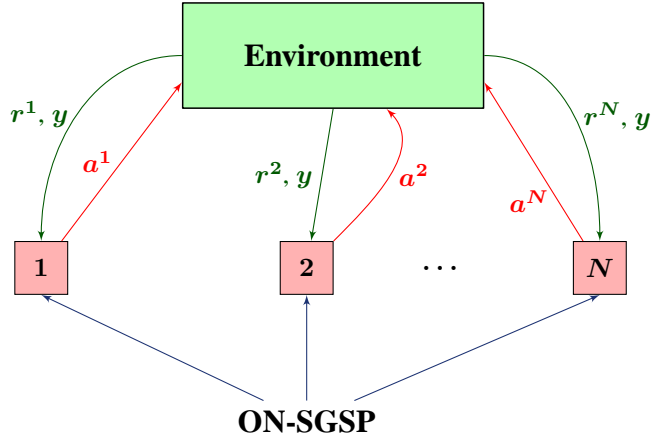
Though OFF-SGSP is suitable for only off-line learning of Nash strategies, it is amenable for extension to the general (on-line) multi-agent RL setting where neither the transition probability $p$ nor the reward function $r$ are explicitly known. ON-SGSP operates in the latter model-free setting and uses the stochastic game as a generative model.  

As illustrated in Fig. \ref{fig:algos:don-sgsp}, every iteration in ON-SGSP represents a discrete-time interaction with the environment, where each agent presents its action to the environment and observes the next state and the reward vector of all agents. The learning is localized to each agent $i~\in~\{ 1, 2, \dots, N \}$, making the setting decentralized. This is in the spirit of earlier multi-agent RL approaches (cf. \cite{huwellman}, \cite{huwellman2003} and \cite{littman}).

Algorithm~\ref{algo:drl-sqrt-pi-v} presents the complete structure of ON-SGSP along with update rules for the value and policy parameters. 
The algorithm operates along two timescales as follows:
\begin{description}
 \item[Critic (faster timescale):] Each agent estimates its own value function as well as that of other agents, using a temporal-difference (TD) \citep{sutton1988learning} type update in \eqref{eqn:don-sgsp-v}. Moreover, the gradient $\dfrac{\partial f(\v_n, \pi_n)}{\partial \pi^i(x, a^i)}$ is also estimated in an online manner via the $\xi$-recursion in \eqref{eqn:don-sgsp-q}. Note that the $\xi$-recursion is made necessary due to the fact that ON-SGSP operates in a model-free setting.
 \item[Actor (slower timescale):] The policy update is similar to OFF-SGSP, except that the estimates of value $v$ and gradient $\xi$ are used to derive the decrement in \eqref{eqn:don-sgsp-pi}.
\end{description}
Note that, since ON-SGSP operates in a model-free setting, both the value and policy updates are different in comparison to OFF-SGSP. 
The value $\v$ update \eqref{eqn:don-sgsp-v} on the faster timescale can be seen to be the stochastic approximation variant of value iteration and it converges to the same limit as in OFF-SGSP, without knowing the model. 
On the other hand, the policy update \eqref{eqn:don-sgsp-pi} on the slower timescale involves a decrement that is motivated by the descent direction suggested by Proposition \ref{proposition:descent-direction}.  

\begin{algorithm}[h]
\caption{ON-SGSP}
\label{algo:drl-sqrt-pi-v}
\begin{algorithmic}
\State {\bfseries Input:} Starting state $x_0$, initial point $\theta^i_0 = (\v^i_0, \pi^i_0)$,  step-sizes $\{b(n), c(n)\}_{n \ge 1}$, number of iterations to run $M >> 0$. 
\State {\bfseries Initialization:} $n \leftarrow 1$, $\theta^i \leftarrow \theta^i_0$, $x \leftarrow x_0$
\For {$n=1,\dots,M$}
\State Play action $a^i_n:=\pi^i_n(x_n)$ along with other agents in current state $x_n \in \S$
\State Obtain next state $y_n \in \S$
\State Observe reward vector $r_n=<r^1_n,\ldots,r^N_n>$
\begin{align}
&\textrm{{\bf Value Update:  } For } j=1,\dots,N \phantom{.}\nonumber\\[-1ex] 
&v^j_{n+1}(x_n) \!=\! v^j_n(x_n) \!+\! c(n) \left( r^j_n \!+\! \beta v^j_n(y_n) \!- \!v^j_n(x_n)\right) \label{eqn:don-sgsp-v}\\[0.5ex]
&\textrm{\bf Gradient Estimation:} \phantom{.}\nonumber\\[-1.5ex]  \label{eqn:don-sgsp-q}
&\xi^i_{n+1}(x_n, a^i_n) = \xi^i_{n}(x_n, a^i_n) + c(n) \bigg( \sum\limits_{j=1}^N \big( r^j_n + \beta v^j_n(y_n) - v^j_n(x_n) \big) -  \xi^i_n(x_n, a^i_n) \bigg)
 \\[-1ex]
&\textrm{{\bf Policy Update:}}\phantom{.}\nonumber\\[-1ex] 
&\pi^i_{n+1}(x_n, a^i_n) = \Gamma (\pi_n^i(x_n, a^i_n) - b(n)\sqrt{\pi_n^i(x_n, a_n^i)}\nonumber\\
&\qquad\qquad\qquad\times  \left|r_n^i + \beta v_n^i(y_n) - v_n^i(x_n)\right| \bsgn(-\xi_{n+1}^i(x_n, a^i_n)))\label{eqn:don-sgsp-pi}
\end{align}
\EndFor
\end{algorithmic}
\end{algorithm}

 A few remarks about ON-SGSP are in order.
 \begin{remark}
 \textbf{(Coupled dynamics)}
In the ON-SGSP algorithm, an agent $i$ observes the rewards of other agents and uses this information to compute the respective value estimates. These quantities are then used to derive the decrement in the policy update \eqref{eqn:don-sgsp-pi}. This is meaningful in the light of the impossibility result of \cite{hart2003uncoupled}, where the authors show that in order to converge to a Nash equilibrium each agent's strategy needs to factor in the rewards of the other agents. 
 \end{remark}
\begin{remark}
\label{remark:complexity}
\textbf{(Per-iteration complexity)}
\begin{description}
 \item[OFF-SGSP:] Let $A$ be the typical number of actions available to any agent in any given state and let $U$ be the typical number of next states for any state $x \in \S$. Then, the typical number of multiplications in OFF-SGSP per iteration is $N \times \left ( ( U + 1 ) \times A^N + 4 A \right ) \times |\S|$. Thus, the computational complexity grows exponentially in terms of the number of agents while being linear in the size of the state space. Note that the exponential behaviour in $N$ appears because of the computation of expectation over possible next states and strategies of agents. This computation is avoided in ON-SGSP.
 \item[ON-SGSP:] For each agent, a typical iteration would take just $(2 A + 1)$ number of multiplications.
\end{description}
Thus, per-iteration complexity of OFF-SGSP is $\Theta(2^N)$ while that of ON-SGSP is $\Theta(1)$ (from the point of view of each agent). Thus, ON-SGSP is computationally efficient and this is also confirmed by simulation results, which establish that the total run-time till convergence of ON-SGSP is indeed very small when compared to that of the off-line algorithm OFF-SGSP. In comparison, the stochastic tracing procedure of \citet{herings2004stationary} has a complexity of $O(|\S| \times A^N)$ per iteration which is similar to that of OFF-SGSP.

However, the per-iteration complexity alone is not sufficient and an analysis of the number of iterations required is necessary to complete the picture\footnote{A well-known complexity result \citep{papadimitriou1994complexity} establishes that finding the Nash equilibrium of a two-player game is PPAD-complete.}. On the other hand, convergence rate results for general multi-timescale stochastic approximation schemes are not available, see however, \cite{konda2004convergence} for rate results of two timescale schemes with linear recursions. 
\end{remark}

\begin{remark}
\textbf{(Descent directions)}
It is shown in Proposition \ref{proposition:descent-direction} that \\$-\sqrt{\pi^i(x, a^i)}\left|g^i_{x, a^i}(\v^i, \pi^{-i})\right|\sgn\left(\dfrac{\partial f(\v, \pi)}{\partial \pi^i}\right)$ is a descent direction in $\pi^i(x, a^i)$ for every $i = 1, 2, \dots, N, ~x \in \S, a^i \in \A^i(x)$. 
Since $\left \{\pi^i(x, a^i) \right \}^\alpha \ge 0$ for any $\alpha \ge 0$, 
\[-\left \{\pi^i(x, a^i) \right \}^\alpha \sqrt{\pi^i(x, a^i)}\left|g^i_{x, a^i}(\v^i, \pi^{-i})\right|\sgn\left(\dfrac{\partial f(\v, \pi)}{\partial \pi^i}\right)\] can also be seen to be a descent direction in $\pi^i(x, a^i)$ for every $i = 1, 2, \dots, N, ~ x \in \S, a^i \in \A^i(x)$. In other words,  the following is a descent direction in $\pi^i(x, a^i)$ for every $i = 1, 2, \dots, N, ~ x \in \S, a^i \in \A^i(x)$:
\[-\left \{\pi^i(x, a^i) \right \}^{\alpha'} \left|g^i_{x, a^i}(\v^i, \pi^{-i})\right|\sgn\left(\dfrac{\partial f(\v, \pi)}{\partial \pi^i}\right), \text{ for any } \alpha' \ge \frac{1}{2}\]  So, the policy updates in OFF-SGSP/ON-SGSP can be generalized as follows: With $\alpha \ge \frac{1}{2}$,
\begin{align*}
\text{\bf OFF-SGSP:} \quad & \pi^i(x, a^i) \leftarrow \Gamma \left(\pi^i(x, a^i) - \gamma \left \{\pi^i(x, a^i) \right \}^{\alpha'} \left|g^i_{x, a^i}(\v^i, \pi^{-i})\right|\bsgn\left(\dfrac{\partial f(\v, \pi)}{\partial \pi^i}\right) \right),\\
\text{\bf ON-SGSP:} \quad& \pi^i(x, a^i) \leftarrow \Gamma \left(\pi^i(x, a^i) - \gamma \left \{\pi^i(x, a^i) \right \}^{\alpha'} \left|r^i + \beta v^i(y) - v^i(x)\right| \bsgn(q^i(x, a^i) \right),
\end{align*} 
where $\gamma$ is a step-size parameter.
\end{remark}

\begin{remark}\textbf{(Convergence result)}
Theorem \ref{thm:off-sgsp-main} holds for ON-SGSP as well, while the proof deviates significantly. OFF-SGSP assumes model information, i.e., knowledge of transition dynamics. On the other hand, ON-SGSP operates in a model-free setting and hence, the analyses for both the timescales change. In particular, ON-SGSP uses a TD-critic and using standard stochastic approximation arguments (as in earlier literature), it is straightforward to prove the value updates in \eqref{eqn:don-sgsp-v} converge to the true value function. 
However, the analysis of the $\xi$-recursion changes significantly. The latter is a consequence of the fact that ON-SGSP operates in a model-free setting and hence does not have access to $f(\v_n, \pi_n)$ (and hence $\dfrac{\partial f(\v_n, \pi_n)}{\partial \pi^i(x, a^i)}$ which is required for the policy update).
Finally, the policy updates can be shown to track the system of ODEs \eqref{eqn:pi-ode} as in OFF-SGSP, after handling an additional martingale sequence that arises due to the $\xi$-recursion. The detailed proof is available in Section \ref{sect:onsgsp-proof}.  
\end{remark}

\section{Proof of Convergence}
\label{sect:convergence}
We provide a proof of convergence of the two proposed algorithms - OFF-SGSP and ON-SGSP, respectively. 
In addition to Assumption \ref{assumption:step-sizes}, we make the following assumption for the analysis of our algorithms:
\begin{assumption}
\label{assumption:markov-ir-pr-appendix}
The underlying Markov chain with transition probabilities $p(y | x, \pi)$, $x, y \in \S$, corresponding to the general-sum discounted stochastic game, is irreducible and positive recurrent for all possible strategies $\pi$.
\end{assumption}
The above assumption is standard in the analysis of multi-agent RL algorithms and can be seen in earlier works as well (for instance, see \cite{huwellman,littman}).

In the following section, we provide the detailed analysis for OFF-SGSP and later, in Section \ref{sect:onsgsp-proof}, provide the necessary modifications to the analysis for ON-SGSP.

\subsection{Proof of Theorem \ref{thm:off-sgsp-main} for OFF-SGSP}
\label{sect:offsgsp-proof}
As mentioned earlier, OFF-SGSP employs two time-scale stochastic approximation \cite[Chapter 6]{borkar2008stochastic}.
That is, it comprises of iteration sequences that are updated using two different time-scales or step-size schedules defined via $\{b(n)\}$ and $\{c(n)\}$, respectively. The step-sizes, satisfying Assumption \ref{assumption:step-sizes}, ensure the following:
\begin{enumerate}[\bfseries (i)]
\item The policy $\pi$ (on slower timescale) appears quasi-static for updates of $\v$; and 
\item The value $\v$ (on faster timescale) appears almost equilibrated for updates of $\pi$. We let $\v_\pi$ denote the value for a given policy $\pi$.
\end{enumerate}

Claim (i) above can be inferred as follows: First rewrite the $\pi$-recursion in \eqref{eqn:off-sgsp-pi} as
 $$
\pi^i_{n + 1}(x, a^i) = \Gamma \left(\pi^i_{n}(x, a^i) - c(n) H(n) \right), $$
 where $H(n) = \frac{b(n)}{c(n)} \sqrt{\pi^i_{n}(x, a^i)} \left|g^i_{x, a^i}(\v^i_n, \pi^{-i}_n)\right| \bsgn\left(\dfrac{\partial f(\v_n, \pi_n)}{\partial \pi^i}\right)$, with $g^i(\cdot,\cdot)$ as defined in \eqref{eq:g} and $f(\cdot,\cdot)$ as defined in \eqref{eqn:OP}. Since we consider a finite state-action spaced stochastic game, $g^i_{x, a^i}$ is bounded, while one can trivially upper bound $\pi$ and $\bsgn$. Thus, $\sup_n \left|H(n)\right|$ is finite. Since, $\frac{b(n)}{c(n)} = o(1)$ by assumption \ref{assumption:step-sizes}, it can be clearly seen that the $\pi$-recursion in \eqref{eqn:off-sgsp-pi} tracks the ODE 
 $$\dfrac{d \pi^i(x, a^i)}{dt} = 0.$$
 Claim (i) now follows. 

Inferring claim (ii) above is technically more involved, but follows using arguments similar to those used in Theorem 2 of Chapter 6 in \citep{borkar2008stochastic}.

In order to prove Theorem \ref{thm:off-sgsp-main}, we analyse each timescale separately in the following.
\subsection*{Step 1: Analysis of $\v$-recursion}
We first show that the updates of $\v$, that are on the faster time-scale, converge to a limit point of the following system of ODEs:$\forall x \in \S, i = 1, 2, \dots, N,$
\begin{align}
 \label{eqn:v-ode}
\dfrac{d v^i(x)}{dt} = r^i(x, \pi) + \beta \sum\limits_{y \in U(x)} p(y|x, \pi) v^i(y) - v^i(x), 
\end{align}
where $\pi$ is assumed to be time-invariant. We will also see that the system of ODEs above has a unique limit point, henceforth referred to as $\v_\pi$, which is stable. 

Let $R^i_\pi = \left < r^i(x, \pi), x \in \S \right >$ be a column vector of rewards to agent $i$ and $P_\pi = [ p(y|x, \pi), x \in \S, y \in \S ]$ be the transition probability matrix, both for a given $\pi$.
\begin{lemma}
\label{lemma:v_pi}
The system of ODEs (\ref{eqn:v-ode}) has a unique globally asymptotically stable limit point given by 
\begin{equation}
\label{eqn:v_pi}
\v^i_\pi = \left [ I - \beta P_\pi \right ]^{-1} R^i_\pi, i = 1, 2, \dots, N.
\end{equation}
\end{lemma}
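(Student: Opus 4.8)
The plan is to recognize the system of ODEs \eqref{eqn:v-ode} as a linear, autonomous ODE in the vector $\v^i = \langle v^i(x) : x \in \S \rangle$, holding $\pi$ fixed. Stacking the equations over all states $x \in \S$ and writing the sum over $y \in U(x)$ as a full matrix-vector product against $P_\pi$, the system becomes
\[
\frac{d\v^i}{dt} = R^i_\pi + \beta P_\pi \v^i - \v^i = R^i_\pi - (I - \beta P_\pi)\v^i.
\]
First I would argue that the coefficient matrix $A := -(I - \beta P_\pi)$ is stable (Hurwitz), i.e. all its eigenvalues have strictly negative real part. For a linear ODE $\dot{\v}^i = A\v^i + R^i_\pi$, this is exactly the condition guaranteeing a unique globally asymptotically stable equilibrium, obtained by setting $\dot{\v}^i = 0$, namely $\v^i_\pi = -A^{-1}R^i_\pi = (I - \beta P_\pi)^{-1}R^i_\pi$, which matches \eqref{eqn:v_pi}.

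The key steps in order are as follows. First, I would establish that $I - \beta P_\pi$ is invertible, so the claimed limit point is well-defined: since $P_\pi$ is a stochastic matrix its spectral radius is $1$, hence every eigenvalue $\mu$ of $\beta P_\pi$ satisfies $|\mu| \le \beta < 1$, so $1$ is not an eigenvalue of $\beta P_\pi$ and $I - \beta P_\pi$ is nonsingular. (Under Assumption~\ref{assumption:markov-ir-pr-appendix}, $P_\pi$ is irreducible and positive recurrent, so $1$ is a simple eigenvalue of $P_\pi$, but the weaker bound $\rho(P_\pi) \le 1$ already suffices here because $\beta < 1$.) Second, I would translate this eigenvalue bound into stability of $A$: if $\mu$ is an eigenvalue of $P_\pi$, then the corresponding eigenvalue of $A = -(I - \beta P_\pi)$ is $-(1 - \beta\mu)$, whose real part is $-(1 - \beta\,\mathrm{Re}(\mu)) \le -(1 - \beta|\mu|) \le -(1-\beta) < 0$. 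Thus every eigenvalue of $A$ has real part bounded above by $-(1-\beta) < 0$, so $A$ is Hurwitz. Third, I would invoke the standard theory of linear constant-coefficient ODEs: the unique equilibrium $\v^i_\pi$ is globally asymptotically stable, with the explicit solution $\v^i(t) = \v^i_\pi + e^{At}(\v^i(0) - \v^i_\pi)$ and $e^{At} \to 0$ as $t \to \infty$ because $A$ is Hurwitz.

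Alternatively, and perhaps more in keeping with the reinforcement-learning flavour of the paper, I would offer a contraction/Lyapunov argument: the map $T\v^i := R^i_\pi + \beta P_\pi \v^i$ is the policy-evaluation Bellman operator for the fixed policy $\pi$, and it is a $\beta$-contraction in the sup-norm since $\|\beta P_\pi(\v - \w)\|_\infty \le \beta\|\v - \w\|_\infty$ (using that $P_\pi$ is a stochastic matrix with nonnegative rows summing to one). Its unique fixed point is precisely $\v^i_\pi = (I - \beta P_\pi)^{-1}R^i_\pi$. One can then take the Lyapunov function $L(\v^i) = \|\v^i - \v^i_\pi\|^2$ (or a weighted quadratic form adapted to $A$) and verify $\dot{L} < 0$ away from $\v^i_\pi$ to conclude global asymptotic stability directly from the ODE, without appealing to the eigenvalue computation.

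I do not expect any serious obstacle here, as the statement is essentially a restatement of the fact that the policy-evaluation Bellman equation has a unique solution together with the elementary stability theory of linear ODEs. The only point requiring a little care is confirming that $\rho(\beta P_\pi) < 1$ rigorously (rather than merely $\le 1$), which follows cleanly from $\beta < 1$ combined with $\rho(P_\pi) \le 1$; this is what upgrades mere invertibility to strict Hurwitz stability and hence to the \emph{global asymptotic} stability asserted in the lemma.
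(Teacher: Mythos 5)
Your main argument is correct and is essentially identical to the paper's proof: both rewrite \eqref{eqn:v-ode} in vector form as a linear ODE with coefficient matrix $\beta P_\pi - I$, use the spectral radius bound $\rho(P_\pi)\le 1$ together with $\beta<1$ to conclude that this matrix is Hurwitz and nonsingular, and then invoke standard linear systems theory to get the unique globally asymptotically stable equilibrium $(I-\beta P_\pi)^{-1}R^i_\pi$. Your additional contraction/Lyapunov remark is a fine alternative but is not needed; the eigenvalue route is exactly what the paper does.
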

\begin{proof}
The system of ODEs (\ref{eqn:v-ode}) can be re-written in vector form as given below.
\begin{equation}
\label{eqn:v-vector-ode}
\dfrac{d \v^i}{dt} = R^i_\pi + \beta P_\pi \v^i - \v^i.
\end{equation}
Rearranging terms, we get
\[\dfrac{d \v^i}{dt} = R^i_\pi + ( \beta P_\pi - I ) \v^i,\]
where $I$ is the identity matrix of suitable dimension. Note that for a fixed $\pi$, this ODE is linear in $\v^i$ with state transition matrix as $( \beta P_\pi - I )$. Since $P_\pi$ is a stochastic matrix, the magnitude of all its eigen-values is upper bounded by 1. Hence all the eigen-values of the state transition matrix $( \beta P_\pi - I )$ have negative real parts and the matrix $( \beta P_\pi - I )$ is in particular non-singular. Thus by standard linear systems theory, the above ODE has a unique globally asymptotically stable limit point which can be computed by setting $\dfrac{d \v^i}{dt} = 0, i = 1, 2, \dots, N$, i.e., \[R^i_\pi + ( \beta P_\pi - I ) \v^i = 0, i = 1, 2, \dots, N.\] The trajectories of the ODE (\ref{eqn:v-vector-ode}) converge to the above point starting from any initial condition in lieu of the above.
\end{proof}

For a given $\pi$, the updates of $\v$ in equation (\ref{eqn:off-sgsp-v}) (OFF-SGSP) can be seen as Euler discretization of the system of ODEs (\ref{eqn:v-ode}). We now show that $\v_n$ in equation (\ref{eqn:off-sgsp-v}) of OFF-SGSP converges to $\v_\pi$ as given in equation (\ref{eqn:v_pi}). While the following claim is identical for both OFF-SGSP/ON-SGSP, the proofs are quite different. In the former case, it amounts to proving value iteration converges (a standard result in dynamic programming), while the latter case amounts to proving a stochastic approximation variant of value iteration converges (also a standard result in RL).

\begin{proposition}
\label{lemma:off-sgsp-v-converge}
For a given $\pi$, i.e., with $\pi_n^i \equiv \pi^i$, updates of $\v$ governed by (\ref{eqn:off-sgsp-v}) (OFF-SGSP) satisfy $\v_n \rightarrow \v_\pi$, as $n \rightarrow \infty$, where $\v_\pi$ is the globally asymptotically stable equilibrium point of the system of ODEs (\ref{eqn:v-ode}).
\end{proposition}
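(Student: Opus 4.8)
The plan is to recognize that, because OFF-SGSP is model-based, the critic recursion \eqref{eqn:off-sgsp-v} carries no stochastic noise and in fact reduces to a damped (relaxed) version of value iteration, whose convergence can be established by a direct sup-norm contraction argument. First I would collapse the inner sum over actions: since $\sum_{a^i \in \A^i(x)} \pi^i(x,a^i)=1$ and $g^i_{x,a^i}(\v^i,\pi^{-i})=Q^i_{\pi^{-i}}(x,a^i)-v^i(x)$, the increment $\sum_{a^i}\pi^i(x,a^i) g^i_{x,a^i}(\v^i_n,\pi^{-i})$ equals $r^i(x,\pi)+\beta\sum_{y}p(y|x,\pi)v^i_n(y)-v^i_n(x)$, which is exactly the right-hand side of the ODE \eqref{eqn:v-ode}. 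Hence, writing the recursion in vector form, $\v^i_{n+1}=(1-c(n))\v^i_n+c(n)\bigl(R^i_\pi+\beta P_\pi\v^i_n\bigr)$, i.e.\ a convex combination of the current iterate and one application of the Bellman operator $L^i(\v)=R^i_\pi+\beta P_\pi\v$ (for $c(n)\le 1$).

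Next, I would track the error $e_n:=\v^i_n-\v^i_\pi$, where $\v^i_\pi$ is the fixed point from Lemma \ref{lemma:v_pi}, satisfying $\v^i_\pi=R^i_\pi+\beta P_\pi\v^i_\pi$. Subtracting this fixed-point identity from the recursion gives $e_{n+1}=\bigl[(1-c(n))I+c(n)\beta P_\pi\bigr]e_n$. Because $P_\pi$ is stochastic, $\|P_\pi e_n\|_\infty\le\|e_n\|_\infty$, and since $c(n)\to 0$ (a consequence of square-summability in Assumption \ref{assumption:step-sizes}) eventually $0\le 1-c(n)\le 1$, so taking the sup-norm yields $\|e_{n+1}\|_\infty\le\bigl[1-c(n)(1-\beta)\bigr]\|e_n\|_\infty$. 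Iterating from some index $n_0$, $\|e_n\|_\infty\le\prod_{k}\bigl[1-c(k)(1-\beta)\bigr]\|e_{n_0}\|_\infty$; since $1-\beta>0$ and $\sum_n c(n)=\infty$, the infinite product diverges to $0$, forcing $\|e_n\|_\infty\to 0$, i.e.\ $\v^i_n\to\v^i_\pi$ for every agent $i$. Equivalently, one could appeal to the ODE method directly: \eqref{eqn:off-sgsp-v} is an Euler discretization of the linear ODE \eqref{eqn:v-ode}, whose unique globally asymptotically stable equilibrium is $\v_\pi$ by Lemma \ref{lemma:v_pi}, so convergence follows from the standard deterministic stochastic-approximation/ODE-tracking result.

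I do not anticipate a genuine obstacle here, since the substantive work---the linearity and global asymptotic stability of the limiting ODE---is already carried out in Lemma \ref{lemma:v_pi}. The only points requiring care are bookkeeping ones: verifying that the action-weighted sum of Bellman errors collapses to the single averaged Bellman update (this is where holding $\pi$ fixed is used), and handling the finitely many initial indices where $c(n)$ may exceed $1$, so that the convex-combination interpretation and the contraction bound apply only on the tail; the iterates over the initial segment remain bounded and thus do not affect the limit. I would favour the direct contraction route over the ODE route because, with the model known, the recursion is purely deterministic and the $\beta$-contraction of $L^i$ makes the argument self-contained.
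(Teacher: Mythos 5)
Your proof is correct, but it takes a genuinely different route from the paper's. The paper proves this proposition by verifying assumptions (A1) and (A2) of the Borkar--Meyn result (\citet[Theorem 2.2]{borkar2000ode}): it notes that $h(\v^i)=R^i_\pi+(\beta P_\pi - I)\v^i$ is Lipschitz, that the scaled limit $h_\infty(\v^i)=(\beta P_\pi-I)\v^i$ has the origin as its unique globally asymptotically stable equilibrium (via the eigenvalue argument of Lemma \ref{lemma:v_pi}), and that the noise assumption is vacuous since \eqref{eqn:off-sgsp-v} is deterministic; convergence to $\v_\pi$ then follows from that theorem. You instead exploit the determinism directly: after the same collapse of the action-weighted Bellman errors (which the paper performs implicitly by identifying \eqref{eqn:off-sgsp-v} as an Euler discretization of \eqref{eqn:v-ode}), you observe that the error $e_n=\v^i_n-\v^i_\pi$ satisfies the exact linear recursion $e_{n+1}=[(1-c(n))I+c(n)\beta P_\pi]e_n$, and the stochasticity of $P_\pi$ gives the sup-norm bound $\|e_{n+1}\|_\infty\le[1-c(n)(1-\beta)]\|e_n\|_\infty$ on the tail where $c(n)\le 1$, whence $\sum_n c(n)=\infty$ forces $\|e_n\|_\infty\to 0$. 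Your argument is more elementary and self-contained: it bypasses the Borkar--Meyn stability machinery entirely, handles boundedness of the iterates for free, and yields an explicit decay rate $\prod_k[1-c(k)(1-\beta)]$. What the paper's route buys is uniformity with the rest of its convergence analysis --- the same ODE-tracking template is reused for the noisy TD critic of ON-SGSP in Proposition \ref{thm:onsgsp-v}, where a pure contraction argument would no longer suffice without additional martingale estimates. Both arguments rest on Lemma \ref{lemma:v_pi} only to identify the limit $\v_\pi$; your contraction step independently re-derives its stability content.
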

\begin{proof}
We verify here assumptions (A1) and (A2) of \citet{borkar2000ode} in order to use their result \citep[Theorem 2.2]{borkar2000ode}. Let $h(\v^i) = R^i_\pi + ( \beta P_\pi - I ) \v^i$. Since $h(\v^i)$ is linear in $\v^i$, it is Lipschitz continuous. Let $h_r(\v^i) = \frac{h(r\v^i)}{r}$ for a scalar real number, $r > 0$. It is easy to see that $h_r(\v^i) = \frac{R^i_\pi}{r} + (\beta P_\pi - I) \v^i$. Now, $h_\infty(\v^i) = \lim\limits_{r \rightarrow \infty} h_r(\v^i) = (\beta P_\pi - I) \v^i$. Now since all eigenvalues of  $(\beta P_\pi - I)$ have negative real parts, the ODE $\dfrac{d \v^i}{d t} = h_\infty(\v^i)$ has the origin as its unique globally asymptotically stable equilibrium. Further, as shown in Lemma \ref{lemma:v_pi}, $\v_\pi^i$ is the unique globally asymptotically stable equilibrium for the ODE (\ref{eqn:v-vector-ode}).  Assumption (A1) of \citet{borkar2000ode} is thus satisfied. Since the updates of $\v$ in equation (\ref{eqn:off-sgsp-v}) do not have any noise term in them, assumption (A2) of \
citet{
borkar2000ode} is 
trivially satisfied. Thus by \citet[Theorem 2.2]{borkar2000ode}, $\v_n$ in equation (\ref{eqn:off-sgsp-v}) converges to the globally asymptotically stable limit point $\v_\pi$ given in equation (\ref{eqn:v_pi}).
\end{proof}

Thus, on the faster time-scale $\{c(n)\}$, the updates of $\v$ obtained from (\ref{eqn:off-sgsp-v}) converge to $\v_\pi$, as given by (\ref{eqn:v_pi}).


\subsection*{Step 2: Analysis of $\pi$-recursion}
Using the converged values of $\v$ corresponding to strategy update $\pi_n$, i.e., $\v_{\pi_n}$ on the slower time-scale, we show that updates of $\pi$ converge to a limit point of the following system of ODEs:\\ For all $ a^i \in \A^i(x), x \in \S, i = 1, 2, \dots, N,$
\begin{align}
\label{eqn:pi-ode}
\dfrac{d \pi^i(x, a^i)}{dt} = \bar\Gamma \left ( -\sqrt{\pi^i(x, a^i)} \left|g^i_{x, a^i}(\v_\pi^i, \pi^{-i})\right| \bsgn\left(\dfrac{\partial f(\v_\pi, \pi)}{\partial \pi^i}\right)\right ), 
\end{align}
where $g^i_{x,a^i}(\cdot,\cdot)$ is the Bellman error (see \eqref{eq:g}), $f(\cdot,\cdot)$ is the objective in \eqref{eqn:OP} and $\bar\Gamma$ is a projection operator that restricts the evolution of the above ODE to the simplex $\mathcal{D}$ and is defined as follows:
\begin{equation}
\label{eq:projected-ode}
\bar{\Gamma}(v(x)) = \lim_{\eta\rightarrow 0} \left(\frac{\Gamma(x+\eta v(x))-x}{\eta}\right),
\end{equation}
for any continuous $v:\mathcal{D}\rightarrow {\cal R}^N$.

Let $K$ denote the limit set of the ODE \eqref{eqn:pi-ode}.
Before we analyse the $\pi$-recursion in \eqref{eqn:off-sgsp-pi}, we show that the points in $K$ that are infeasible for the optimization problem \eqref{eqn:OP} are asymptotically unstable.
In other words, each stable limit point of the ODE \eqref{eqn:pi-ode} is an SG-SP point. 

Define the set of all feasible points of the optimization problem (\ref{eqn:OP}) as follows: 
\begin{align}
 G = \left\{\pi \in L \bigg | g^i_{x, a^i}(v_\pi^i, \pi^{-i}) \le 0, \forall a^i \in \A^i(x), x \in \S, i = 1, 2, \dots, N\right\}
\end{align}
The limit set $K$ of the ODE \eqref{eqn:pi-ode} can be partitioned using the feasible set $G$ as $K = K_1 \cup K_2$ where $K_1 = K \cap G$ and $K_2 = K \setminus K_1$. 
In the following lemma, we show that the set $K_2$ is the set of locally unstable equilibrium points of (\ref{eqn:pi-ode}).

\begin{lemma}
All $\pi^* \in K_2$ are unstable equilibrium points of the system of ODEs (\ref{eqn:pi-ode}).
\label{lemma:pi-convergence}
\end{lemma}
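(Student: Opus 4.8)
The plan is to prove instability by the ordinary differential equation method: I will show that at any infeasible equilibrium $\pi^*\in K_2$ the projected flow of \eqref{eqn:pi-ode} drives a constraint-violating coordinate monotonically away from $\pi^*$, so that no trajectory started in a deleted neighbourhood can remain close.

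First I would pin down the structure of points in $K_2$ using the fact that on the slow timescale the value is equilibrated, $v=v_\pi=[I-\beta P_\pi]^{-1}R_\pi$ (Lemma \ref{lemma:v_pi}). Substituting $v_\pi$ into $Q^i_{\pi^{-i}}$ and invoking the fixed-point relation $v_\pi^i=R_\pi^i+\beta P_\pi v_\pi^i$ gives, for every agent $i$ and state $x$, the identity $\sum_{a^i}\pi^i(x,a^i)\,g^i_{x,a^i}(v_\pi^i,\pi^{-i})=0$, and hence $f(v_\pi,\pi)\equiv 0$. Consequently a point of $K$ is feasible iff it satisfies the SG-SP/complementarity condition, and any $\pi^*\in K_2$ must possess a triple $(i_0,x_0,a_0)$ with $g^{i_0}_{x_0,a_0}(v_{\pi^*}^{i_0},\pi^{*-i_0})=:2c_0>0$ while, by the sum-zero identity together with the equilibrium condition $\dot\pi=0$, the violated action carries no probability mass, $\pi^{*i_0}(x_0,a_0)=0$ (an underplayed beneficial deviation). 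I would treat this boundary case as the principal one, noting that a putative face equilibrium with $\pi^{*i_0}(x_0,a_0)>0$ is non-stationary unless held in place by the projection, which is itself the configuration whose instability is asserted.

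Next I would analyse the local drift of the single coordinate $y(t):=\pi^{i_0}(x_0,a_0)(t)$. Because $\pi\mapsto v_\pi$ is continuous (a matrix-inverse map) and $g^i$ is continuous, there is a neighbourhood $U$ of $\pi^*$ on which $g^{i_0}_{x_0,a_0}(v_\pi^{i_0},\pi^{-i_0})\ge c_0>0$. On $U$ the raw drift of this coordinate is $-\sqrt{y}\,\big|g^{i_0}_{x_0,a_0}\big|\,\bsgn\!\big(\partial f(v_\pi,\pi)/\partial\pi^{i_0}\big)$, and the aim is to show it equals $+\sqrt{y}\,\big|g^{i_0}_{x_0,a_0}\big|$, i.e. the descent term points toward increasing $y$. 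Since increasing a currently-zero coordinate moves strictly into the simplex $\mathcal D$, the projection $\bar\Gamma$ of \eqref{eq:projected-ode} does not obstruct this component, so the projected drift agrees with the raw drift near the boundary. Granting the sign, one obtains $\dot y\ge c_0'\sqrt{y}$ on $U$ for some $c_0'>0$; comparison with $\dot z=c_0'\sqrt z$, whose solution $z(t)=(\sqrt{z_0}+c_0't/2)^2$ leaves any fixed ball in finite time, shows that every trajectory with $y(0)>0$ escapes $U$. As $y(0)$ can be taken arbitrarily small at initialisation, this is precisely Lyapunov (Chetaev-type) instability of $\pi^*$.

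The main obstacle is the sign control above, namely establishing $\bsgn(\partial f(v_\pi,\pi)/\partial\pi^{i_0}(x_0,a_0))=-1$ throughout $U$. The dominant contribution to this partial derivative is the diagonal Bellman-error term $-g^{i_0}_{x_0,a_0}<0$ arising from the sub-problem objective $h_{x_0}$, whose gradient in $p_z$ is exactly $-g^{i_0}_{x_0,a_0}$ (cf. \eqref{eqn:OP-term-part}); this makes the desired sign immediate for the decoupled sub-problem used in Proposition \ref{proposition:descent-direction}. The difficulty is that, viewed through the full coupled objective $f$, the partial also absorbs the other agents' one-step values and the $\pi$-dependence of $v_\pi$, and I would need to argue that these game-coupling terms cannot overturn the sign near $\pi^*$ (for instance by applying the sub-problem descent coordinatewise, or by taking the dead-zone width $\nu$ of $\bsgn$ small relative to $c_0$). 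A secondary technicality is the careful treatment of the projected dynamics $\bar\Gamma$ at the simplex boundary together with the simultaneous slow drift of $v_\pi$, both of which I would keep under control with the same continuity estimates that define $U$.
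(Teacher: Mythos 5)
Your proposal follows essentially the same route as the paper's proof: pick a triple $(i,x,a^i)$ witnessing infeasibility, so that $g^i_{x,a^i}(v^i_{\pi^*},\pi^{*-i})>0$ while $\pi^{*i}(x,a^i)=0$; use continuity of $\pi\mapsto v_\pi$ (Lemma \ref{lemma:v_pi}) and of $g$ to obtain a ball around $\pi^*$ on which the violated Bellman error stays positive; note that the projection $\bar\Gamma$ does not block an increase of a coordinate sitting at the boundary value $0$; and conclude that the drift pushes that coordinate away from $\pi^*$. Your comparison argument with $\dot z=c_0'\sqrt{z}$ (finite-time escape from any fixed ball) is a more rigorous, Chetaev-style rendering of the paper's one-line conclusion that ``$\pi^i(x,a^i)$ will increase when moving away from $\pi^*$,'' and your preliminary observation that $\sum_{a^i}\pi^i(x,a^i)\,g^i_{x,a^i}(v^i_\pi,\pi^{-i})=0$ once $v=v_\pi$ matches how the paper isolates the zero-mass violated action.

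The obstacle you flag --- establishing $\bsgn\bigl(\partial f(v_\pi,\pi)/\partial\pi^i(x,a^i)\bigr)=-1$ near $\pi^*$ --- is genuine, and it is precisely the step the paper dispatches with the single word ``consequently.'' By Lemma \ref{lemma:grad-f}, $\partial f/\partial\pi^i(x,a^i)=-\sum_{j=1}^N g^j_{x,a^i}$, so positivity of the diagonal term $g^i_{x,a^i}$ does not by itself make the derivative negative: the cross terms $g^j_{x,a^i}$ for $j\ne i$ are not sign-controlled, since the constraints of \eqref{eqn:OP} bound agent $j$'s Bellman error only along agent $j$'s own action coordinates, not along $a^i$. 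The paper gives no argument that these coupling terms cannot overturn the sign, so you have not missed an idea that the paper supplies; you have correctly located the weakest link in the published proof. Note also that your fallback of shrinking the dead-zone width $\nu$ of $\bsgn$ would not repair this, since the difficulty is the sign of the true partial derivative, not the width of the smoothing interval.
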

\begin{proof}
For any $\pi^* \in K_2$, there exists some $a^i \in \A^i(x), x \in \S, i \in \{ 1, 2, \dots, N \}$, such that $g^i_{x, a^i}(\v^i_\pi, \pi^{-i}) > 0$ and $\pi^i(x, a^i) = 0$ because $K_2$ is not in the feasible set $G$. Let $B_\delta(\pi^*) = \left \{ \pi \in L | \thinspace \|\pi - \pi^*\| < \delta \right \}$. Choose $\delta > 0$ such that $g^i_{x, a^i}(\v^i_\pi, \pi^{-i}) > 0$ for all $\pi \in B_\delta(\pi^*) \setminus K$ and consequently $\dfrac{\partial f(\v_\pi, \pi)}{\partial \pi^i} <0$. 

So, $\bar\Gamma\left(-\sqrt{\pi^i(x, a^i)} \left|g^i_{x, a^i}(\v^i_\pi, \pi^{-i})\right|\sgn\left(\dfrac{\partial f(\v_\pi, \pi)}{\partial \pi^i}\right)\right) > 0$ for any $\pi \in B_\delta(\pi^*) \setminus K$ which suggests that $\pi^i(x, a^i)$ will increase when moving away from $\pi^*$. Thus, $\pi^*$ is an unstable equilibrium point of the system of ODEs (\ref{eqn:pi-ode}).
\end{proof}

\begin{remark}\label{remark:sign}\textbf{\textit{(On the sign function)}}
Recall that $\bsgn$ was employed since the normal $sgn()$ function is discontinuous. 
Since $\bsgn$ can result in the value $0$, one can no longer conclude that $\sqrt{\pi^*} g =0$ for the points in the equilibrium set $K$. Note that the former condition (coupled with feasibility) implies it is an SG-SP point. A naive fix would be to change OFF-SGSP/ON-SGSP to repeat an action if $\bsgn(\cdot)$ returned $0$. This would ensure that there are no spurious points in the set $K$ due to $\bsgn$ being $0$. 
Henceforth, we shall assume that there are no such $\bsgn$ induced spurious limit points in the set $K$.
\end{remark}

\begin{lemma}
\label{lemma:complementary-pi-g}
For all $a^i \in \A^i(x), x \in \S$ and $i = 1, 2, \dots, N$, 
\begin{align}
\label{eqn:complementary-pi-g}
\pi \in K \Rightarrow \pi \in L \text{ and }\sqrt{\pi^i(x, a^i)} g^i_{x, a^i}(v_\pi^i, \pi^{-i}) = 0, 
\end{align}
where $L = \left \{ \pi | \pi(x)\text{ is a probability vector over }\A^i(x), \forall x \in \S \right \}.$
\end{lemma}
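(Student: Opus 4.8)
The plan is to establish the two assertions separately and to reduce the complementarity relation to a coordinate-wise statement. The membership $\pi\in L$ is immediate: the operator $\bar\Gamma$ confines every trajectory of \eqref{eqn:pi-ode} to the simplex $\mathcal{D}$, and since $\mathcal{D}$ is precisely the set of probability vectors over $\A^i(x)$ (recall $d_1=1-\sum_{j\ge2}d_j$), its limit set $K$ is contained in $L$.

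For the complementarity relation I would first record a consistency identity valid at every $\pi\in K$. Since the faster timescale equilibrates the critic to $\v_\pi$ (Lemma \ref{lemma:v_pi}), the policy-evaluation Bellman equation $v^i_\pi(x)=\sum_{a^i\in\A^i(x)}\pi^i(x,a^i)Q^i_{\pi^{-i}}(x,a^i)$ holds, so that
\begin{equation*}
\sum_{a^i\in\A^i(x)}\pi^i(x,a^i)\,g^i_{x,a^i}(\v^i_\pi,\pi^{-i})=0,\qquad \forall x\in\S,\ i=1,\dots,N.
\end{equation*}
On the feasible part $K_1=K\cap G$ this already settles the claim: there every $g^i_{x,a^i}\le0$, so each summand $\pi^i(x,a^i)g^i_{x,a^i}$ is non-positive, and a vanishing sum of non-positive terms forces each term to be zero, whence $\sqrt{\pi^i(x,a^i)}\,g^i_{x,a^i}=0$.

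For a general limit point I would argue from the stationarity of \eqref{eqn:pi-ode}, namely that the projected right-hand side vanishes at every coordinate. Fix $(i,x,a^i)$. If $\pi^i(x,a^i)=0$ then $\sqrt{\pi^i(x,a^i)}=0$ and the claim holds trivially. If $\pi^i(x,a^i)>0$, the coordinate is interior to the constraint $\pi^i(x,a^i)\ge0$, so the projection leaves the corresponding descent component unaltered and stationarity gives $\sqrt{\pi^i(x,a^i)}\,|g^i_{x,a^i}(\v^i_\pi,\pi^{-i})|\,\bsgn(\partial f(\v_\pi,\pi)/\partial\pi^i)=0$; since Remark \ref{remark:sign} discards the spurious points where $\bsgn(\cdot)=0$, this yields $g^i_{x,a^i}=0$ and hence $\sqrt{\pi^i(x,a^i)}\,g^i_{x,a^i}=0$. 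Combining the two cases over all coordinates gives the claim.

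The delicate point --- which I expect to be the main obstacle --- is the assertion that on the support the projection creates no spurious equilibrium. Because $\mathcal{D}$ carries the equality constraint $\sum_{a^i}\pi^i(x,a^i)=1$, the condition $\bar\Gamma(\cdot)=0$ only places the descent vector in the normal cone of $\mathcal{D}$: it forces the per-action descent values to share a common Lagrange-type value $\mu\ge0$ on the support rather than to vanish outright, and $\mu>0$ would make every support Bellman error bounded away from zero with $|g^i_{x,a^i}|=\mu/\sqrt{\pi^i(x,a^i)}$. Ruling this out requires combining the consistency identity above with the sign structure of $\partial f(\v_\pi,\pi)/\partial\pi^i$ to contradict $\sum_{a^i}\pi^i(x,a^i)g^i_{x,a^i}=0$; this pins $\mu=0$ and recovers $g^i_{x,a^i}=0$ on the support. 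This is consistent with the behaviour already exploited in Lemma \ref{lemma:pi-convergence}, where any violation $g^i_{x,a^i}>0$ at a point of $K$ occurs precisely at an action with $\pi^i(x,a^i)=0$.
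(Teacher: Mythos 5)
Your route is essentially the paper's: membership in $L$ comes from $\bar\Gamma$, interior coordinates are handled by observing that the projection is locally inactive so stationarity of \eqref{eqn:pi-ode} forces the corresponding component to vanish, and the Bellman consistency identity $\sum_{a^i}\pi^i(x,a^i)\,g^i_{x,a^i}(\v^i_\pi,\pi^{-i})=0$ is the tool for the boundary. The paper's proof is a contradiction argument split into exactly two cases: for $0<\pi^i(x,a^i)<1$ it argues that a small single-coordinate perturbation stays in $(0,1)$, so $\bar\Gamma$ acts as the identity on that component, which must therefore be zero; for $\pi^i(x,a^i)=1$ it invokes the consistency identity, which collapses to the single term $\pi^i(x,a^i)g^i_{x,a^i}=0$. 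Your shortcut on $K_1$ (each summand is sign-definite, so a vanishing sum forces each to vanish) is a clean addition not in the paper.

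The place where you stop short is the ``delicate point'' you flag yourself. You correctly observe that if $\bar\Gamma(\cdot)=0$ is read as ``the descent vector lies in the normal cone of the simplex,'' the equality constraint only forces the per-action components to share a common multiplier $\mu$ on the support, and you assert without proof that the consistency identity together with the sign structure of $\partial f/\partial\pi^i$ pins $\mu=0$. That step is not obviously completable as stated: by Lemma \ref{lemma:grad-f}, $\partial f(\v,\pi)/\partial\pi^i(x,a^i)=-\sum_{j}g^j_{x,a^i}$ aggregates all agents' Bellman errors, so $\bsgn(\partial f/\partial\pi^i(x,a^i))$ need not equal $-\sgn(g^i_{x,a^i})$, and $\sum_{a^i}\pi^i(x,a^i)g^i_{x,a^i}=0$ is compatible with $|g^i_{x,a^i}|=\mu/\sqrt{\pi^i(x,a^i)}>0$ when the $g^i_{x,a^i}$ have mixed signs. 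The paper never confronts this case: its set $\mathcal{D}$ is parametrized by $|\A^i(x)|-1$ free coordinates with $d_1$ eliminated, and its first case tacitly treats $\bar\Gamma(\cdot)=0$ as componentwise vanishing whenever the perturbed coordinate stays in $(0,1)$, leaving only the vertex $\pi^i(x,a^i)=1$ for the consistency identity. So your concern points at a genuine looseness in the published argument (the sub-case where the sum constraint is active while several coordinates are interior), but your own resolution of it is a sketch rather than a proof; to close the lemma you must either adopt the paper's coordinatewise reading of the projection or actually carry out the $\mu=0$ argument.
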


%
%
\begin{proof}
%
%
%
The operator $\bar\Gamma$, by definition, ensures that $\pi \in L$. Suppose for some $a^i~\in~\A^i(x), x~\in~\S$ and $i \in \{1, 2, \dots, N\}$, we have $\bar\Gamma(-\sqrt{\pi^i(x, a^i)} \left|g^i_{x, a^i}(v_\pi^i, \pi^{-i})\right|\bsgn\left(\dfrac{\partial f(\v_\pi, \pi)}{\partial \pi^i}\right)) = 0$, but \\
$\sqrt{\pi^i(x, a^i)} g^i_{x, a^i}(v_\pi^i, \pi^{-i})~\ne~0$. Then, $g^i_{x, a^i}(v_\pi^i, \pi^{-i}) \ne 0$ and since $\pi \in L$, $1 \ge \pi^i(x, a^i) > 0$. We analyze this condition by considering the following two cases.
\begin{description}
\item[Case $\bm{1 > \pi^i(x, a^i) > 0}$ and $\bm{g^i_{x, a^i}(v_\pi^i, \pi^{-i}) \ne 0}$.] 
\ \\
In this case, it is possible to find a $\Delta > 0$ such that for all $\delta \le \Delta$, 
\[1 > \pi^i(x, a^i) - \delta \sqrt{\pi^i(x, a^i)} \left|g^i_{x, a^i}(v_\pi^i, \pi^{-i})\right|\bsgn\left(\dfrac{\partial f(\v_\pi, \pi)}{\partial \pi^i}\right) > 0.\] 
This implies that 
\begin{align*}
  &\bar\Gamma\left (-\sqrt{\pi^i(x, a^i)} \left|g^i_{x, a^i}(v_\pi^i, \pi^{-i})\right|\bsgn\left(\dfrac{\partial f(\v_\pi, \pi)}{\partial \pi^i}\right) \right ) \\
  &= -\sqrt{\pi^i(x, a^i)} \left|g^i_{x, a^i}(v_\pi^i, \pi^{-i})\right|\bsgn\left(\dfrac{\partial f(\v_\pi, \pi)}{\partial \pi^i}\right) \ne 0, \\
  \Rightarrow& \sqrt{\pi^i(x, a^i)} g^i_{x, a^i}(v_\pi^i, \pi^{-i})\ne 0,
\end{align*}
which contradicts the initial supposition.
\item[Case $\bm{\pi^i(x, a^i) = 1}$ and $\bm{g^i_{x, a^i}(v_\pi^i, \pi^{-i}) \ne 0}$.] 
\ \\
Since $\v_\pi^i$ is solution of the system of ODEs (\ref{eqn:v-ode}), the following should hold: \[\sum\limits_{\hat{a}^i \in \A^i(x)} \pi^i(x, \hat{a}^i) g^i_{x, \hat{a}^i}(v_\pi^i, \pi^{-i}) = \pi^i(x, a^i) g^i_{x, a^i}(v_\pi^i, \pi^{-i}) = 0.\] This again leads to a contradiction.
\end{description}
The result follows.
\end{proof}

%

In order to prove Theorem \ref{thm:off-sgsp-main}, we require the well-known Kushner-Clark lemma (see \cite[pp. 191-196]{kushner-clark}). For the sake of completeness, we recall this result below.

\begin{theorem}\textbf{\textit{(Kushner-Clark lemma)}}
\label{thm:kc}
Consider the following recursion in $d$-dimensions:
\begin{equation}
\label{eq:kush-cla}
x_{n+1} = \Gamma(x_{n} + b(n)(h(x_n) + \zeta_n + \beta_n)), 
\end{equation}
where $\Gamma$ projects the iterate $x_n$ onto a compact and convex set, say $\C \in \R^d$. 
The ODE associated with (\ref{eq:kush-cla}) is given by
\begin{equation}
\label{eq:kushcla-ode}
\dot{x}(t) = \bar{\Gamma}(h(x(t))),
\end{equation}
where $\bar\Gamma$ is a projection operator that keeps the ODE evolution within the set $\C$ and is defined as in \eqref{eq:projected-ode}.
We make the following assumptions:
\begin{enumerate}[\bfseries (B1)]
\item $h$ is a continuous $\R^d$-valued function.
\item  The sequence $\beta_n,n\geq 0$ is a bounded random sequence with
$\beta_n \rightarrow 0$ almost surely as $n\rightarrow \infty$.
\item The step-sizes $b(n),n\geq 0$ satisfy
$  b(n)\rightarrow 0 \mbox{ as }n\rightarrow\infty \text{ and } \sum_n b(n)=\infty.$
\item $\{\zeta_n, n\ge 0\}$ is a sequence such that for any $\epsilon>0$,
\[ \lim_{n\rightarrow\infty} P\left( \sup_{m\geq n}  \left\|
\sum_{i=n}^{m} b(i) \zeta_i\right\| \geq \epsilon \right) = 0. \]
\end{enumerate} 
Suppose that the ODE \eqref{eq:kushcla-ode} has a compact set $K^*$ as its set of asymptotically stable equilibrium points.
Then, $x_n$ converges almost surely to $K^*$ as $n\rightarrow\infty$.
\end{theorem}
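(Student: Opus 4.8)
The plan is to establish this classical result via the ODE method for stochastic approximation, exploiting the fact that the projection $\Gamma$ onto the compact convex set $\C$ renders the iterates $\{x_n\}$ automatically bounded, so that no separate stability/boundedness argument is required. First I would set up a continuous-time interpolation of the discrete recursion \eqref{eq:kush-cla}: define the time instants $t_n = \sum_{k=0}^{n-1} b(k)$, which diverge to $\infty$ by (B3), and let $\bar{x}(\cdot)$ be the piecewise-linear interpolation satisfying $\bar{x}(t_n) = x_n$. The objective is then to show that the shifted trajectories $\bar{x}(t_n + \cdot)$ asymptotically track solutions of the projected ODE \eqref{eq:kushcla-ode}, i.e. that the interpolation is an asymptotic pseudo-trajectory of the projected flow.

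The central step, and the \textbf{main obstacle}, is controlling the aggregate effect of the perturbations over any finite time window $[t_n, t_n + T]$. Writing each increment as $b(k)(h(x_k) + \zeta_k + \beta_k)$, I would argue that (i) the bias contribution $\sum_k b(k)\beta_k$ over the window vanishes as $n \to \infty$, since $\beta_k \to 0$ almost surely by (B2) while the total step-mass $\sum_k b(k)$ over a fixed-length window is $O(T)$; and (ii) the noise contribution $\sum_k b(k)\zeta_k$ vanishes uniformly over the window by assumption (B4). The condition (B4) is tailored precisely for this purpose: it replaces the usual martingale square-summability estimates, and verifying that it forces the cumulative noise to be negligible uniformly across all windows $[t_n, t_n+T]$ is the delicate part of the argument.

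With the perturbations under control, I would invoke the Arzel\`a--Ascoli theorem. Since $\{x_n\} \subset \C$ is bounded and $h$ is continuous by (B1), hence bounded on the compact $\C$, the family of interpolated trajectories is equibounded and equi-Lipschitz; thus every sequence of time-shifts admits a subsequence converging uniformly on compact intervals. By continuity of $h$ and of the directional projection $\bar\Gamma$ (as defined in \eqref{eq:projected-ode}), together with the vanishing of the perturbations established above, any such limit is a solution of the projected ODE \eqref{eq:kushcla-ode}. Consequently the limit set of $\{x_n\}$ is a nonempty, compact, connected, invariant set of \eqref{eq:kushcla-ode} contained in $\C$.

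Finally I would conclude using the asymptotic-stability hypothesis. Since $K^*$ is the compact set of asymptotically stable equilibria of \eqref{eq:kushcla-ode} and the iterates remain within $\C$, a Lyapunov (equivalently, internally chain-transitive invariant set) argument shows that the only invariant set to which the interpolated trajectory can converge is contained in $K^*$; hence $x_n \to K^*$ almost surely. Once the interpolation is shown to be an asymptotic pseudo-trajectory of the projected flow, these remaining topological-dynamical steps are standard, so the crux of the whole proof indeed rests on the noise estimate of the second paragraph.
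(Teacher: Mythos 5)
The paper offers no proof of this statement: it is quoted as the ``well-known Kushner--Clark lemma'' with a citation to Kushner and Clark (pp.~191--196), so there is no in-paper argument to compare yours against. Your sketch is the standard ODE-method proof of that cited result (piecewise-linear interpolation, window-wise control of the bias and noise terms, Arzel\`a--Ascoli, invariance of the limit set), and the overall architecture is the right one.

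Two points need more care than your sketch gives them. First, equicontinuity of the interpolated trajectories does not follow from boundedness of $h$ on $\C$ alone, because the increments also contain $b(n)\zeta_n$ and $\zeta_n$ is not assumed pointwise bounded --- only the tail sums $\sum_i b(i)\zeta_i$ are controlled by (B4). One must split off the cumulative noise as a separate process shown to converge (hence to have asymptotically negligible oscillation), and apply Arzel\`a--Ascoli to the remaining part; this is how Kushner and Clark actually organize the argument. Second, and more seriously, your closing step --- that the limit set must be contained in $K^*$ because $K^*$ is the set of asymptotically stable equilibria --- does not follow from invariance alone. The limit set of an asymptotic pseudo-trajectory is an internally chain-transitive invariant set, and such sets can live on unstable equilibria, periodic orbits, or other invariant structures disjoint from $K^*$; asymptotic stability of $K^*$ by itself excludes none of these. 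The classical theorem closes this gap by additionally assuming that the iterates enter a compact subset of the domain of attraction of $K^*$ infinitely often (or by exhibiting a global Lyapunov function whose critical set is $K^*$). The statement as reproduced in the paper elides that hypothesis, and your proof inherits the same gap at exactly that point, so the ``standard topological-dynamical steps'' you defer to are in fact where an additional assumption must be invoked.
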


\subsection*{Proof of Theorem \ref{thm:off-sgsp-main}}
\begin{proof}
The updates of $\pi$ given by (\ref{eqn:off-sgsp-pi}) on the slower time-scale $\{b(n)\}$ can be rewritten as: For all $a^i \in \A^i(x), x \in \S$ and $i = 1, 2, \dots, N$,
\begin{align}
\label{eq:offsgsp-pi-equiv}
\pi^i_{n + 1}(x, a^i) = \Gamma \left(\pi^i_{n}(x, a^i) - b(n) (H(\pi^i_n)  +  \beta_n) \right),
\end{align}
where 
\begin{align*}
H(\pi^i_n) = &\sqrt{\pi^i_{n}(x, a^i)} \left|g^i_{x, a^i}(v_{\pi_n}^i, \pi^{-i}_n)\right|\bsgn\left(\dfrac{\partial f(v_{\pi_n}, \pi_n)}{\partial \pi^i}\right),\\
\beta_n = & \sqrt{\pi^i_{n}(x, a^i)} \left|g^i_{x, a^i}(v_{n}^i, \pi^{-i}_n)\right|\bsgn\left(\dfrac{\partial f(v_{n}, \pi_n)}{\partial \pi^i}\right) - H(\pi^i_n).
\end{align*}
We now verify the assumptions (B1)--(B4) for the recursion above:
\begin{itemize}
 \item $H(\pi^i_n)$ is continuous since each of its components $\sqrt{\pi^i}$, $\left|g^i_{x, a^i}(\cdot,\cdot)\right|$ and $\bsgn(\cdot)$ are continuous. In particular, the continuity of $g^i_{x, a^i}$ follows from the fact that both the value function $\v^i(\cdot)$ and Q-value function $Q^i_{\pi^{-i}}(x,a^i)$ are continuous in $\pi^i$. This verifies assumption (B1).
 \item $\beta_n \rightarrow 0$ almost surely since $|v_n - v_{\pi_n}| \rightarrow 0$ as $n \rightarrow \infty$, from Theorem \ref{lemma:off-sgsp-v-converge}. Further, $\beta_n$ is bounded as each of its components are bounded. In particular, $g^i_{x, a^i}$ is bounded as we consider finite state-action spaced stochastic games, while $\pi^i$ and $\bsgn$ are trivially upper-bounded. Thus (B2) is satisfied.
 \item Assumption \ref{assumption:step-sizes} implies (B3) is satisfied.
 \item $\zeta_n$ is absent, obviating assumption (B4).
\end{itemize}
The claim now follows from Kushner-Clark lemma.
\end{proof}

%
%
\begin{remark}
\label{remark:off-sgsp-conv}
 Note that from the foregoing, the set $K$ comprises of both stable and unstable attractors and in principle from Lemma \ref{lemma:pi-convergence}, the iterates $\pi^i_n$ governed by (\ref{eqn:pi-ode}) can converge to an unstable equilibrium. In most practical scenarios, however, a gradient descent scheme is observed to converge to a stable equilibrium. In fact, the $\delta$-offset policy computed using $perturb(\cdot,\delta)$ for every $Q > 0$ iterations (see Section \ref{sect:simulation} below) for both of our algorithms ensures numerically that as $n \rightarrow \infty$, $\pi_n \nrightarrow \pi^* \in K_2$. In other words, convergence of the strategy sequence $\pi_n$ governed by \eqref{eqn:off-sgsp-pi} is to the stable set $K_1$. 
\end{remark}


\subsection{Proof of Theorem \ref{thm:off-sgsp-main} for ON-SGSP}
\label{sect:onsgsp-proof}
As mentioned earlier, the analysis for ON-SGSP changes for both timescales and we outline the crucial differences below, before presenting the detailed analysis.
\begin{description}
\item[Step 1:]  This step establishes that the TD updates along faster timescale converge to the true value functions, using standard techniques from stochastic approximation. Unlike OFF-SGSP, this step also involves the analysis of the $\xi$-recursion. The latter is a consequence of the fact that we work in a model-free setting and hence do not have access to $f(\v_n, \pi_n)$ (and hence $\dfrac{\partial f(\v_n, \pi_n)}{\partial \pi^i(x, a^i)}$ which is required for the policy update).
\item[Step 2:] This step establishes that the policy updates track the same ODE (i.e., \eqref{eqn:pi-ode}) as that of OFF-SGSP and the analysis involves an additional martingale sequence that needs to be bounded.
\end{description}
 
\subsection*{Step 1: Analysis of $\v$ and $\xi$-recursions}
\begin{proposition}
\label{thm:onsgsp-v}
For a given $\pi$, i.e., with $\pi_n^i \equiv \pi^i$, updates of $\v$ governed by (\ref{eqn:don-sgsp-v}) (ON-SGSP) satisfy $\v_n \rightarrow \v_\pi$ almost surely as $n \rightarrow \infty$, where $\v_\pi$ is the globally asymptotically stable equilibrium point of the system of ODEs (\ref{eqn:v-ode}).
\end{proposition}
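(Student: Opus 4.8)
The plan is to recognize the value update \eqref{eqn:don-sgsp-v} as a model-free, \emph{asynchronous} stochastic approximation scheme for policy evaluation (a TD(0) recursion) and to show that, for a fixed $\pi$, it tracks exactly the same linear ODE \eqref{eqn:v-ode} whose unique globally asymptotically stable equilibrium is $\v_\pi$ (by Lemma \ref{lemma:v_pi}). The essential difference from the OFF-SGSP counterpart, Proposition \ref{lemma:off-sgsp-v-converge}, is twofold: only the currently visited state $x_n$ is updated at iteration $n$ (asynchrony), and the bracketed increment uses a single sampled next state $y_n$ and reward $r^j_n$ rather than the full model (stochastic noise). First I would make the asynchrony explicit by writing the recursion, for each state $x\in\S$ and agent $j$, as $v^j_{n+1}(x) = v^j_n(x) + c(n)\,\mathbf{1}\{x_n = x\}\big(r^j_n + \beta v^j_n(y_n) - v^j_n(x)\big)$, with $v^j_{n+1}(x)=v^j_n(x)$ whenever $x_n\neq x$.

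Next I would isolate the mean field. Letting $\mathcal{F}_n$ denote the $\sigma$-field generated by the history up to iteration $n$ and conditioning on $\{x_n=x\}$, the sampled increment has conditional mean $r^j(x,\pi) + \beta \sum_{y\in\S} p(y|x,\pi) v^j_n(y) - v^j_n(x)$, which is precisely the right-hand side of the ODE \eqref{eqn:v-ode}. I would therefore decompose the bracketed term as $h^j(v_n)(x) + M^j_{n+1}$, where $h^j(v)(x) = r^j(x,\pi) + \beta\sum_{y\in\S} p(y|x,\pi)v^j(y) - v^j(x)$ is the mean field and $M^j_{n+1} := \big(r^j_n + \beta v^j_n(y_n) - v^j_n(x_n)\big) - \E\big[r^j_n + \beta v^j_n(y_n) - v^j_n(x_n)\,\big|\,\mathcal{F}_n\big]$ is a martingale-difference sequence. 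This is exactly the noise condition (A2) of \cite{borkar2000ode} that was \emph{trivially} satisfied in OFF-SGSP but now requires work: since the state-action spaces and rewards are finite and the iterates can be shown bounded, $\{M^j_{n+1}\}$ has conditional second moments bounded by an affine function of $\|v_n\|^2$, so by Assumption \ref{assumption:step-sizes} (square-summability of $c(n)$) the noise series $\sum_n c(n) M^j_{n+1}$ converges almost surely and the noise is asymptotically negligible.

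Two technical points then remain. For \emph{stability}, I would reuse the scaling argument from Proposition \ref{lemma:off-sgsp-v-converge}: the scaled limit is $h^j_\infty(v) = (\beta P_\pi - I)v$, whose ODE has the origin as its unique globally asymptotically stable equilibrium since every eigenvalue of $\beta P_\pi - I$ has negative real part; the Borkar--Meyn criterion then yields almost sure boundedness of $\{v_n\}$. For the \emph{asynchrony}, I would invoke Assumption \ref{assumption:markov-ir-pr-appendix}: irreducibility and positive recurrence of the Markov chain induced by $\pi$ guarantee that every state is visited infinitely often and that the empirical visit frequencies converge to the (strictly positive) stationary distribution. This makes the per-component effective step-sizes comparable across states, so the asynchronous stochastic approximation result (Chapter 7 of \cite{borkar2008stochastic}) applies and the recursion tracks the synchronous ODE \eqref{eqn:v-ode}. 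Combining these with the martingale convergence above and Lemma \ref{lemma:v_pi}, the iterates converge almost surely to the unique equilibrium $\v_\pi$.

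The step I expect to be the main obstacle is marrying the asynchronous updating with the martingale noise so that the single-state, single-sample recursion still provably tracks the synchronous ODE \eqref{eqn:v-ode}. In OFF-SGSP the update is synchronous and noiseless (plain value iteration), so the ODE method applies immediately; here one must simultaneously control the sampling noise via the martingale argument and verify that the positive-recurrence assumption renders the relative update frequencies well behaved, which is exactly the hypothesis that lets the asynchronous scheme inherit the limit of its synchronous counterpart.
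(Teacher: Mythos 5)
Your proposal is correct and rests on the same core idea as the paper's proof: decompose the sampled TD increment into the mean field of the ODE \eqref{eqn:v-ode} plus a conditionally centered (martingale-difference) noise term, use the square-summability of $\{c(n)\}$ from Assumption \ref{assumption:step-sizes} to show the accumulated noise is asymptotically negligible, and then conclude convergence to the unique globally asymptotically stable equilibrium $\v_\pi$ identified in Lemma \ref{lemma:v_pi}. The one place where your route genuinely differs is the treatment of asynchrony: the paper sidesteps the asynchronous-SA machinery entirely by fixing a state $x$, passing to the subsequence $\{\bar n\}$ of iterations at which $x_{\bar n}=x$, and applying the Kushner--Clark lemma (Theorem \ref{thm:kc}) along that subsequence with noise $\tilde\chi_{\bar n}$ satisfying $\E\tilde\chi_{\bar n}^2<\infty$; you instead keep the global iteration index, write the update with the indicator $\mathbf{1}\{x_n=x\}$, and invoke the asynchronous stochastic approximation results of \cite[Chapter 7]{borkar2008stochastic} together with the positive-recurrence guarantee of Assumption \ref{assumption:markov-ir-pr-appendix} on visit frequencies. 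Both are legitimate; the subsequence trick is lighter-weight, while your framework makes the role of comparable relative update frequencies explicit. You also explicitly supply a stability/boundedness argument for $\{v_n\}$ via the scaled mean field $h_\infty(v)=(\beta P_\pi - I)v$ and the Borkar--Meyn criterion, reusing the eigenvalue computation from Proposition \ref{lemma:off-sgsp-v-converge}; the paper's proof of this proposition leaves boundedness implicit, so on this point your argument is actually more complete than the one in the text.
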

\begin{proof}
Fix a state $x \in \S$. Let $\{\bar{n}\}$ represent a sub-sequence of iterations in ON-SGSP when the state is $x \in \S$. Also, let $Q_n = \left \{ \bar n : \bar n < n \right \}$. For a given $\pi$, the updates of $\v$ on the faster time-scale $\{c(n)\}$ given in equation (\ref{eqn:don-sgsp-v})  can be re-written as
\[v^i_{\bar{n} + 1}(x) = v^i_{\bar{n}}(x) + c(\bar n) \left [ J(v^i_{\bar n}) + \tilde\chi_{\bar{n}} \right ],\]
where 
\begin{align*}
J(v^i_{\bar n})=& \sum\limits_{a^i \in \A^i(x)} \pi^i(x, a^i) g^i_{x, a^i}(\v^i_{\bar{n}}, \pi^{-i}), \text{ and }\\
\tilde\chi_{\bar{n}} =& \left(r^i_n \!+\! \beta v^i_n(y_n) \!- \!v^i_n(x_n)\right) - \sum\limits_{a^i \in \A^i(x)} \pi^i(x, a^i) g^i_{x, a^i}(\v^i_{\bar{n}}, \pi^{-i}).
\end{align*}
Using arguments as before, it is easy to see that $J(v^i_{\bar n})$ is continuous, $\tilde\chi_{\bar{n}}$ is such that $\E \tilde\chi_{\bar{n}}^2 < \hat C <  \infty$. 
Thus, 
\begin{align*}
 \lim_{\bar n\rightarrow\infty} P\left( \sup_{m\geq \bar n}  \left\|
\sum_{l=\bar n}^{m} c(l) \tilde\chi_{l}\right\| \geq \epsilon \right) \le \dfrac{\hat C}{\epsilon^2} \lim_{\bar n\rightarrow\infty} \sum_{l=\bar n}^{\infty}  c(l)^2 =0.
\end{align*}
For the last equality, we have used the fact that the step-sizes are square-summable (see assumption \ref{assumption:step-sizes}).
Thus. all the assumptions of Kushner-Clark Lemma (see Theorem \ref{thm:kc} above) are satisfied and we can conclude that $v_n$ governed by \eqref{eqn:don-sgsp-v} converges to the globally asymptotically stable limit point $\v_\pi$ (see equation (\ref{eqn:v_pi})) of the system of ODEs (\ref{eqn:v-ode}). 
\end{proof}
Before establishing the convergence of the gradient estimation recursion, i.e., \eqref{eqn:don-sgsp-q}, we require the following technical result.
\begin{lemma}
\label{lemma:grad-f}
\begin{align}
\dfrac{\partial f(\v, \pi)}{\partial \pi^i(x, a^i)} =&  - \sum\limits_{j=1}^N g^j_{x,a^i}(v^j,\pi^{-i}), \text{ where }
\label{eq:grad-f}\\
g^j_{x, a^i}(\v^j, \pi^{-i})=&\bar{r}^j(x,\pi^{-i}, a^i) + \beta \sum \limits_{y \in U(x)} \bar{p}^j(y|x,\pi^{-i}, a^i) v^j(y) -v^j(x).\label{eq:gj}
\end{align}
\end{lemma}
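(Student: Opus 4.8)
The plan is to differentiate the objective directly, exploiting the bilinear structure in $(\pi,\text{$Q$-term})$ already exposed in the proof of Proposition~\ref{proposition:descent-direction}, where
\[
f(\v,\pi)=\sum_{j=1}^{N}\sum_{s\in\S}\sum_{a^{j}\in\A^{j}(s)}\pi^{j}(s,a^{j})\left[-\,g^{j}_{s,a^{j}}(\v^{j},\pi^{-j})\right].
\]
Fixing the differentiation variable $\pi^{i}(x,a^{i})$ and keeping $\v$ frozen, I would first observe that only the state $s=x$ can contribute: each $g^{j}_{s,a^{j}}(\v^{j},\pi^{-j})$ is assembled from the per-state policy components $\pi^{l}(s,\cdot)$ and from $\v^{j}$, so it is functionally independent of $\pi^{i}(x,\cdot)$ whenever $s\neq x$. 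Within the block $s=x$ the derivative then splits into a diagonal part $j=i$ and off-diagonal parts $j\neq i$.

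For the diagonal part I would note that $g^{i}_{x,a^{i}}(\v^{i},\pi^{-i})$ depends only on $\pi^{-i}$ and $\v^{i}$, never on $\pi^{i}$; hence $\pi^{i}(x,a^{i})$ occurs purely as the linear coefficient multiplying $-g^{i}_{x,a^{i}}$, contributing exactly $-g^{i}_{x,a^{i}}(\v^{i},\pi^{-i})$, which is the $j=i$ summand in \eqref{eq:grad-f}. For each off-diagonal $j\neq i$ the coupling runs through $\pi^{-j}$, which contains $\pi^{i}$. Here I would write $g^{j}_{x,a^{j}}(\v^{j},\pi^{-j})=\sum_{a^{-j}}\pi^{-j}(x,a^{-j})\big[r^{j}(x,a)+\beta\sum_{y}p(y|x,a)v^{j}(y)\big]-v^{j}(x)$, differentiate, and note that $\partial/\partial\pi^{i}(x,a^{i})$ strips the factor $\pi^{i}(x,a^{i})$ out of the product $\pi^{-j}(x,a^{-j})$ and pins agent $i$'s action to $a^{i}$; after multiplying by $\pi^{j}(x,a^{j})$ and summing over $a^{j}$, the surviving product over the remaining agents reassembles exactly into the average over $\pi^{-i}$ with agent $i$ fixed at $a^{i}$. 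This reproduces $\bar r^{j}(x,\pi^{-i},a^{i})+\beta\sum_{y\in U(x)}\bar p^{j}(y|x,\pi^{-i},a^{i})v^{j}(y)$, the $Q$-type block of $g^{j}_{x,a^{i}}(\v^{j},\pi^{-i})$ in \eqref{eq:gj}.

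The step I expect to be the main obstacle is the bookkeeping of the $v^{j}(x)$ terms. The off-diagonal reassembly yields $\bar r^{j}+\beta\sum\bar p^{j}v^{j}$, whereas $g^{j}_{x,a^{i}}$ in \eqref{eq:gj} is this same quantity minus $v^{j}(x)$, so each off-diagonal agent appears to leave behind an extra $v^{j}(x)$. I would resolve this by working on the constraint set: the $v^{j}(x)$ contributions enter the objective only through the action-independent block $v^{j}(x)\sum_{a^{j}}\pi^{j}(x,a^{j})=v^{j}(x)$, i.e. as a common additive shift across all actions $a^{i}\in\A^{i}(x)$ of agent $i$ in state $x$. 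Since the policy iterate is confined to the simplex by $\Gamma$ (equivalently, its ODE evolves under $\bar\Gamma$ within $\mathcal{D}$), only the components of the gradient tangent to $\{\sum_{a^{i}}\pi^{i}(x,a^{i})=1\}$ drive the dynamics, and such a common shift lies along the constraint normal and is annihilated; this lets me identify the effective gradient with the clean expression $-\sum_{j=1}^{N}g^{j}_{x,a^{i}}(\v^{j},\pi^{-i})$. Making this reduction rigorous---arguing that the gradient is needed only modulo additive constants in $a^{i}$---is the delicate point, whereas the averaging identity for $j\neq i$ is otherwise routine. As a sanity check I would specialise to $N=2$ and match the two-player derivation underlying \citep{filar-vrieze}.
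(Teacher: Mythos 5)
Your proposal is correct and follows essentially the same route as the paper's proof: rewrite $f$ as the triple sum $\sum_j\sum_s\sum_{a^j}\pi^j(s,a^j)\left[-g^j_{s,a^j}\right]$ with $\v$ frozen, observe that only the block $s=x$ involves $\pi^i(x,\cdot)$, read off $-g^i_{x,a^i}(\v^i,\pi^{-i})$ from the $j=i$ term (since $g^i$ is independent of $\pi^i$), and for $j\ne i$ strip $\pi^i(x,a^i)$ out of the product $\pi^{-j}(x,a^{-j})$ so that the remaining sum reassembles into the expectation over $\pi^{-i}$ with agent $i$'s action pinned at $a^i$. The only point of divergence is the $v^j(x)$ bookkeeping you flag as delicate. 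The paper simply carries the term $-v^j(x)$ inside the differentiated product for each $j\ne i$, which amounts to first homogenizing $v^j(x)=\sum_{a^{-j}}\pi^{-j}(x,a^{-j})\,v^j(x)$ on the simplex and then differentiating; under that extension of $f$ off the simplex, \eqref{eq:grad-f} is an exact identity and no further argument is needed. You instead differentiate the unhomogenized form and then dispose of the leftover $\sum_{j\ne i}v^j(x)$ by noting it is constant over $a^i\in\A^i(x)$ and hence lies along the normal of the constraint $\sum_{a^i}\pi^i(x,a^i)=1$. That reduction is fine for the projected dynamics under $\bar\Gamma$, but be aware that the algorithms feed this partial derivative through $\bsgn(\cdot)$ rather than through a tangential projection, and an additive constant can change the sign; so the homogenization reading (which is what the paper's displayed computation implicitly uses) is the cleaner way to make the lemma hold verbatim, and it costs you nothing since it only uses the normalization that already holds on the feasible set.
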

\begin{proof}
Let $a\in\A(x)$ denote the aggregate action vector.
Then, we have
\begin{align*}
\dfrac{\partial f(\v, \pi)}{\partial \pi^i(x, a^i)} =&  - g^i_{x,a^i}(v^i,\pi^{-i})- \sum\limits_{j\ne i} \sum\limits_{a^j\in \A^j(x)}
\pi^j(x,a^j)
\bigg( \sum\limits_{k\ne i,j} \sum\limits_{a^k\in \A^k(x)} \prod\limits_{k\ne i,j} \pi^k(x,a^k)\big( r^j(x,a)  \\
&\hspace{12em}+\beta \sum\limits_{y \in U(x)} p(y | x, a) v^j (y) - v^j(x) \big)\bigg)\\
& = - g^i_{x,a^i}(v^i,\pi^{-i})- \sum\limits_{j\ne i}\bigg(\sum\limits_{k\ne i} \sum\limits_{a^k\in \A^k(x)}
  \prod\limits_{k\ne i} \pi^k(x,a^k)\big( r^j(x,a) \\
&\hspace{12em}  + \beta \sum\limits_{y \in U(x)} p(y | x, a) v^j (y) - v^j(x) \big)\bigg).
\end{align*}
Now, from the definition of $g^j_{x,a^i}(v^i,\pi^{-i})$ in \eqref{eq:gj}, it is easy to see that 
\begin{align*}
\dfrac{\partial f(\v, \pi)}{\partial \pi^i(x, a^i)} =  - \sum\limits_{j=1}^N g^j_{x,a^i}(v^j,r^j,\pi^{-i}).
\end{align*}
\end{proof}

Recall that the gradient estimation recursion is as follows:
\begin{align*}
\xi^i_{n+1}(x_n, a^i_n) = \xi^i_{n}(x_n, a^i_n) + c(n) \bigg( \sum\limits_{j=1}^N \big( r^j_n + \beta v^j_n(y_n) - v^j_n(x_n) \big) -  \xi^i_n(x_n, a^i_n) \bigg).
\end{align*}
The following theorem establishes that $\xi^i_{n }(x, a^i)$ converges to $-\dfrac{\partial f(\v_n, \pi_n)}{\partial \pi^i(x, a^i)}$ in the long run.
\begin{proposition}
\label{thm:onsgsp-q}
$\left\|\xi^i_n(x, a^i) - \left(- \dfrac{\partial f(\v_n, \pi_n)}{\partial \pi^i(x, a^i)}\right) \right\| \rightarrow 0$ as $n \rightarrow \infty$ almost surely.
\end{proposition}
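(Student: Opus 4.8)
The plan is to treat the $\xi$-recursion \eqref{eqn:don-sgsp-q} as a stochastic approximation on the faster timescale, exactly in the spirit of Proposition \ref{thm:onsgsp-v}, and to show that it tracks a stable scalar linear ODE whose equilibrium is precisely the desired gradient. First I would fix a pair $(x, a^i)$ and, invoking irreducibility and positive recurrence (Assumption \ref{assumption:markov-ir-pr-appendix}), restrict attention to the subsequence $\{\bar n\}$ of iterations at which $x_n = x$ and agent $i$ plays $a^i_n = a^i$; this subsequence is infinite almost surely. Along it, I would rewrite \eqref{eqn:don-sgsp-q} in the Kushner-Clark form $\xi^i_{\bar n + 1}(x, a^i) = \xi^i_{\bar n}(x, a^i) + c(\bar n)\big(J(\xi^i_{\bar n}) + \tilde\chi_{\bar n} + \beta_{\bar n}\big)$, separating a mean field, a martingale-difference noise, and a vanishing bias.

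The crucial computation, and the main obstacle, is identifying the mean field. Conditioning on the current state $x$, agent $i$'s action $a^i$, the policy $\pi^{-i}$ and the value estimate $\v_n$, the other agents sample $a^{-i}_n \sim \pi^{-i}(x,\cdot)$, so the one-step sample satisfies
\[
\E\bigg[\sum_{j=1}^N\big(r^j_n + \beta v^j_n(y_n) - v^j_n(x)\big)\,\Big|\, x_n=x,\, a^i_n=a^i,\, \v_n\bigg] = \sum_{j=1}^N g^j_{x,a^i}(v^j_n, \pi^{-i}),
\]
where marginalizing the reward and the transition kernel over $\pi^{-i}$ produces exactly the $\bar r^j$ and $\bar p^j$ terms appearing in \eqref{eq:gj}. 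By Lemma \ref{lemma:grad-f} this conditional mean equals $-\partial f(\v_n, \pi_n)/\partial \pi^i(x, a^i)$, the very quantity $\xi^i$ is meant to estimate. Hence the mean field is $J(\xi) = -\partial f(\v_\pi, \pi)/\partial \pi^i(x, a^i) - \xi$ (with $\pi$ held quasi-static and $\v$ replaced by its equilibrium $\v_\pi$), and the residual $\tilde\chi_{\bar n}$ is a genuine martingale difference. Carrying out this marginalization carefully, so that $\tilde\chi_{\bar n}$ has zero conditional mean rather than a hidden bias, is the step that demands the most care.

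The remaining ingredients are routine. I would place the discrepancy $\beta_{\bar n} = \sum_j\big(g^j_{x,a^i}(v^j_{\bar n},\pi^{-i}) - g^j_{x,a^i}(v^j_\pi,\pi^{-i})\big)$ into the vanishing-bias slot of the Kushner-Clark lemma: it is bounded (finite state-action spaces and bounded rewards) and tends to $0$ almost surely, since $\v_n \to \v_\pi$ by Proposition \ref{thm:onsgsp-v} and each $g^j$ is affine, hence continuous, in $\v$. The noise $\tilde\chi_{\bar n}$ verifies assumption (B4) via the same $\E\tilde\chi_{\bar n}^2 < \infty$ together with square-summability of the step-sizes (Assumption \ref{assumption:step-sizes}), exactly as in Proposition \ref{thm:onsgsp-v}, and $J$ is continuous, so (B1)--(B4) all hold.

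Finally, the associated ODE $\dot\xi = -\partial f(\v_\pi,\pi)/\partial \pi^i(x,a^i) - \xi$ is scalar and linear with the coefficient of $\xi$ equal to $-1$, so it has the unique globally asymptotically stable equilibrium $\xi^* = -\partial f(\v_\pi,\pi)/\partial\pi^i(x,a^i)$. The Kushner-Clark lemma (Theorem \ref{thm:kc}) then yields $\xi^i_n(x,a^i) \to \xi^*$ almost surely. Since $\v_n \to \v_\pi$ and $\pi$ evolves only on the slower timescale, the target $-\partial f(\v_n,\pi_n)/\partial\pi^i(x,a^i)$ itself converges to $\xi^*$, so the two agree in the limit and $\big\|\xi^i_n(x,a^i) + \partial f(\v_n,\pi_n)/\partial\pi^i(x,a^i)\big\| \to 0$, which is the claim.
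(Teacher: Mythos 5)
Your proposal is correct and follows essentially the same route as the paper: condition on the current state-action pair to identify the mean field as $\sum_j g^j_{x,a^i}$, invoke Lemma \ref{lemma:grad-f} to equate this with $-\partial f/\partial\pi^i(x,a^i)$, treat the residual as a martingale difference controlled by square-summable step-sizes, and conclude via the Kushner-Clark lemma. If anything, you are slightly more explicit than the paper in writing out the limiting linear ODE $\dot\xi = -\partial f/\partial\pi^i - \xi$ and in isolating the vanishing bias coming from $\v_{\bar n}\to\v_\pi$, but the decomposition and key ingredients are the same.
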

\begin{proof}
As in the proof of Proposition \ref{thm:onsgsp-v}, let $\{\bar{n}\}$ represent a sub-sequence of iterations in ON-SGSP when the state is $x \in \S$ and $Q_n := \left \{ \bar n : \bar n < n \right \}$. For a given $\pi$, the updates of $\xi$ on the faster time-scale $\{c(n)\}$ given in equation (\ref{eqn:don-sgsp-q}) can be re-written as
\begin{align*}
\xi^i_{\bar n + 1}(x, a^i) = (1-c(\bar n)) \xi^i_{\bar n}(x, a^i) + c(\bar n) \bigg( \sum\limits_{j=1}^N g^j_{x,a^i}(v^j_{\bar n},r^j_{\bar n},\pi^{-i}) + \xi_{\bar n} \bigg), 
\end{align*}
where $\xi_{\bar n} := \sum\limits_{j=1}^N \big( r^j_{\bar n} + \beta v^j_{\bar n}(y) - v^j_{\bar n}(x) \big) -  \sum\limits_{j=1}^N g^j_{x,a^i}(v^j_{\bar n},r^j_{\bar n},\pi^{-i})$. Let $\mathcal{F}_l := \sigma(\v_k, \xi_k, k \le l), l \ge 0$ denote an increasing family of $\sigma$-fields. By definition of $g^j_{x,a^i}$, we have 
$$E\left[r^j_{\bar n} + \beta v^j_{\bar n}(y) - v^j_{\bar n}(x)\mid \mathcal{F}_{\bar n}, \pi^i_{\bar n}(x_{\bar n},a^i_{\bar n}) \right] = \sum\limits_{j=1}^N g^j_{x,a^i}(v^j_{\bar n},r^j_{\bar n},\pi^{-i}).$$ 
Hence, $\{\xi_{\bar n}\}$ is a martingale difference sequence. 

As in Proposition \ref{thm:onsgsp-v}, define 
$\tilde M_{n} = \sum\limits_{m \in Q_n} c(m) \tilde\chi_m$. It can be easily verified that $(\tilde M_m, \mathcal{F}_m), m \ge 0$ is a square-integrable martingale sequence obtained from the corresponding martingale difference $\{\xi_m\}$. Further, from the square summability of $c(n), n \ge 0$, and assumption \ref{assumption:markov-ir-pr-appendix} which ensures that the underlying Markov chain is ergodic for any given $\pi$, it can be verified from the martingale convergence theorem that $\{\tilde M_m, m \ge 0\}$, converges almost surely.
Hence,
$|\xi_m| \rightarrow 0$ almost surely on the `natural timescale', as $m \rightarrow \infty$.  The `natural timescale' is clearly faster than the algorithm's timescale and hence $\xi_m$ can be ignored in the analysis of Bellman error recursion (\eqref{eqn:don-sgsp-q} in the main paper), see \cite[Chapter 6.2]{borkar2008stochastic} for detailed treatment of natural timescale algorithms.
The final claim follows from Kushner-Clark lemma.
\end{proof}

\subsection*{Step 2: Analysis of $\pi$-recursion}
\begin{proof}
We first re-write the update of $\pi$ as follows: For all $i = 1, 2, \dots, N$, 
\begin{align*}
\pi^i_{\bar n + 1}(x, a^i) =&  \Gamma \left(\pi^i_{\bar n}(x, a^i) - b(\bar n)\left( \sqrt{\pi^i_{\bar n}(x, a^i)} \left|g^i_{x, a^i}(v_{\pi_{\bar n}}^i, \pi^{-i}_{\bar n})\right|\bsgn\left(\dfrac{\partial f(v_{\pi_{\bar n}}, \pi_{\bar n})}{\partial \pi^i}\right) + \zeta_{\bar n}\right)\right),
\end{align*}
where 
$\zeta_{\bar n} = \sqrt{\pi^i_{\bar n}(x, a^i)} \left [ \left|\hat g^i_{x, a^i}\right| - \left|g^i_{x, a^i}(v_{\pi_{\bar n}}^i, \pi^{-i}_{\bar n})\right| \right]\bsgn\left(\dfrac{\partial f(v_{\pi_{\bar n}}, \pi_{\bar n})}{\partial \pi^i}\right)$.
In the above, we have used the converged values of the value update $v_n$ and gradient estimate $\xi_n$ and this is allowed due to timescale separation and the fact that $v_n \rightarrow \v_{\pi_n}$ (see Proposition \ref{thm:onsgsp-v}) and $\xi_{n} \rightarrow - \dfrac{\partial f(\v_n, \pi_n)}{\partial \pi^i(x, a^i)}$ (see Proposition \ref{thm:onsgsp-q}).
Now, in order to apply Kushner-Clark lemma (see Theorem \ref{thm:kc} above), it is enough if we verify that $\E \zeta_{\bar n}^2 < \infty$, since the rest of the terms are as in OFF-SGSP (which imply assumptions (B1) to (B3) in Theorem \ref{thm:kc} are verified). Now, arguing as before, it is straightforward to infer that $\E \zeta_{\bar n}^2 < \infty$, since we consider finite state-action spaces and the square of each of the quantities in the first term in $\zeta_{\bar n}$ can be upper-bounded.  
Thus, assumption (B4) in Theorem \ref{thm:kc} is verified and updates of $\pi$ in ON-SGSP converge to a stable limit point of the system of ODEs (\ref{eqn:pi-ode}).
\end{proof}


\section{Simulation Experiments}
\label{sect:simulation}
We test ON-SGSP, NashQ \citep{huwellman2003} and FFQ \citep{littman} algorithms on two general-sum game setups. We implemented Friend Q-learning variant of FFQ, as each iteration of its Foe Q-learning variant involves a computationally intensive operation to solve a linear program. 

\subsection{Single State (Non-Generic) Game}
This is a simple two-player game adopted from \cite{hart2005stochastic}, where the payoffs to the individual agents are given in Table \ref{tab:hart}. In this game, a strategy that picks $a_3$ (denoted by $(0,0,1)$) constitutes a pure-strategy NE, while a strategy that picks either $a_1$ or $a_2$ with equal probability (denoted by $(0.5,0.5,0)$) is a mixed-strategy NE. 

We conduct a stochastic game experiment where  at each stage, the payoffs to the agents are according to Table \ref{tab:hart} and the payoffs accumulate with a discount factor $\beta = 0.8$. We performed $100$ experimental runs, with each run corresponding to a length of $10000$ stages. The aggregated results from this experiment are presented in  Fig. \ref{tab:results}. It is evident that NashQ oscillates and does not converge to NE in most of the runs, while Friend Q-learning converges to a non-Nash strategy tuple in most of the runs. On the other hand, ON-SGSP converges to NE in all the iterations.  

\begin{figure*}
\centering
\begin{tabular}{c}
\begin{subfigure}[b]{0.4\textwidth}
\centering
\begin{tabular}{c|c|c|c}
\textbf{Player }$\bm{2 \rightarrow}$ & \multirow{2}{*}{$\bm{a_1}$} & \multirow{2}{*}{$\bm{a_2}$} & \multirow{2}{*}{$\bm{a_3}$}\\ 
\textbf{Player }$\bm{1}$ &&&\\
$\bm{\downarrow}$ &&&\\\hline
\multirow{2}{*}{$\bm{a_1}$} & \multirow{2}{*}{$1,0$} & \multirow{2}{*}{$0,1$} & \multirow{2}{*}{$1,0$}\\ 
 &&&\\\hline
\multirow{2}{*}{$\bm{a_2}$} & \multirow{2}{*}{$0,1$} & \multirow{2}{*}{$1,0$} & \multirow{2}{*}{$1,0$}\\ 
 &&&\\\hline
\multirow{2}{*}{$\bm{a_3}$} & \multirow{2}{*}{$0,1$} & \multirow{2}{*}{$0,1$} & \multirow{2}{*}{$1,1$}\\ 
 &&&\\
\end{tabular}
\caption{Payoff matrix.}
\label{tab:hart}
\end{subfigure}
\\[3ex]
\begin{subfigure}[b]{0.6\textwidth}
\begin{tabular}{c|c|c|c}
 & \textbf{NashQ} & \textbf{FFQ (Friend Q)} & \textbf{ON-SGSP} \\ \hline
\textbf{Oscillate or 
converge} & \multirow{2}{*}{\color{red}$\bm{95\%}$} & \multirow{2}{*}{\color{red}$\bm{40\%}$} & \multirow{2}{*}{$0\%$} \\
\textbf{to non-Nash strategy}&&\\ \hline
\multirow{2}{*}{\textbf{Converge to
$(0.5,0.5,0)$}} & \multirow{2}{*}{$2\%$} & \multirow{2}{*}{$0\%$} & \multirow{2}{*}{{\color{darkgreen}$\bm{99\%}$}} \\ 
&&\\\hline
\multirow{2}{*}{\textbf{Converge to
$(0,0,1)$}} & \multirow{2}{*}{$3\%$} & \multirow{2}{*}{$60\%$} & \multirow{2}{*}{$1\%$} \\
&&\\
\end{tabular}
\caption{Results from 100 simulation runs.}
\label{tab:results}
\end{subfigure}
\end{tabular}
\caption{Payoff matrix and simulation results for a single state non-generic two-player game}
\label{tab:hart-results}
\end{figure*}

\begin{figure}
\centering
\begin{tikzpicture}
\draw[xshift=-0.75cm,yshift=-0.75cm, step=1.5] (0,0) grid (4.5,4.5);

\draw (3,0) node[fill=red,rectangle,inner sep=5pt] {};
\draw (0,1.5) node[fill=darkgreen,rectangle,inner sep=5pt] {};

\draw[->,thick] (3, 0) -- +(135:0.75000);
\draw[->,thick] (3, 0) -- +(90:0.70);
\draw[->,thick] (3, 0) -- +(180:0.650);
\draw[->,thick] (0, 1.5) -- +(315:0.70000);
\draw[->,thick] (0, 1.5) -- +(45:0.700000);
\draw[->,thick] (0, 1.5) -- +(0:0.650000);
\draw[->,thick] (0, 1.5) -- +(90:0.6500000);
\draw[->,thick] (0, 1.5) -- +(270:0.6500000);
\end{tikzpicture}
 \caption{Stick-Together Game for $M=3$}
 \label{fig:stg}
\end{figure}
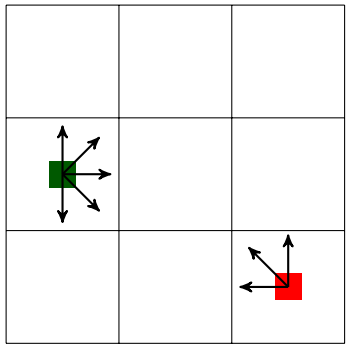

\subsection{Stick-Together Game (STG)}
We also define a simple general-sum discounted stochastic game, named ``Stick-Together Game'' or in short STG, where two participating agents located on a rectangular terrain would like to come together and stay close to each other (see Fig. \ref{fig:stg}). 
A precise description of the various components of STG is provided below:

\begin{enumerate}
\item \textbf{State Space $\S$:} The state specifies the location of both the agents on a rectangular grid of size $M\times M$. More precisely, let $O = \left \{ (x, y)| x, y \in \Z \right \}$. Denote the possible positions of an agent by $W := \left \{ s = (x, y) \in O | 0 \le x, y < M \right \}$. Then the state space is given by the Cartesian product $\S = W \times W$.
\item \textbf{Action Space $\A$:} The actions available to each agent are to either move to one of the neighboring cells or stay in the current location.  For $s \in W$, let $\|s\|_1 = |x| + |y|$ be its $L^1$ norm. Then, $A(s) = \left \{ a \in O | \|s + a\| \le 1 \right \}$ represents the actions available for an agent to move to one-step neighbouring positions of $s \in W$. The action space is then defined by $\A = \mathop\cup\limits_{s^1, s^2 \in W} A(s^1) \times A(s^2)$. Let $U(s) = \{ s' \in W | \|s' - s\|_1 \le 1 \}$ represent the set of all next states for an agent in state $s \in W$.
\item \textbf{Transition probability $p$:} We assume that state transitions of individual agents are independent. Let $q(s'|s, a^i)$ represent, for agent $i$, the probability of transition from state $s \in W$ to $s' \in W$ upon taking action $a^i \in \A^i(s)$. We define \[q(s'|s, a) = \dfrac{2^{-\|s' - a\|_1}}{\sum\limits_{s'' \in U(s)} 2^{-\|s'' - a\|_1}}.\] Then, the transition probability is given by 
$$p((s'^1, s'^2)|(s^1, s^2), (a^1, a^2)) = q(s'^1|s^1, a^1) q(s'^2|s^2, a^2).$$ 
This transition probability function has the highest value towards that next state to which the action points to.
\item \textbf{Reward $r$:} The reward for the two agents is defined as \[r^i(s^i, a^i) = 1 - e^{\|s^1 - s^2\|_1},\] for state $(s^1, s^2) \in \S$ and action $(a^1, a^2) \in A(s^1) \times A(s^2)$. Thus, the reward is zero if the distance between the two agents is zero. Otherwise, it is a negative and monotonically decreasing function with respect to the distance between the two agents.
\end{enumerate}

\paragraph{Results.}
 We first show simulation results for a small sized version of the STG game, where $M = 3$. The number of states with $M = 3$ is $|\S| = 81$. We use $\beta = 0.8$ for all our experiments. Also, we use the following step-size sequences in our experiments:
\[
\begin{array}{l}
b(n) = \left \{ \begin{array}{ll} 0.2 & \text{ for $n < 1000$,} \\ \dfrac{1}{n^{0.75}} & \text{ otherwise,} \end{array} \right .\\[2ex]
c(n) = \left \{ \begin{array}{ll} 0.1 & \text{ for $n < 1000$,} \\ \dfrac{1}{n} & \text{ otherwise.} \end{array} \right .
\end{array}
\]
It is easy to see that $\{c(n)\}$ corresponds to slower time-scale than  $\{b(n)\}$. It was observed in our experiments that, using constant step-sizes upto $n = 1000$ leads to better initial exploration and faster convergence. 

To ensure sufficient exploration of the state space (see assumption \ref{assumption:markov-ir-pr-appendix}) and also to push the policy $\pi$ out of the domain of attraction of any local equilibrium, we perturb the policy as follows: 
For every $Q > 0$ iterations, $perturb(\cdot, \delta)$ is used to derive a $\delta$-offset policy for picking actions, i.e., $\hat{\pi}^i(x)$ is used instead of $\pi^i(x)$, where
\begin{equation}
\label{eq:pi-perturb}
\hat{\pi}^i(x, a^i) = \dfrac{\pi^i(x, a^i) + \delta}{\sum \limits_{a^i \in \A^i(x)} \left (\pi^i(x, a^i) + \delta \right )}, a^i \in \A^i(x).
\end{equation}

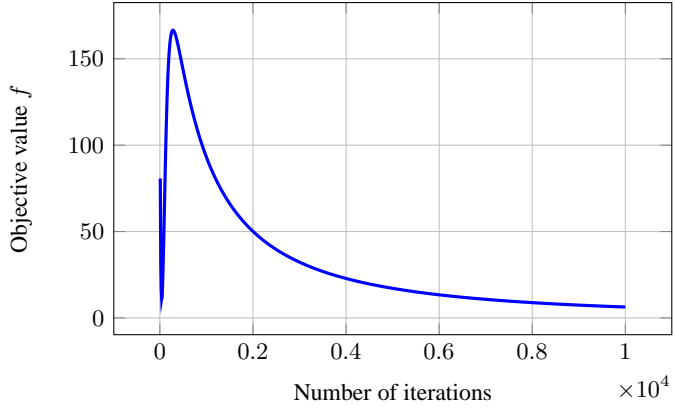
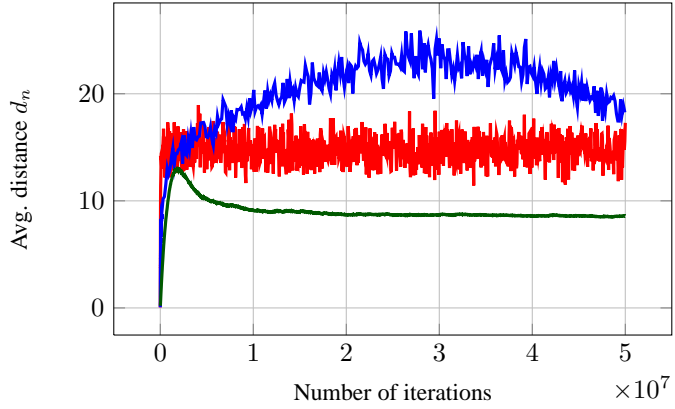
\begin{figure}[t]
\centering
\begin{tabular}{c}
\begin{subfigure}[b]{0.6\textwidth}
\begin{tikzpicture}
\begin{axis}[xlabel={{\small Number of iterations}}, ylabel={{\small Objective value $f$}},width=9cm,height=6cm,grid,no markers,tick scale binop={\times},font={\small}]
\addplot[color=blue,very thick] file {plots/off-sgsp-stg-3x3-subsampled.txt};
\end{axis}
\end{tikzpicture}
\caption{OFF-SGSP for STG with $M = 3$}
\label{fig:stg-off-m-3}
\end{subfigure}
 
\\
\begin{subfigure}[b]{0.6\textwidth}
\begin{tikzpicture}
\begin{axis}[xlabel={{\small Number of iterations}}, ylabel={{\small Avg. distance $d_n$}},width=9cm,height=6cm,grid,no markers,tick scale binop={\times},
legend style={
at={(0,0)},
anchor=north,at={(axis description cs:0.5,-0.4)},legend columns=-1},restrict x to domain=0:50000000]
\addplot[color=red,very thick] file {plots/ffq-stg-30x30-subsampled.txt};
\addlegendentry{FFQ}
\addplot[color=blue,very thick] file {plots/nashq-stg-30x30-subsampled.txt};
\addlegendentry{NashQ}
\addplot[color=darkgreen,very thick] file {plots/on-sgsp-stg-30x30-subsampled.txt};
\addlegendentry{ON-SGSP}
\end{axis}
\end{tikzpicture}
\caption{ON-SGSP for STG with $M = 30$}
\label{fig:stg-on-m-30}
\end{subfigure}

\end{tabular}
\caption{Performance of our algorithms for STG}
\end{figure}

 Fig. \ref{fig:stg-off-m-3} shows the evolution of the objective function $f$ as a function of the number of iterations for OFF-SGSP. Note that $f$ should go to zero for a Nash equilibrium point.

Fig. \ref{fig:stg-on-m-30}  shows the evolution of the distance $d_n$ (in $\ell_1$ norm) between the agents for a STG game where $M = 30$, which corresponds to $|\S| = 810,000$.  Notice that the results are shown only for the model-free algorithms: ON-SGSP, NashQ and FFQ. This is because OFF-SGSP and even the homotopy methods \citep{herings2004stationary} have exponential blow up with $M$ in their computational complexity and hence, are practically infeasible for STG with $M=30$. 

From Fig. \ref{fig:stg-on-m-30}, it is evident that following the ON-SGSP strategy, the agents converge to a $4\times4$-grid within the $30\times30$-grid. For achieving this result, ON-SGSP takes about $2 \times 10^7$ iterations, implying an average $2 \times 10^7/|\S| \approx 21$ iterations per state. 
However, NashQ gets the agents to an $8\times8$-grid after a large number of iterations ($\approx 5\times10^7$). Moreover, from Fig. \ref{fig:stg-on-m-30} it is clear that NashQ has not stabilized its strategy in the end.
Friend Q-learning gets the agents to $8\times8$-grid, by driving them to one of the corners of the $30\times30$-grid. While it takes a short number of iterations ($\approx 30000$) to achieve this, FFQ does not explore the state space well and hence, FFQ's strategy corresponding to the rest of the grid (excluding the corner to which it takes the agents) is not Nash. 

\begin{remark}
(\textbf{Runtime performance.})
We observed that to complete $5 \times 10^7$ iterations, ON-SGSP took $\approx 42$ minutes, while NashQ \citep{huwellman2003} took nearly $50$ hours, as it involves solving for Nash equilibria of a bimatrix game in each iteration. The Friend Q-learning variant of FFQ \citep{littman} took $\approx 33$ minutes. The Foe Q-learning variant of FFQ was not implemented owing to its high per-iteration complexity.
\end{remark}

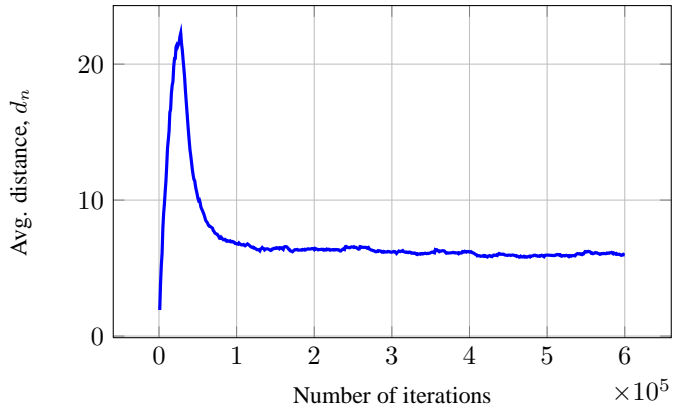
\begin{figure}
\centering
\begin{tikzpicture}
\begin{axis}[xlabel={{\small Number of iterations}}, ylabel={{\small Avg. distance, $d_n$}},width=9cm,height=6cm,grid,no markers,tick scale binop={\times}]
\addplot[color=blue,very thick] file {plots/on-sgsp-sub-stg-30x30-subsampled.txt};
\end{axis}
\end{tikzpicture} 
\caption{ON-SGSP with partial information for STG with $M = 30$}
\label{fig:sub-stg-on-m-30}
\end{figure}

\subsection*{Simple Function Approximation for STG}
While OFF-SGSP assumes full information of the game, ON-SGSP assumes that neither rewards nor state transition probabilities are known. 
Here, we explore an intermediate information case albeit restricted to STG where a partial structure of rewards is made known.  In particular, we assume that the reward depends on the difference $\Delta = (|x^1_1 - x^2_1|, |x^1_2 - x^2_2|) \in \S$ in positions $x_1 = (x^1_1, x^1_2), x_2 = (x^2_1, x^2_2) \in \S$ of the two agents. We approximate the value function $\v$ and strategy $\pi$ as follows:  $v^i(x) \approx \hat v^i(\Delta)$ and $\pi^i(x) \approx \hat\pi^i(\Delta), \forall x \in \S$. Thus, the algorithms need to compute $\hat\v$ and $\hat\pi$ on a low-dimensional subspace $W$ of $\S$. 
Fig. \ref{fig:sub-stg-on-m-30} presents results of ON-SGSP in this setting for $M = 30$, which corresponds to $|W| = 900$. 
The solution is seen to have converged by $200,000$ iterations ($\approx 5$ seconds runtime)  which suggests that it took on an average $\frac{200,000}{|W|} \approx 22$ iterations per $\Delta \in W$ to converge.


\section{Conclusions}
\label{sect:conclusions}
In this paper,
we derived necessary and sufficient SG-SP conditions to solve a generalized optimization problem and established their equivalence with Nash strategies. We derived a descent (not necessarily steepest) direction that avoids local minima. 
Incorporating this direction, we proposed two algorithms - offline, model-based algorithm OFF-SGSP and online, model-free algorithm ON-SGSP.  Both algorithms were shown to converge, in self-play, to the equilibria of a certain ordinary differential equation (ODE), whose stable limit points coincide with stationary Nash equilibria  of the underlying general-sum stochastic game. Synthetic experiments on two general-sum game setups show that ON-SGSP outperforms two well-known multi-agent RL algorithms. The experimental evaluation also suggests that convergence is relatively quick.

There are several future directions to be explored and we outline a few of them below:
\begin{enumerate}
 \item  In simulations, we observed that ON-SGSP can successfully run for large state spaces ($|\S| \approx 8,00,000$). However, in many cases, the state spaces can be huge and it would be necessary to look for function approximation techniques for both the value function $\v$ as well as strategy-tuple $\pi$. Function approximation techniques are popular in reinforcement learning approaches for high-dimensional MDPs and they bring in the following advantages: \begin{inparaenum} \item They can cater to huge state and action spaces; and \item They also aid a designer to bring in his understanding about the underlying system in terms of features used for function approximations. \end{inparaenum} 
 
 \item  Extensions to the case of constrained games: By constrained stochastic games, we mean those stochastic games that have additional constraints on value functions or strategy-tuple which might arise from the modelling of a practical scenario. There are some results and applications in this by \cite{altman2005zero} and \cite{altaian2007constrained} which provide the necessary motivation for extending our results to general-sum constrained stochastic games.
 
 \item Detailed experimental evaluation on a sophisticated benchmark for $N$-player general-sum stochastic games.
\end{enumerate}

\bibliography{sg}
\bibliographystyle{plainnat}

\end{document}